\documentclass[11pt,letterpaper]{article}
\usepackage[margin=1in]{geometry}
\usepackage{booktabs} 
\usepackage[ruled]{algorithm2e} 

\SetAlFnt{\small}
\SetAlCapFnt{\small}
\SetAlCapNameFnt{\small}
\SetAlCapHSkip{0pt}
\IncMargin{-\parindent}

\usepackage[numbers]{natbib}
\usepackage{array,multirow}
\usepackage[english]{babel}
\usepackage{stmaryrd}
\usepackage{algorithmic}
\usepackage{amssymb}
\usepackage{marginnote}
\usepackage{booktabs,tabularx,threeparttable}
\usepackage[normalem]{ulem}
\usepackage{hyperref}

\usepackage{tikz}
\usetikzlibrary{fit,backgrounds,positioning}
\usepackage{mathrsfs}
\usepackage{amsmath}
\usepackage{bbm}
\usepackage{graphicx}
\usepackage{stackengine}
\usepackage{amsthm}
\usepackage{subcaption}
\usepackage{mathtools}
\usepackage{thmtools}
\usepackage{thm-restate}
\usepackage[colorinlistoftodos]{todonotes}
\usepackage{amsfonts}
\usepackage[dvipsnames]{xcolor}
\usepackage{enumitem}
\usepackage{authblk}

\theoremstyle{acmplain}
\newtheorem{thm}{Theorem}[section]

\newtheorem{lm}[thm]{Lemma}
\newtheorem{obs}[thm]{Observation}
\newtheorem{cor}[thm]{Corollary}
\newtheorem{prop}[thm]{Proposition}
\newtheorem{cl}[thm]{Claim}
\theoremstyle{acmdefinition}

\newcommand{\PR}{\mathbb{P}}
\newcommand{\set}[1]{\left\{#1\right\}}

\newcommand{\floor}[1]{\left\lfloor#1\right\rfloor}

\newcommand{\scell}[2][c]{%
  \begin{tabular}[#1]{@{}c@{}}#2\end{tabular}}

\newcommand{\bb}[0]{\mathbb}

\usetikzlibrary{shapes.geometric}

\newcommand{\er}{er}

\newcommand{\blfootnote}[1]{%
  \begingroup
  \renewcommand{\thefootnote}{\fnsymbol{footnote}}
  \footnotetext[0]{\hspace*{-2.24em} #1}
  \endgroup
}

\title{Random Serial Dictatorship is $\sqrt{2}$-Envy-Free}

\author[1]{Frank~Connor}
\author[1]{Max~Dupré~la~Tour}
\author[1]{Louis-Roy~Langevin}
\author[2]{Vishnu~V.~Narayan}
\author[1]{Ndiamé~Ndiaye}
\author[1]{Neil~Rahman}
\author[1]{Adrian~Vetta}
\affil[1]{McGill~University}
\affil[2]{Indian~Institute~of~Technology~Bombay}

\date{July 3, 2026}

\begin{document}

\maketitle

\blfootnote{We are especially grateful to Simon Mauras, without whom this work would not have been possible. We thank Michal Feldman and Tomasz Ponitka for helpful discussions. This project was supported by NSERC Discovery Grant 2022-04191.

\noindent\texttt{\{frank.connor,louis-roy.langevin,vishnu.narayan,ndiame.ndiaye,neil.rahman\}@mail.mcgill.ca, maxduprelatour@gmail.com, adrian.vetta@mcgill.ca}.}
\vspace*{-1.5em}
\begin{abstract}
We analyze the house allocation problem, in which a set of agents must be matched to a set of objects for which the agents have cardinal utilities. A central mechanism for this problem is \emph{random serial dictatorship (RSD)}, which has long served as a canonical subject of study due to its simplicity and the existence of exact characterizations by its properties. Despite this extensive understanding, a basic quantitative question about the fairness of this mechanism remains unresolved. Although RSD is often viewed as fair ex ante, surprisingly, it is not envy-free in expectation. We quantify its deviation from envy-freeness via the \emph{envy-ratio}, which is defined as the maximum, over all instances and pairs of agents, of the ratio between an agent’s expected utility for another agent’s random object and for its own random object. Prior work shows a factor-$\sqrt{2}\approx{1.414}$ lower bound on the envy-ratio of the RSD mechanism. Our headline result is an upper bound that matches this constant factor, proving that RSD is $\sqrt{2}$-envy-free in the house allocation problem.

We further analyze the two natural extensions of RSD (the so-called \emph{randomized round-robin} mechanism and the \emph{iterated-RSD} mechanism) to settings with unequal numbers of agents and objects and with more general valuation classes. For additive valuations, we establish constant-factor bounds on the envy-ratio of both mechanisms, showing that this ratio increases to at least $1.5$ and at most $1+(1/\sqrt{2})\approx 1.707$ for randomized round-robin, but remains exactly $\sqrt{2}$ for iterated-RSD. For submodular valuations, we prove constant-factor upper and lower bounds for both mechanisms, leaving only a small constant gap in both cases. For the more general classes of XOS and subadditive valuations, we present a tight analysis of the envy-ratio of both mechanisms, showing that this ratio is unbounded in the number of agents. These results provide the first tight or nearly tight quantitative guarantees on the extent to which random serial dictatorship and its natural generalizations approximate envy-freeness.
\end{abstract}

\section{Introduction}

How should a collection of 
objects be fairly matched to a set of 
agents, who have varying preferences over the objects, so that each agent is assigned exactly one object? This 
matching problem, typically called the \textit{house allocation problem}, arises in a multitude of situations such as the assignment of dorms to students, of hotel rooms to guests, and of parking spaces to employees.

Perhaps the simplest solution to the house allocation problem is the following: the agents line up in some order, and each agent in turn selects its most valuable object among the remaining objects until every agent is allocated an object. This straightforward mechanism, called a \textit{serial dictatorship}, has been extensively studied over the past few decades. Despite its simplicity, this mechanism has several desirable properties, such as \textit{strategyproofness} (no agent can benefit by misreporting its preferences), \textit{non-bossiness} (no agent can change another agent's object without changing its own object), and \textit{neutrality} (when the objects are permuted, the assignment is permuted accordingly). In fact, serial dictatorships are the only deterministic mechanisms that simultaneously satisfy all of these properties (\citet*{svensson1999strategy}). Similar axiomatic characterizations of serial dictatorships by other desirable properties have been obtained in the economics literature~\cite{papai2001strategyproof,ehlers2003coalitional}.

Conspicuously absent from these characterizations are mentions of fairness, since serial dictatorships are patently unfair. \textit{Envy-freeness}, the most prominent fairness property, requires that no agent has a strictly larger value for the object assigned to another agent than for its own object. It is easy to see that serial dictatorships are not envy-free; for example, in instances with one universally desirable object, the agent that chooses first is envied by every other agent.

What, then, can we do to add fairness to this mechanism? The most natural approach to circumvent the inherent unfairness is to uniformly randomize the order of the agents before they select from the objects. The resulting mechanism, dubbed \textit{random serial dictatorship} (RSD) or \textit{random priority}, has also been extensively studied, resulting in a selection of highly influential works in theoretical economics such as~\citet*{abdulkadirouglu1998random} and~\citet*{bogomolnaia2001new}. Additionally, RSD and its variations have seen widespread adoption in the real world. For example, they are among the most commonly used mechanisms for the assignment of dormitory rooms to students~\cite{abdulkadirouglu2003school,sonmez2011matching}, have been routinely employed for allocating overdemanded tickets for shows and games~\cite{moulin2019fair}, and proposed as a possible solution for kidney exchanges, for which they achieve strong fairness and efficiency outcomes on real-world data~\cite{fang2015randomized}. Owing to this central role, obtaining a precise characterization of the properties of the RSD mechanism is of fundamental importance.

Many desirable properties of serial dictatorships continue to hold for their randomized extension. In particular, RSD is strategyproof~\cite{bogomolnaia2001new}, which implies that an agent can never benefit in any outcome, and therefore benefit in expectation, by reporting its preferences untruthfully. Moreover, RSD is clearly ex-ante ``fair'', since every agent has the same probability of appearing in every given position, and every agent selects its favorite available object in every possible permutation. Indeed, it is natural to conclude that no agent envies any other agent in expectation, since an agent can never improve its outcome by reporting another agent's preferences.

But is RSD actually envy-free in expectation? The strong intuitive appeal of the above argument makes this claim surprisingly ubiquitous: this conclusion was not only reached quickly in our informal conversations with other researchers, and not just reproduced by large-scale language models trained on the literature,\footnote{\emph{``Yes, random serial dictatorship is envy-free in expectation''}, ChatGPT, early 2025.} but until recently even reiterated by the Wikipedia article about this mechanism.\footnote{\href{https://en.wikipedia.org/w/index.php?title=Random_priority_item_allocation&oldid=1136792708}{``Random priority item allocation'', Wikipedia, February 1, 2023}.} Bizarrely, however, it is actually possible for an agent to prefer, in expectation, the random object of another agent to its own random object! This unusual property was first observed by~\citet*{bogomolnaia2001new}, who presented an instance (see Section~\ref{sec:prelims}) with three agents and three objects that exhibits this behavior.\footnote{While \cite{bogomolnaia2001new} and other similar works on the house allocation problem often consider ordinal preferences, we are interested in the cardinal setting.}

Despite its canonical position as the central mechanism for the house allocation problem, a basic question 
that follows from the above observation surprisingly remained unanswered:
\begin{center}
    \em In expectation, precisely what approximation of envy-freeness does RSD guarantee?
\end{center}
In other words, how large can the above envy become? More formally, we want to determine the \textit{envy-ratio} of the RSD mechanism, defined as the largest ratio between an agent’s expected value for another agent’s random object and its expected value for its own random object (see Section~\ref{sec:prelims} for a formal definition). Prior works on this topic present lower bounds on the envy-ratio of RSD. An example in \citet*{freeman2020best} establishes a $\frac{20}{19}$-factor lower bound on this ratio. \citet*{feldman2024breaking} remark that a modification of the example of \citet*{bogomolnaia2001new} gives a lower bound of~$\frac{5}{4}$, and improve this bound with a sequence of examples showing the envy-ratio is at least~$\sqrt{2}$.



In this work, our headline result is a matching upper bound, uncovering that the envy-ratio of the RSD mechanism is exactly $\sqrt{2}$, and implying that RSD does, in fact, guarantee a factor-$\sqrt{2}$ approximation to envy-freeness in the house allocation problem.

Beyond this classical unit-demand setting, in many applications, such as the assignment of players to teams in sports drafts, the number of objects exceeds the number of agents and the agents themselves have more general valuation functions. Consequently, we also extend our analysis of RSD to these more general settings. 
To do so, we remark that there are two natural extensions of RSD where the agents select the objects in rounds until no objects remain.  
In the first method, called {\em randomized round-robin}~\cite{freeman2020best}, 
the agents select objects via the same random order in each round. In the second method, called {\em iterated-RSD},
the agent order is rerandomized in each round.
For additive valuation functions, we 
show that the envy-ratio increases 
for randomized round-robin but remains exactly $\sqrt{2}$ for iterated-RSD.

We then study non-additive valuation functions.
For submodular valuations, we present constant-factor upper and lower bounds on the envy-ratio of these mechanisms.
For XOS and subadditive valuations, we present a tight analysis showing that the envy-ratio is unbounded. Our results are formally presented in the following subsection.

\subsection{Our Results}
In Section~\ref{sec:prelims} we present the model and motivating examples. In Section~\ref{sec:rsd} we prove our headline result that that RSD is exactly $\sqrt{2}$-envy-free for unit-demand agents.
\begin{restatable}{thm}{thmmain}\label{thm:main}
    The envy-ratio of random serial dictatorship is $\sqrt{2}$ in the house allocation problem.
\end{restatable}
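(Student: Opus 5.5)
The lower bound is already known from the sequence of instances of \citet*{feldman2024breaking}, so I only need the upper bound: for every instance with $n$ unit-demand agents and $n$ objects, and every pair of agents $a,b$, we want $\mathbb{E}[v_a(X_b)] \le \sqrt{2}\,\mathbb{E}[v_a(X_a)]$, where $X_i$ is agent $i$'s random object under RSD.

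\textbf{Coupling.} I will couple each permutation $\pi$ with the permutation $\pi'$ obtained by swapping the positions of $a$ and $b$; this map is a bijection on orderings. In $\pi'$, agent $a$ stands where $b$ stood in $\pi$. By the strategyproofness intuition, if $a$ could report $b$'s preferences it would receive exactly $X_b(\pi)$. The only reason envy exists is that the agents between $a$ and $b$ respond to what $a$ actually takes.

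\textbf{Structure of the comparison.} I then use a layer-cake decomposition. Write $v_a$ as a nonnegative combination of threshold indicators: an object is ``good'' if it is among $a$'s top $k$ objects. It would suffice to prove the bound for each $0/1$ valuation $v_a = \mathbf{1}[\text{top-}k]$ separately. However, the ratio is not preserved under such sums unless the bound holds uniformly, and it does hold uniformly, since a ratio bound on each threshold valuation sums to the same ratio.

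So the task reduces to comparing two probabilities for a set $G$ of $k$ objects that $a$ considers good, with $a$ grabbing any good object whenever one is available:
\[
\Pr[X_b \in G] \le \sqrt{2}\,\Pr[X_a\in G].
\]

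\textbf{Obtaining the bound.} Condition on the set $S$ of agents preceding $\min(\mathrm{pos}(a),\mathrm{pos}(b))$ and their picks. Let $g$ be the number of good objects left at that point. There are two cases.
\begin{itemize}
\item If $a$ comes first among $\{a,b\}$, then $a$ gets a good object exactly when $g \ge 1$.
\item If $b$ comes first, then $b$ takes a good object with some probability $p$, which depends on $b$'s preferences. The random agents between $b$ and $a$ then deplete the remaining good objects.
\end{itemize}
I would parametrize the process by the position in which $G$ runs out. Given $g$, and averaging over the random order of the remaining agents, the probability that $a$ still finds a good object after $b$ and the intermediate agents follows a hypergeometric-type formula. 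The ratio then becomes an optimization over the depletion profile: the probability $q_j$ that the $j$-th next agent takes a good object. The worst case should be the Feldman et al.\ family, where roughly half the competitors consume $G$.

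\textbf{Main obstacle.} The hardest step is the final inequality. I must show that $\Pr[X_b\in G]/\Pr[X_a\in G]\le\sqrt2$ over all depletion profiles, and that these profiles can be made adversarial because other agents' preferences are arbitrary. I would first reduce to a continuum limit, with $n\to\infty$ and positions uniform on $[0,1]$. In that limit the good supply is consumed at some rate function $r(t)$, and the ratio becomes a ratio of integrals over $r$. Using $b$'s pick only at $b$'s own time, I would then optimize by a variational argument, expecting an extremal profile that gives $\sqrt{2}$ via a Cauchy--Schwarz or AM--GM step of the form $x+1/(2x)\ge\sqrt2$. If the continuum reduction fails, I would instead try an inductive argument on $n$ that removes the first agent and verifies the inequality $\Pr[X_b\in G]\le\sqrt2\Pr[X_a\in G]$ directly as an invariant.
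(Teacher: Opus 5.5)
Your layer-cake reduction is correct, and it reaches the $0/1$ case more cleanly than the paper does. The RSD allocation depends only on the ordinal rankings, so once $a$'s strict order (tie-breaking included) is fixed, both $\mathbb{E}[v_a(X_b)]$ and $\mathbb{E}[v_a(X_a)]$ are linear in $v_a$. It therefore suffices to prove $\Pr[X_b\in G]\le\sqrt2\,\Pr[X_a\in G]$ for each top-$k$ set $G$ of that order. The paper reaches this structure only after relaxing to $\mathcal U$ and using the convexity argument of Lemma~\ref{lm:max_on_step_rays}.

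The gap is that this inequality is the entire content of the theorem, and you leave it as the ``main obstacle'' with a speculative plan. The plan also rests on an object that does not exist. There is no single depletion profile $q_j$, because the agents between $a$ and $b$ pick differently in the $a$-first and $b$-first runs. They react to what the first mover took, and that reaction is exactly the source of envy. Conditioning only on the prefix before $\min(\mathrm{pos}(a),\mathrm{pos}(b))$ and averaging hypergeometrically over the rest gives you no way to relate the two runs. (After that conditioning, $b$'s pick is deterministic, not taken ``with probability $p$''.) A continuum limit does not by itself certify a bound at finite $n$. Your strategyproofness intuition holds only for swapped pairs in which $b$ precedes $a$, and on its own it recovers just the factor $2$ of Proposition~\ref{prop:simpleupper}. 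Also, in the extremal family about a $(\sqrt2-1)$-fraction of the competitors consume $G$, not half.

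Three structural facts are missing.

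(i) Fix the whole relative order $\sigma'=(C_1,\dots,C_n)$ of the other agents. By Observation~\ref{obs:fraction}, the ratio is at most the worst conditional ratio (Lemma~\ref{lm:order}). Everything is then deterministic given the indices $i\le j$, the numbers of other agents preceding the first and second of $\{a,b\}$. Each $(i,j)$ gives two permutations, one with $a$ first and one with $b$ first.

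(ii) Let $z_\ell$ be $C_\ell$'s pick when the others run alone, and let $g_\ell$ be the number of good items in $M\setminus\{z_1,\dots,z_\ell\}$. The first mover faces exactly $M\setminus\{z_1,\dots,z_i\}$. By the insertion lemma (Lemma~\ref{lm:airplane}), when the second mover picks, the removed items are $z_1,\dots,z_j$ plus exactly one more. Hence $a$ moving second gets a good item whenever $g_j\ge2$, and $b$ moving second can get one only if $g_j\ge1$.

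(iii) Suppose $b$, moving first, takes $a$'s favourite remaining item $x_i$. Then the two permutations with indices $(i,j)$ remove the same items up to the second pick, so they contribute no envy. Otherwise, their combined value of $b$'s items exceeds their combined value of $a$'s items by at most $\mathbf 1[g_i\ge2,\ g_j=1]$.

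Write $k_x=|\{\ell:g_\ell\ge1\}|$ and $k_y=|\{\ell:g_\ell\ge2\}|$. At most $k_y(k_x-k_y)$ index pairs carry excess, so
\[
\frac{\Pr[X_b\in G\mid\sigma']}{\Pr[X_a\in G\mid\sigma']}\le 1+\frac{2k_y(k_x-k_y)}{k_x(2n+3-k_x)+k_y(k_y+1)}\le 1+\frac{2p(1-p)}{1+p^2}\le\sqrt2,\qquad p=\frac{k_y}{k_x}.
\]
This uses $n\ge k_x-1$ and the identity $(\sqrt2-1)(1+p^2)-2p(1-p)=(\sqrt2+1)(p-\sqrt2+1)^2\ge0$. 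Combined with your layer-cake step, (i)--(iii) would give a complete proof that bypasses the paper's cone argument. Without them, the proposal does not establish the upper bound.
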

While the number $|M|$ of objects equals the number $|N|$ of agents in the house allocation problem, this guarantee holds even when $|M|$ differs from $|N|$ as long as the agents are unit-demand.\footnote{To see this, when $|M|<|N|$, it suffices to add $|N|-|M|$ dummy objects, each of which has a value of zero for every agent, reducing to the $|M|=|N|$ case. When $|M|>|N|$, our analysis in the proof of Theorem~\ref{thm:main} continues to apply when the agents are unit-demand.} We then study the more general fair division problem where $|M|$ is greater than $|N|$, and where the agents have more general valuation functions. The necessitates the examination of two multiple-round extensions of RSD. In the {\em randomized round-robin} mechanism~\cite{freeman2020best}, a single uniformly random order is fixed in advance for all rounds. In the {\em iterated-RSD} mechanism  
the agent order is independently rerandomized at the start of each round. 
In Section~\ref{sec:additive}, we study these extensions in the case of additive valuation functions.
For the iterated-RSD mechanism the envy-ratio of the resulting mechanism remains equal to $\sqrt{2}$.
\begin{restatable}{thm}{thmirsdadd}\label{thm:irsdadd}
For additive valuations, the envy-ratio of iterated-RSD is $\sqrt{2}\approx1.414$.
\end{restatable}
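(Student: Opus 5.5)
Both directions are needed. The lower bound is immediate: when $|M|=|N|$, iterated-RSD runs a single round and coincides with RSD. Unit-demand valuations over a single round are additive on singletons, so the $\sqrt{2}$ lower-bound instances for RSD (the sequence of \citet*{feldman2024breaking} used in Theorem~\ref{thm:main}) are also instances for iterated-RSD with additive valuations. Hence the envy-ratio is at least $\sqrt{2}$.

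For the upper bound I will decompose round by round and use linearity of expectation together with additivity. Fix agents $i$ and $j$, and let $A_i^r, A_j^r$ be the objects they receive in round $r$. Then
\begin{align*}
\mathbb{E}[v_i(A_j)]=\sum_r \mathbb{E}[v_i(A_j^r)], \qquad \mathbb{E}[v_i(A_i)]=\sum_r \mathbb{E}[v_i(A_i^r)].
\end{align*}
So it suffices to show $\mathbb{E}[v_i(A_j^r)] \le \sqrt{2}\,\mathbb{E}[v_i(A_i^r)]$ for every round $r$. I will prove this conditionally on the history $H$ before round $r$, meaning the set $S$ of remaining objects and everyone's current bundles.

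The key point is that in iterated-RSD the round-$r$ order is a fresh uniform permutation independent of $H$. Within a round, each agent with additive valuations picks the remaining object of largest marginal value, which for additive valuations is simply its largest single-object value, independent of its current bundle. Therefore, conditionally on $H$, round $r$ is exactly an RSD execution on the object set $S$ with unit-demand valuations $v_k|_S$. In the last round $|S|$ may be smaller than $|N|$; this case is handled by dummy zero-valued objects, as in the footnote after Theorem~\ref{thm:main}. Applying Theorem~\ref{thm:main} to this RSD instance gives
\begin{align*}
\mathbb{E}[v_i(A_j^r)\mid H]\le \sqrt{2}\,\mathbb{E}[v_i(A_i^r)\mid H].
\end{align*}
Taking expectations over $H$ and summing over $r$ yields the claim. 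If $\mathbb{E}[v_i(A_i)]=0$, then each round term is $0$, so $\mathbb{E}[v_i(A_j)]=0$ as well and the ratio is handled by convention.

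The main obstacle is justifying the reduction within a round. Ties in marginal value need care: the tie-breaking rule must match the one assumed in Theorem~\ref{thm:main}, or else I should argue that the envy-ratio bound holds for any fixed tie-breaking. The reduction also depends on the fact that the per-round process seen by agent $i$ is really RSD even though the pool $S$ is random and correlated with past choices. Conditioning on $H$ handles this, because the bound of Theorem~\ref{thm:main} holds for every instance uniformly. The remaining check is that Theorem~\ref{thm:main} covers $|S|>|N|$ in the unit-demand sense, which the paper's footnote asserts.
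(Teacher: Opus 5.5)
Your proposal is correct and follows the paper's proof: the lower bound treats the one-round RSD instances of Theorem~\ref{thm:lower} as additive iterated-RSD instances, and the upper bound applies Theorem~\ref{thm:main} to each round (the paper's Claim~\ref{cl:iRSD}) and then sums over rounds using additivity and linearity of expectation. Conditioning on the history before each round, and noting that every non-final round has more remaining objects than agents, make explicit what the paper's one-line claim leaves implicit.
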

Interestingly, in contrast the envy-ratio deteriorates
for the randomized round-robin mechanism.
Specifically, the envy-ratio increases to at least $\frac32$, but is still bounded above by a constant factor of $1+(1/\sqrt{2})\approx 1.707$.
\begin{restatable}{thm}{thmrrradd}\label{thm:rrradd}
For additive valuations, the envy-ratio of randomized round-robin is at least $3/2 =1.5$ and at most $1+(1/\sqrt{2})\approx 1.707$.
\end{restatable}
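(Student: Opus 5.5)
Both bounds concern randomized round-robin with additive valuations: a single uniformly random order $\sigma$ of the $n$ agents is drawn, and then in every round the agents pick their favourite remaining object in the order $\sigma$. Fix two agents $i$ and $j$ and write $v=v_i$. We want to compare $\mathbb{E}[v(A_j)]$ with $\mathbb{E}[v(A_i)]$, where $A_i$ and $A_j$ are the (random) bundles the two agents end up with.

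\emph{Lower bound of $3/2$.} I would look for a small explicit instance, built in the spirit of the $\sqrt{2}$ examples for a single round. The point of the construction is the feature that separates round-robin from iterated-RSD: an agent placed early keeps its advantage in every round. Concretely, I would take two agents $i,j$ plus some auxiliary agents, and enough objects for two or more rounds. The valuations are chosen so that:
\begin{itemize}
\item when $j$ precedes $i$, $j$ takes an object $i$ values highly in every round;
\item when $i$ precedes $j$, the auxiliary agents' picks leave $i$ only low-value objects, while $j$'s later picks are still valuable to $i$.
\end{itemize}
I would then optimise the parameters, possibly letting the number of auxiliary agents or objects grow, and check that the limiting ratio is $3/2$.

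\emph{Upper bound of $1+1/\sqrt{2}$.} I would split $\mathbb{E}[v(A_j)]$ into two parts: $j$'s first-round pick, and all of $j$'s picks from round $2$ onward.
\begin{itemize}
\item \emph{Later rounds.} Agent $i$ picks once in each round, between any two consecutive picks of $j$. Order $i$'s picks and $j$'s picks chronologically and pair $j$'s $t$-th pick with a pick of $i$ made earlier. If $i$ precedes $j$ in $\sigma$, pair it with $i$'s $t$-th pick; otherwise pair it with $i$'s $(t-1)$-th pick, which is still available to $i$ when $j$'s $t$-th pick is made. Because $i$ always takes its favourite remaining object, each of $j$'s picks from round $2$ onward is worth at most the paired pick of $i$. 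This gives, pointwise over every order,
\[
v(A_j \setminus \{\text{$j$'s first pick}\}) \le v(A_i).
\]
\item \emph{First round.} Truncate both agents to their round-$1$ behaviour. Agent $i$'s first pick is exactly what it would receive in single-round RSD with unit-demand valuations given by $v$ (ignoring the extra objects, which only helps $i$). So I would invoke Theorem~\ref{thm:main}, the $|M|>|N|$ unit-demand version, to get
\[
\mathbb{E}[v(\text{$j$'s first pick})] \le \sqrt{2}\,\mathbb{E}[v(\text{$i$'s first pick})] \le \sqrt{2}\,\mathbb{E}[v(A_i)].
\]
\end{itemize}

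Combining the two bounds gives a factor of $1+\sqrt{2}$, which is too weak, so the main obstacle is tightening this to $1+1/\sqrt{2}$. Both of the following ideas use the later-round part to pay for half the first-round envy. First, exploit the symmetry of $\sigma$: with probability $1/2$, $i$ precedes $j$, and in that case $i$'s first pick beats $j$'s first pick outright. Second, the pairing above wastes $i$'s first pick only in the event that $j$ precedes $i$. With these, I would aim for
\[
\mathbb{E}[v(A_j)] \le \mathbb{E}[v(A_i)] + \tfrac{1}{\sqrt{2}}\,\mathbb{E}[v(\text{$i$'s first pick})].
\]
To prove this I would redo the charging argument inside the $\sqrt{2}$ proof of Theorem~\ref{thm:main}, but count only the excess envy coming from the event $\{j \text{ before } i\}$. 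The expectation is that this excess is at most $(\sqrt{2}-1)$ times the first-pick value, or a similar quantity, and that it combines with $i$'s unused first pick to give the $1/\sqrt{2}$ factor. I expect this refinement of the single-round analysis to be the hardest step.
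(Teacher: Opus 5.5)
Your later-round pairing is correct (it is the paper's Observation~\ref{obs:RR-late}). The inequality you aim for, $\mathbb{E}[v(A_j)]\le\mathbb{E}[v(A_i)]+\frac{1}{\sqrt2}\,\mathbb{E}[v(a(1))]$, where $a(1),b(1)$ denote the first picks of $i,j$, is exactly what the paper proves. However, you leave the step that produces the constant $1/\sqrt2$ unproved, and for the lower bound you give no instance at all.

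\textbf{Upper bound.} Let $\alpha=\mathbb{E}[v(a(1))\mid i\text{ before }j]$ and $\beta=\mathbb{E}[v(a(1))\mid j\text{ before }i]$. Your pairing yields $\mathbb{E}[v(A_j)]\le\mathbb{E}[v(A_i)]+\mathbb{E}[v(b(1))]-\frac12\alpha$. So what remains is $\mathbb{E}[v(b(1))]-\frac12\alpha\le\frac{1}{2\sqrt2}(\alpha+\beta)$. Theorem~\ref{thm:main}, used as a black box, gives this only when $\beta\le(\sqrt2-1)\alpha$.

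The missing ingredient is the swap argument of Proposition~\ref{prop:simpleupper}: exchanging the positions of $i$ and $j$ in $\sigma$ shows $\mathbb{E}[v(b(1))\mid j\text{ before }i]\le\alpha$. Add the trivial bound $\mathbb{E}[v(b(1))\mid i\text{ before }j]\le\alpha$. That trivial bound is the only consequence of symmetry you actually state, and it is not enough by itself. Together they give $\mathbb{E}[v(b(1))]\le\alpha$, which settles the complementary case $\beta\ge(\sqrt2-1)\alpha$. Taking the better of these two bounds is, in substance, the paper's Lemma~\ref{lm:RR-early}. Nothing inside the proof of the $\sqrt2$ bound needs to be reopened. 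Your conjecture that some ``excess envy'' is at most $(\sqrt2-1)$ times the first-pick value is neither precisely formulated nor justified.

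\textbf{Lower bound.} A list of desired features is not a construction, and your second feature cannot occur. If $i$ precedes $j$ in $\sigma$, then $i$ picks before $j$ in every round, so each object $j$ takes was available to $i$, and $v(b(k))\le v(a(k))$ for all $k$. The envy has to come from the event that $j$ precedes $i$, with the other event contributing a ratio close to $1$.

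The paper's two-round instance works as follows. Agent $i$ values objects $o_1,o_2,o_3$ at $1,1-\varepsilon,1-2\varepsilon$ and everything else at $0$. Agent $j$ ranks $o_1$ first, then $o_3$, then $o_2$, above all other objects. There are $n$ identical auxiliary agents who never want $o_1$ or $o_3$; each takes one of $n$ private objects in round~1, and afterwards they prefer $o_2$ to every other remaining object. If $j$ precedes $i$, then $j$ gets $o_1$ and $o_3$ while $i$ gets only $o_2$. If $i$ precedes $j$, then $i$ gets $o_1$ and $j$ gets $o_3$; with probability tending to $1$ an auxiliary agent is first in $\sigma$ and takes $o_2$ in round~2 before $i$. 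As $n\to\infty$ and $\varepsilon\to0$, the ratio tends to $\frac{(2+1)/2}{(1+1)/2}=\frac32$.
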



In Section~\ref{sec:nonadditive}, we extend our analysis beyond additive valuation functions. 
To do this, for non-additive valuations, we make the assumption that each time an agent selects an object it
chooses an available object with the highest marginal value.

For the case of submodular valuation functions we still obtain strong guarantees: the envy-ratio of both mechanisms is still a small constant.
\begin{restatable}{thm}{thmirsdsub}\label{thm:irsdsub}
For submodular valuations, the envy-ratio of iterated-RSD is at least $\frac47(1+2\sqrt{2})\approx2.188$ and at most $1+\sqrt{2}\approx 2.414$.
\end{restatable}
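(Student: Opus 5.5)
We prove Theorem~\ref{thm:irsdsub} in two parts: an upper bound of $1+\sqrt{2}$, obtained by reducing to the unit-demand analysis behind Theorem~\ref{thm:main}, and a lower bound of $\frac47(1+2\sqrt{2})$ from an explicit instance.

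\textbf{Upper bound.} Fix agents $i$ (the envier) and $j$, and let $A_i, A_j$ be their random bundles under iterated-RSD. The goal is $\mathbb{E}[v_i(A_j)] \le (1+\sqrt2)\,\mathbb{E}[v_i(A_i)]$.

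First, decompose round by round. Let $g_t$ be the object $j$ takes in round $t$, and $h_t$ the object $i$ takes. By submodularity,
\[
v_i(A_j)\le v_i(A_i\cup A_j)\le v_i(A_i)+\sum_t v_i(g_t\mid A_i).
\]
This accounts for the additive $1$ in $1+\sqrt2$. It remains to show
\[
\mathbb{E}\Big[\sum_t v_i(g_t\mid A_i)\Big]\le \sqrt2\,\mathbb{E}[v_i(A_i)].
\]

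Second, by submodularity, $v_i(g_t\mid A_i)\le v_i(g_t\mid H_{t})$, where $H_t$ is $i$'s bundle at the start of round $t$. Within round $t$, agent $i$ acts as a unit-demand agent with values $w_t(x)=v_i(x\mid H_t)$ over the remaining objects. The round order is a fresh uniform permutation, independent of the history, and $i$'s gain in the round is $w_t(h_t)=v_i(h_t\mid H_t)$.

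So conditioned on the state at the start of round $t$, the round is a single RSD run with unit-demand agent $i$. The other agents behave arbitrarily, but the RSD argument only needs that they pick deterministically given the order and that $i$ picks greedily. The inequality proved in Theorem~\ref{thm:main} then gives, in expectation over the round permutation,
\[
\mathbb{E}[w_t(g_t)]\le\sqrt2\,\mathbb{E}[w_t(h_t)].
\]
I must check that the proof of Theorem~\ref{thm:main} covers $|M|>|N|$ and arbitrary (possibly non-greedy w.r.t.\ $v_i$) behaviour of the others, as the footnote suggests.

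Third, sum over $t$ and take expectations. The marginal gains $v_i(h_t\mid H_t)$ telescope to $v_i(A_i)$, which yields the claim.

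The main obstacle is that $g_t$ may be picked after $i$ in round $t$, while $H_t$ excludes $h_t$. This is fine, since the bound uses $H_t$ and $v_i(g_t\mid A_i)\le v_i(g_t\mid H_t)$ holds regardless. Still, I need to confirm that the RSD lemma is genuinely an inequality for a single round in which agent $j$'s choice rule is arbitrary.

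\textbf{Lower bound.} We construct an instance in two stages.
\begin{itemize}
\item In the first round, plant the $\sqrt2$-type RSD gadget of \citet*{feldman2024breaking}, scaled so that $j$ tends to grab object(s) $i$ values.
\item In the second round, choose submodular complementarity-free but substitutable structure: objects $j$ takes in round one are \emph{not} substitutes for $i$'s remaining good options, whereas $i$'s own round-one picks make its round-two options redundant.
\end{itemize}
The target is to optimise parameters of a coverage-style valuation so the ratio equals $\frac47(1+2\sqrt2)$. The form $(1+2\sqrt2)/(7/4)$ suggests combining one unit of full envy with two $\sqrt2$ gadget contributions, normalised by $i$'s expected value of $7/4$. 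I would solve for the gadget limit, taking $n\to\infty$ in the gadget family, and verify submodularity of the coverage functions directly.
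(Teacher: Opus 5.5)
Your upper bound is correct and is essentially the paper's argument. It has three steps: the pointwise submodular bound $v_i(A_j)\le v_i(A_i)+\sum_t v_i(g_t\mid H_t)$; the single-round $\sqrt2$ bound applied to the marginal values $w_t=v_i(\cdot\mid H_t)$, which is exactly Claim~\ref{cl:iRSD-subm}; and telescoping. You take expectations at the right point, whereas the paper feeds the in-expectation Claim~\ref{cl:iRSD-subm} into Lemma~\ref{lm:subm}, whose hypothesis is pointwise.

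Both of your checks go through. The RSD proof works whenever $|M|\ge|N|$, since it only needs $y_j$ to exist. What it needs from the other agents is a bit more than determinism given the order: each must pick the top available item of a \emph{fixed} linear order, which is what Lemma~\ref{lm:airplane} and case~2 of Lemma~\ref{lm:z} use. That holds within an iterated-RSD round because every bundle is frozen during the round. If $w_t\equiv 0$ on the remaining items, the round inequality is trivially $0\le 0$.

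The lower bound, however, is not proved: you give no instance, only a plan. Your qualitative idea is right: $i$'s early picks should make its later options redundant, while $j$'s should not. But your reading of the constant as one unit of envy plus two $\sqrt2$ contributions over $7/4$ is reverse-engineered from the shape of the number. The gadget of Section~\ref{sec:lower-single} plays no role. In the paper the constant is $\max_{p\in(0,1)}\frac{3+2p-p^2}{2-p+p^2}$, attained at $p=4\sqrt2-5$, where $p$ is a single threshold parameter.

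The missing ingredient is a concrete coverage instance. $A$ and $B$ have the \emph{same} valuation, built from:
$x_1=[0,1]$ and $x_2=[1,2]$;
two copies $Y,Y'$ of a partition of $[0,\tfrac12]$ into $\ell$ pieces;
a partition $Z$ of $[\tfrac12,1]$ into $\ell$ pieces;
and zero-valued fillers $U$.
They differ only in tie-breaking: $A$ uses $x_1\succ x_2\succ Y\succ Y'\succ Z$ and $B$ uses $x_2\succ x_1\succ Z\succ Y'\succ Y$. Identical additive dummy agents rank $\lfloor pn\rfloor$ specific fillers first and $x_1$ next. So in round~1 the $(\lfloor pn\rfloor+1)$-st dummy to move takes $x_1$.

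The computation then has three cases. If $A$ moves before that dummy (probability $\approx p$), it takes $x_1$, which makes all of $Y,Y',Z$ worthless to it, while $B$ collects $x_2\cup Z\cup Y'$; the values are $1$ versus $2$. If $A$ is past the threshold and before $B$ (probability $\approx(1-p)^2/2$), $A$ ends with $x_2\cup Y$ and $B$ with $Z\cup Y'$, giving $\tfrac32$ versus $1$. In the remaining case $B$ takes $x_2$, then $Z$ and $Y'$, while $A$ gets essentially only $Y$, giving about $\tfrac12$ versus $2$. Without an explicit instance of this kind and its computation, your proposal establishes only the upper half of the theorem.
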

\begin{restatable}{thm}{thmrrrsub}\label{thm:rrrsub}
For submodular valuations, the envy-ratio of randomized round-robin is at least $9/4=2.25$ and at most $2+1/\sqrt{2}\approx 2.707$.
\end{restatable}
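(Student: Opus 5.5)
Theorem~\ref{thm:rrrsub} has two halves, and I will prove them separately.

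\textbf{Upper bound.} I will reduce to the additive analysis behind Theorem~\ref{thm:rrradd} using submodularity. Fix agents $i$ and $j$ and the random order $\sigma$. In round $r$, agent $i$ picks an item $g_r$ that maximizes its marginal value with respect to its current bundle $A_{r-1}$. Submodularity controls $v_i(B_j)$, where $B_j=\{h_1,h_2,\dots\}$ is $j$'s final bundle:
\[
v_i(B_j)\le v_i(A)+\sum_r v_i(h_r\mid A),
\]
where $A$ is $i$'s final bundle. Also $v_i(h_r\mid A)\le v_i(h_r\mid A_{r'})$ for any earlier prefix $A_{r'}$.

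The next step matches $j$'s picks to $i$'s picks in the round-robin order. Consider $j$'s pick $h_r$ made after $i$'s pick in round $r$ (that is, $j$ comes later in $\sigma$). Then $h_r$ was still available when $i$ picked $g_{r}$, so $v_i(h_r\mid A_{r-1})\le v_i(g_r\mid A_{r-1})$. Summing over $r$ gives $\sum_r v_i(h_r\mid A)\le v_i(A)$, which contributes $v_i(B_j)\le 2v_i(A)$ in that case. If $j$ precedes $i$, then $h_r$ was available at round $r-1$ for $i$, so it is bounded by $g_{r-1}$'s marginal value. The exception is $h_1$, which is the unmatched first pick.

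This is exactly where the additive $1+1/\sqrt2$ analysis enters. In the additive case the extra term is the first-round item, controlled through the $\sqrt2$ argument of Theorem~\ref{thm:main} applied to the first round, where round one is an RSD. The plan is to show
\[
\mathbb{E}[v_i(B_j)]\le \mathbb{E}\big[v_i(A)\big]+\mathbb{E}\Big[\textstyle\sum_{r\ge2} v_i(g_{r-1}\mid A_{r-2})\Big]+\mathbb{E}[v_i(h_1)]-\text{(correction)} .
\]
The bound $\mathbb{E}[v_i(h_1)]\le \sqrt2\,\mathbb{E}[v_i(g_1)]$ will come from the unit-demand result, since in round one $i$ effectively acts on singleton values. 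Combining these terms should give $(1+1)\cdot\mathbb{E}[v_i(A)]$ plus the first-round excess. The excess is bounded by $(1/\sqrt2)\,\mathbb{E}[v_i(g_1)]\le(1/\sqrt2)\,\mathbb{E}[v_i(A)]$, which totals $2+1/\sqrt2$.

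The main obstacle is carrying out this combination precisely. In the additive proof, the $1/\sqrt2$ excess presumably comes from averaging over whether $j$ precedes or follows $i$, with each case having probability $1/2$. I will need the symmetric coupling, which swaps $i$ and $j$ in $\sigma$, to interact with submodular marginals. My first attempt is to bound the per-order quantity $v_i(B_j)-v_i(A)$ by the additive-style telescoping sum over marginals. I will then apply the additive theorem's argument verbatim to the marginal values $v_i(\cdot\mid A_{r-1})$, which play the role of additive item values round by round.

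\textbf{Lower bound $9/4$.} I will construct an explicit instance. I plan to take the additive $3/2$ instance from Theorem~\ref{thm:rrradd} and add a submodular (coverage-type) structure. The structure should make $i$'s greedy later picks redundant with its first pick, while $j$'s bundle stays complementary-free, multiplying the ratio by a further $3/2$. I will then verify that the expected-value ratio equals $9/4$, possibly in a limit of parameters.
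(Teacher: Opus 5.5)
\textbf{Lower bound.} This half is not proved. You give no instance, and ``overlay a coverage structure on the additive $3/2$ instance to multiply the ratio by another $3/2$'' is not a mechanism: envy-ratios do not compose multiplicatively. Your sketch also never says which of $i$'s later picks become redundant, what the other agents do, or why the two relative orders of $i$ and $j$ produce $9/4$. The paper instead uses a dedicated five-round coverage instance. It has agents $A$ and $B$, $n$ copies each of two auxiliary types $C$ and $D$, ten item groups, and tuned tie-breaking. The design goal is that $A$'s own bundle is worth $2$ in either relative order of $A$ and $B$, while $A$ values $B$'s bundle at $4$ when $A$ precedes $B$ and at $5$ otherwise, giving $(4+5)/(2+2)=9/4$ as $n\to\infty$. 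If you build such an instance, trace \emph{every} auxiliary agent through every round, not just the last one of each type. In the paper's own instance, the $C_i$ other than $C^*$ still have marginal value $1$ for item $1'$ in round~2 (when $A$ precedes $B$, $4'$ is already gone) and for item $3$ once the seventh group is exhausted. With high probability some of them precede $B$. They therefore take items that the paper's trace, which follows only $C^*$ and $D^*$, assigns to $B$.

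\textbf{Upper bound.} Your skeleton matches the paper's, which packages it as Lemma~\ref{lm:subm} with base set $\{a(1)\}$. Each $h_r$ is still available at $i$'s round-$r$ pick when $i$ precedes $j$, and at $i$'s round-$(r-1)$ pick when $j$ precedes $i$; submodular telescoping follows. What is missing is the step that produces $1/\sqrt2$, which you leave as ``(correction)''. The inequalities you actually state cannot reach it. From $v_i(B_j)\le 2v_i(A)$ when $i$ is first and $v_i(B_j)\le 2v_i(A)+v_i(h_1)$ when $j$ is first, you only get $\mathbb{E}[v_i(B_j)]\le 2\mathbb{E}[v_i(A)]+\frac12\mathbb{E}[v_i(h_1)\mid j\text{ first}]$. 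The last term can only be bounded by $\mathbb{E}[v_i(g_1)]$, which gives a factor $3$.

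Theorem~\ref{thm:main} controls $\mathbb{E}[v_i(h_1)]$ only when averaged over both relative orders. So you must keep $h_1$ unmatched in the $i$-first case too: there $v_i(B_j)\le v_i(A)+v_i(h_1)+\sum_{r\ge2}v_i(g_r\mid A_{r-1})=2v_i(A)+v_i(h_1)-v_i(g_1)$. Write $\alpha=\mathbb{E}[v_i(g_1)\mid i\text{ first}]$ and $\beta=\mathbb{E}[v_i(g_1)\mid j\text{ first}]$. Then $\mathbb{E}[v_i(B_j)]\le 2\mathbb{E}[v_i(A)]+\mathbb{E}[v_i(h_1)]-\alpha/2$, so your correction is $\alpha/2$. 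Now combine two bounds on $\mathbb{E}[v_i(h_1)]$:
\begin{itemize}
\item $\sqrt2\,(\alpha+\beta)/2$, from Theorem~\ref{thm:main} applied to round one;
\item $\alpha$, from the swap coupling (after swapping, $i$ faces exactly the set $j$ faced) together with $v_i(h_1)\le v_i(g_1)$ when $i$ is first.
\end{itemize}
The excess is then at most $\min\{\sqrt2(\alpha+\beta)/2-\alpha/2,\ \alpha/2\}\le\frac{1}{\sqrt2}\cdot\frac{\alpha+\beta}{2}=\frac{1}{\sqrt2}\mathbb{E}[v_i(g_1)]\le\frac{1}{\sqrt2}\mathbb{E}[v_i(A)]$. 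This is tight at $\alpha=(\alpha+\beta)/\sqrt2$, and neither bound alone suffices. It is exactly the content of the paper's Lemma~\ref{lm:RR-early}.

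Your alternative of re-running the $\sqrt2$ argument on the marginals $v_i(\cdot\mid A_{r-1})$ in later rounds is both unavailable and unnecessary. With a single fixed order, the later rounds are not fresh RSD instances. Plain availability already handles every round after the first.
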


Unfortunately, constant bounds on the envy-ratio are no longer achieved when we move beyond submodular functions to the broader class
of subadditive valuation functions. Indeed, for XOS and subadditive valuations, we prove a general theorem showing that the envy-ratio of each mechanism is unbounded in $|N|$, and in fact is instead exactly equal to the number of rounds $r$ in the execution of the mechanism.
\begin{restatable}{thm}{thmgeneral}\label{thm:general}
For XOS and subadditive valuations, the envy-ratio of both randomized round-robin and iterated-RSD is exactly the number of rounds $r$ in the execution of the mechanism.
\end{restatable}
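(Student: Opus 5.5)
The plan is to prove the upper bound by a coupling argument that uses only monotonicity and subadditivity (so it covers XOS as well), and to prove the lower bound with an explicit XOS instance (hence also subadditive) whose ratio approaches $r$. Throughout, valuations are monotone with $v(\emptyset)=0$, and agents pick by highest marginal value.

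\textbf{Upper bound.} Fix agents $i\neq j$. Let $a_1$ be $i$'s first-round pick and $B_j=\{b_1,\dots,b_r\}$ be $j$'s bundle, with $b_t$ picked in round $t$. Monotonicity gives $v_i(A_i)\ge v_i(\{a_1\})$. Subadditivity gives $v_i(B_j)\le\sum_{t=1}^r v_i(\{b_t\})$. So it suffices to show $\mathbb{E}[v_i(\{b_t\})]\le \mathbb{E}[v_i(\{a_1\})]$ for every $t$.

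For $t\ge 2$ this holds pointwise. Every round-$t$ pick happens after all first-round picks, so $b_t$ was still available when $i$ made its first pick. In round $1$ the marginal value of an item equals its singleton value, so $v_i(\{a_1\})\ge v_i(\{b_t\})$.

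For $t=1$ I use a swap coupling on the first-round order $\sigma$. Let $\sigma'$ be $\sigma$ with $i$ and $j$ exchanged. The map $\sigma\mapsto\sigma'$ preserves the uniform measure, and this holds for both mechanisms, since in both the first-round order is uniform. There are two cases.
\begin{itemize}
\item If $i$ precedes $j$ in $\sigma$, then $b_1(\sigma)$ was available to $i$, so $v_i(\{a_1(\sigma)\})\ge v_i(\{b_1(\sigma)\})$.
\item If $j$ precedes $i$ in $\sigma$, then in $\sigma'$ agent $i$ sits at $j$'s former position. The agents ahead of it are unchanged, so they make the same picks, and $b_1(\sigma)$ is available to $i$. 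Hence $v_i(\{a_1(\sigma')\})\ge v_i(\{b_1(\sigma)\})$.
\end{itemize}

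The second case needs more care. Because $\sigma\mapsto\sigma'$ is a bijection, define the map $\phi(\sigma)=\sigma$ if $i$ precedes $j$ in $\sigma$, and $\phi(\sigma)=\sigma'$ otherwise. This $\phi$ is also a bijection: it fixes the set where $i$ precedes $j$ and maps the complementary set onto that set. The two cases then give $v_i(\{b_1(\sigma)\})\le v_i(\{a_1(\phi(\sigma))\})$ for every $\sigma$. Taking expectations yields $\mathbb{E}[v_i(\{b_1\})]\le\mathbb{E}[v_i(\{a_1\})]$.

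Summing over $t$ gives $\mathbb{E}[v_i(B_j)]\le r\,\mathbb{E}[v_i(A_i)]$. The main obstacle is this $t=1$ step. One must check that the prefix of the order, and hence the set of available items, is unaffected by the swap. This holds because first-round picks depend only on the first-round order and earlier picks.

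\textbf{Lower bound.} Take $n\ge 2$ agents and $nr$ items: a set $D$ of $r$ decoys and a set $T$ of $(n-1)r$ target items. Fix a small $\varepsilon>0$.
\begin{itemize}
\item Every agent other than $i$ has an additive valuation with value $1$ on each item of $T$ and $0$ on $D$. These agents only ever take items of $T$.
\item Agent $i$ has the XOS valuation
\[
v_i(S)=\max\Bigl((1+\varepsilon)\,\mathbb{1}[S\cap D\neq\emptyset],\ |S\cap T|\Bigr),
\]
which is the maximum of the additive clauses ``$(1+\varepsilon)$ on a single $d\in D$'' and ``$1$ on each item of $T$''.
\end{itemize}

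Under either mechanism, $i$'s first pick is a decoy, since its marginal value $1+\varepsilon$ beats any target's value $1$. After that, every remaining item has marginal value $0$ for $i$: another decoy adds nothing, and a single target brings the clause value to only $1\le 1+\varepsilon$. Breaking these ties toward decoys, $i$ ends with $v_i(A_i)=1+\varepsilon$. Every other agent $j$ receives $r$ targets, so $v_i(B_j)=r$.

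The ratio is $r/(1+\varepsilon)\to r$ as $\varepsilon\to 0$. If ties matter, I would perturb the decoys' values for the other agents to be slightly negative-free zeros and break ties consistently. Together with the upper bound, this shows the envy-ratio is exactly $r$ for both mechanisms.
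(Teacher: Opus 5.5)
\textbf{The gap is in your $t=1$ step.} Your map $\phi$ is not a bijection. It fixes the orders in which $i$ precedes $j$, and it also maps every other order onto that same set, so it is $2$-to-$1$ onto half of $\Omega$. What your two cases actually give is $\mathbb{E}[v_i(\{b_1\})]\le\mathbb{E}[v_i(\{a_1\})\mid i\text{ precedes }j]\le 2\,\mathbb{E}[v_i(\{a_1\})]$, which is just Proposition~\ref{prop:simpleupper}.

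The inequality you need, $\mathbb{E}[v_i(\{b_1\})]\le\mathbb{E}[v_i(\{a_1\})]$, is false. Round $1$ of either mechanism is exactly RSD on singleton values, and RSD is not envy-free in expectation. Pad the instance of Table~\ref{tab:lb-54} with items everyone values at $0$ and take $i=A$, $j=B$. Then $\mathbb{E}[v_A(\{b_1\})]=\frac{5(1-\varepsilon)}{6}>\frac{4-\varepsilon}{6}=\mathbb{E}[v_A(\{a_1\})]$.

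A sanity check also flags the problem: nothing in your argument uses $r\ge 2$. For $r=1$ it would prove RSD envy-free, contradicting Theorem~\ref{thm:lower}; the ``exactly $r$'' statement implicitly needs $r\ge2$. Nor can you rescue the round-by-round bound. Even using Theorem~\ref{thm:upper} for round $1$, bounding each round separately only gives $r-1+\sqrt2$.

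\textbf{The missing idea} is to charge $B$'s round-$1$ advantage against the later rounds. So you must not bound $v_i(\{b_t\})$, $t\ge 2$, by $v_i(\{a_1\})$ pointwise. Write $A,B$ for your $i,j$. Let $a_\sigma,b_\sigma$ be the singleton values (to $A$) of the round-$1$ picks, and let $c_\sigma$ be the best singleton value left after round $1$. The paper uses $v_A(X_B)\le b_\sigma+(r-1)c_\sigma$.

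It then fixes the order $C_1,\dots,C_n$ of the other agents. It pairs the two orders $\sigma_A,\sigma_B$ in which $A$ and $B$ occupy the same two slots, with $A$ first in $\sigma_A$ and the later slot following $C_1,\dots,C_\ell$. Let $x_\ell,y_\ell$ be $A$'s two best singletons after $C_1,\dots,C_\ell$ pick alone. By Lemma~\ref{lm:airplane}, $A$ gets $x_\ell$ or $y_\ell$ in $\sigma_B$, so $a_{\sigma_B}\ge y_\ell$. In $\sigma_A$, either $B$ does not take $x_\ell$, so $b_{\sigma_A}\le y_\ell$, or it does, so $c_{\sigma_A}\le y_\ell$. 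Hence $b_{\sigma_A}+c_{\sigma_A}\le a_{\sigma_A}+a_{\sigma_B}$.

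Combine this with $b_{\sigma_B}\le a_{\sigma_A}$, $c_{\sigma_A}\le a_{\sigma_A}$ and $c_{\sigma_B}\le a_{\sigma_B}$. This yields $b_{\sigma_A}+b_{\sigma_B}+(r-1)(c_{\sigma_A}+c_{\sigma_B})\le r(a_{\sigma_A}+a_{\sigma_B})$. One round-$2$ copy of $c_{\sigma_A}$ absorbs the round-$1$ envy, which is exactly where $r\ge 2$ enters.

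Your lower bound is fine and is essentially the paper's XOS instance. A fixed tie-break toward decoys is allowed by the model. Alternatively, you can avoid ties as the paper does, by giving the remaining decoys a tiny positive value in the decoy clause.
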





In Section~\ref{sec:discussion}, we discuss some implications of our work and open problems.



\subsection{Related Work}
As mentioned in the introduction, the two seminal works that consider the fairness of random serial dictatorship are \citet*{abdulkadirouglu1998random} and \citet*{bogomolnaia2001new}. The latter of these works famously shows that RSD satisfies a set of desirable properties---namely, equal treatment of equals, ex-post (ordinal) efficiency, and strategyproofness---and in fact, for instances with three agents and three objects is characterized by this set of properties. In this direction, a line of research in theoretical economics finds (e.g.~\cite{bade2020random,basteck2025axiomatization}), or in some cases, disproves (e.g.~\cite{basteck2025constrained}) similar axiomatic characterizations of RSD by various sets of desirable properties.

In the more general setting where agents can have cardinal utilities, there is a large body of literature that seeks to design deterministic or randomized mechanisms that \emph{approximately} guarantee various standard fairness criteria depending on structural assumptions about the agent's valuations. These criteria include envy-freeness and its relaxations such as EF1 and EFX (see e.g.~\cite{amanatidis2020multiple,feldman2024breaking}), and share-based fairness notions such as the maximin share, the any-price share or the quantile share (e.g.~\cite{budish2011combinatorial,kurokawa2018fair,babaioff2021fair,babichenko2024fair}). A parallel line of research studies the factor by which RSD approximates the optimal \emph{social welfare} in both the ordinal (e.g.~\cite{bhalgat2011social}) and cardinal (e.g.~\cite{filos2014social}) settings.



\section{Preliminaries} \label{sec:prelims}

In the house allocation problem, there is a set $N$ of agents and a collection $M$ of objects/items, where $|M|=|N|$. Each agent $A$ has a non-negative value $v_A(j)$ for each item $j$. We assume that each agent $A$ has a total order over $M$ that is coherent with its valuation function, so that if $v_A(j) > v_A(j')$ then $j$ is ranked above $j'$, and ties are broken according to a fixed rule. This ensures that each agent has a unique most-preferred item among any subset of remaining items.

\paragraph{Serial Dictatorship.} Let $\Omega$ be the set of all $|N|!$ permutations of the agents in $N$. For any fixed permutation $\sigma \in \Omega$, the serial dictatorship with permutation $\sigma$ operates as follows: for $k = 1, \dots, |N|$, assign to agent $\sigma(k)$ its most-preferred item among those that remain. This procedure is deterministic and well defined, and terminates when all items have been assigned.

\paragraph{Random Serial Dictatorship.} The random serial dictatorship (RSD) mechanism is the distribution over allocations obtained by drawing a permutation $\hat\sigma$ uniformly at random from $\Omega$ and applying the above serial dictatorship procedure.\\

As stated in the introduction, RSD is not envy-free in expectation for the house allocation problem. To show this, we require the following definitions.
A deterministic allocation $X$ is \emph{envy-free} if no agent envies another agent's bundle of items in $X$, that is, $v_A(X_A) \geq v_A(X_B)$, for every ordered pair $(A,B)$ of agents. 
A random allocation $X$ is \emph{envy-free in expectation} if $\mathbb{E}[v_A(X_A)] \geq \mathbb{E}[v_A(X_B)]$, for every ordered pair $(A,B)$ of agents. We remark that for the house allocation problem, any bundle $X_A$
will consist of one item. However, for the agents with more general valuations encountered in Sections~\ref{sec:additive} and~\ref{sec:nonadditive} such bundles may contain many items.

\paragraph{Envy-Ratio.} Consider a fair division instance and a corresponding (possibly randomized) allocation $X$, such that for every agent $A$, $\mathbb{E}[v_A(X_A)]>0$.
For any pair of distinct agents $A,B$, the envy-ratio of $A$ toward $B$ is $\mathbb{E}[v_A(X_B)]/\mathbb{E}[v_A(X_A)]$, where the expectation is taken over the randomness of the allocation $X$. The envy-ratio of an allocation $X$ is the maximum of this quantity over all ordered pairs $(A,B)$ of agents. The envy-ratio of a (possibly randomized) mechanism is the supremum, over all input instances on which the mechanism is defined, of the envy-ratio of the allocation produced by the mechanism on that instance. For RSD, we restrict our attention to instances in which each agent assigns a positive value to at least one item, ensuring that every agent has a positive expected value and that the envy-ratio is well defined. \\

Demonstrating that RSD is not envy-free requires an instance with just three agents (A, B and C) and three items ($a$, $b$ and $c$). Similar instances are presented in~\citet*{bogomolnaia2001new} and \citet*{freeman2020best}. \citet*{feldman2024breaking} remark that a modification of the instances in these works gives a lower bound of $5/4$ on the envy-ratio of the RSD mechanism. This example is shown in Table~\ref{tab:lb-54}:
the value each agent has for each item is shown on the LHS; the item each agent is allocated by each of the six orderings (permutations) used by RSD is shown on the RHS.
\begin{table}[!h]
    \begin{minipage}{.4\linewidth}
      \centering
        \begin{tabularx}{0.6\textwidth}{c|ccc}
           & \multicolumn{3}{c}{{\bf Item}} \\
           \cline{2-4}
        {\bf Agent} & $a$ & $b$ & $c$ \\
        \cline{1-4}
        {\tt A} & 1 & $1-\varepsilon$ & 0 \\
        {\tt B} & $1-\varepsilon$ & 1 & 0 \\
        {\tt C} & 1 & 0 & $1-\varepsilon$
        \end{tabularx}
    \end{minipage}
    \quad
    \begin{minipage}{.4\linewidth}
      \centering
         \begin{tabularx}{0.75\textwidth}{c|XXX}
           & \multicolumn{3}{c}{{\bf Agent}} \\
           \cline{2-4}
        {\bf Ordering} & {\tt A} & {\tt B} & {\tt C} \\
        \cline{1-4}
        {\tt ABC} & $a$ & $b$ & $c$ \\
        {\tt ACB} & $a$ & $b$ & $c$ \\
        {\tt BAC} & $a$ & $b$ & $c$ \\
        {\tt BCA} & $c$ & $b$ & $a$ \\
        {\tt CAB} & $b$ & $c$ & $a$ \\
        {\tt CBA} & $c$ & $b$ & $a$
    \end{tabularx}
    \end{minipage} 
       \caption{Valuations for RSD instance demonstrating a factor-$\frac{5}{4}$ lower bound on the envy-ratio}\label{tab:lb-54}
\end{table}

We are interested in agent~$A$'s envy toward agent~$B$. 
Observe that the RSD mechanism allocates to agent~$A$ the item~$a$ in three of the orderings and the item~$b$ in one ordering. Thus the expected value $A$ has for its 
own random item is $E(v(X_A))=\frac16 (3+(1-\varepsilon))=\frac16 (4-\varepsilon)$. On the other hand, the RSD mechanism allocates to agent~$B$ the item 
$b$ in five of the orderings. Consequently, the expected value $A$ has for $B$'s random item is $E(v(X_B))=\frac16 (5(1-\varepsilon))=\frac16 (5-5\varepsilon)$. 
Taking $\varepsilon\rightarrow0$, we see that agent~$A$ is envious of
agent~$B$. Indeed, it gives that envy-ratio of RSD is at least $5/4$.

\begin{cor}{{\em \cite{bogomolnaia2001new, feldman2024breaking}}}
The envy-ratio of RSD is at least $5/4$ in the house allocation problem.
\end{cor}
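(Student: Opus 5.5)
The corollary follows from the instance in Table~\ref{tab:lb-54}. The plan is to compute, for this single instance, the two expectations $\mathbb{E}[v_A(X_A)]$ and $\mathbb{E}[v_A(X_B)]$ exactly as functions of $\varepsilon$, and then let $\varepsilon\to 0$. Since the envy-ratio of RSD is a supremum over instances, any family of instances whose ratio tends to $5/4$ gives the lower bound.

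\textbf{Step 1: confirm the right-hand side of the table.} I would run the serial dictatorship on each of the six orderings, using the preferences on the left: $A$ ranks $a\succ b\succ c$, $B$ ranks $b\succ a\succ c$, and $C$ ranks $a\succ c\succ b$ (valid for $0<\varepsilon<1$).
\begin{itemize}
\item In \texttt{ABC}, \texttt{ACB} and \texttt{BAC}, agent $A$ gets $a$, $B$ gets $b$, and $C$ is left with $c$.
\item In \texttt{BCA}, $B$ takes $b$, $C$ takes $a$, and $A$ gets $c$.
\item In \texttt{CAB}, $C$ takes $a$, $A$ takes $b$, and $B$ gets $c$.
\item In \texttt{CBA}, $C$ takes $a$, $B$ takes $b$, and $A$ gets $c$.
\end{itemize}
This matches the table. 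There are no ties when $\varepsilon\in(0,1)$, so the fixed tie-breaking rule plays no role. Every agent also has a positive value for some item, so the instance is admissible.

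\textbf{Step 2: compute the expectations under the uniform ordering.}
\begin{itemize}
\item Agent $A$ receives $a$ in three orderings (value $1$), $b$ in one (value $1-\varepsilon$), and $c$ in two (value $0$). Hence $\mathbb{E}[v_A(X_A)]=(4-\varepsilon)/6$.
\item Agent $B$ receives $b$ in five orderings and $c$ in one. Hence $\mathbb{E}[v_A(X_B)]=5(1-\varepsilon)/6$.
\end{itemize}
The envy-ratio of $A$ toward $B$ is therefore $5(1-\varepsilon)/(4-\varepsilon)$.

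\textbf{Step 3: take the limit.} As $\varepsilon\to 0^+$ the ratio tends to $5/4$. Since the envy-ratio of RSD is the supremum over instances, it is at least $5/4$.

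There is no real obstacle here. The only care needed is to keep $\varepsilon>0$ strictly, so that the orderings are strict. The bound is then stated as a supremum: it is approached by the family of instances, not attained by any single one.
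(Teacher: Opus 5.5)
Your proposal is correct and matches the paper's argument: the paper uses the same instance from Table~\ref{tab:lb-54}, computes $\mathbb{E}[v_A(X_A)]=(4-\varepsilon)/6$ and $\mathbb{E}[v_A(X_B)]=5(1-\varepsilon)/6$, and lets $\varepsilon\to 0$. You also verify all six serial-dictatorship runs and check that no ties arise, which the paper simply reads off its table.
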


What about upper bounds on the envy-ratio of RSD? The following simple unpublished argument shows that the envy-ratio is at most $2$ for the house allocation problem.
\begin{prop}\label{prop:simpleupper}
The envy-ratio of RSD is at most $2$ in the house allocation problem.
\end{prop}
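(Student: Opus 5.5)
The plan is a coupling argument between permutations in which $A$ and $B$ are swapped. Fix agents $A$ and $B$. For a permutation $\sigma$, let $\sigma'$ denote the permutation obtained by exchanging the positions of $A$ and $B$. The map $\sigma\mapsto\sigma'$ is a bijection on $\Omega$, so $\sigma'$ is also uniform. Write $X_B(\sigma)$ for $B$'s item under $\sigma$. Then
\[
\mathbb{E}[v_A(X_B)] = \frac{1}{|N|!}\sum_{\sigma} v_A(X_B(\sigma)) = \frac{1}{|N|!}\sum_{\sigma} v_A(X_B(\sigma')).
\]
It therefore suffices to show a pointwise inequality: for every $\sigma$,
\[
v_A(X_B(\sigma')) \le v_A(X_A(\sigma)) + v_A(X_A(\sigma')).
\]
Summing this over $\sigma$ gives $\mathbb{E}[v_A(X_B)]\le 2\,\mathbb{E}[v_A(X_A)]$, which is the claimed bound.

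To prove the pointwise inequality, split into two cases according to whether $A$ precedes $B$ in $\sigma$.

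\emph{Case 1: $A$ precedes $B$ in $\sigma$.} Then $B$ precedes $A$ in $\sigma'$, and in $\sigma$, $A$ occupies the earlier of the two positions. The agents before that position are the same in $\sigma$ and $\sigma'$ and make the same choices, so the set $R$ of items remaining when $A$ picks in $\sigma$ is exactly the set remaining when $B$ picks in $\sigma'$. Hence $X_B(\sigma')\in R$. Since $A$ picks its favourite item in $R$ under $\sigma$, we get $v_A(X_B(\sigma'))\le v_A(X_A(\sigma))$.

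\emph{Case 2: $B$ precedes $A$ in $\sigma$.} Now $A$ precedes $B$ in $\sigma'$, so $B$ picks after $A$ in $\sigma'$. Consequently $X_B(\sigma')$ was still available when $A$ chose in $\sigma'$, which gives $v_A(X_B(\sigma'))\le v_A(X_A(\sigma'))$.

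In either case, because all values are non-negative, the right-hand side $v_A(X_A(\sigma))+v_A(X_A(\sigma'))$ dominates $v_A(X_B(\sigma'))$. This establishes the pointwise inequality and hence the proposition.

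The only delicate step is the claim in Case 1 that the remaining set seen by $A$ in $\sigma$ coincides with the remaining set seen by $B$ in $\sigma'$. This holds because the prefix of agents before the swapped position is identical in the two permutations, and serial dictatorship is deterministic given a fixed tie-breaking rule. Nothing else is an obstacle. The slack in this argument — we pay for both $\sigma$ and $\sigma'$, while each case only uses one of them — suggests why the true bound $\sqrt{2}$ requires a finer analysis.
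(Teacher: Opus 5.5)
Your proof is correct and is essentially the paper's argument: the same swap bijection, with your Case~1 matching the paper's observation that $A$ under $\sigma_A$ and $B$ under $\sigma_B$ face the same available set, and your Case~2 matching the fact that $A$ picks before $B$ in $\sigma_A$. The only difference is bookkeeping: you sum a symmetrized pointwise inequality over all of $\Omega$, padded with a nonnegative term, whereas the paper sums over swapped pairs and bounds both of $B$'s values by $v_A(\sigma_A)$.
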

\begin{proof}
Fix two distinct agents $A$ and $B$. Partition the set $\Omega$ of permutations of $N$ as
$\Omega = \Omega_A \sqcup \Omega_B$, where $\Omega_A$ (resp.\ $\Omega_B$) consists of permutations
in which $A$ appears before $B$ (resp.\ $B$ appears before $A$).
There is a bijection between $\Omega_A$ and $\Omega_B$ given by swapping the positions of $A$ and $B$;
for $\sigma \in \Omega$.

Fix $\sigma_A \in \Omega_A$ and let $\sigma_B \in \Omega_B$ be the permutation given by swapping $A$ and $B$ in $\sigma_A$.
Under $\sigma_A$, agent $A$ selects before $B$, so $A$'s value for its own item is at least its value
for the item assigned to $B$ under $\sigma_A$.
Under $\sigma_B$, the relative order of all agents other than $A$ and $B$ is unchanged, so the agents
preceding $B$ in $\sigma_B$ are exactly those preceding $A$ in $\sigma_A$ and they choose the same items.
Thus, the set of items available to $A$ under $\sigma_A$ is identical to the set available to $B$
under $\sigma_B$, implying that $A$'s value for its item under $\sigma_A$ is at least its value for
$B$'s item under $\sigma_B$.

Pairing permutations via the above bijection and using the uniform randomness of RSD, we obtain that
$A$'s expected value for $B$'s allocated item is at most twice its expected value for its own item.
Hence, the envy-ratio of Random Serial Dictatorship is at most $2$.
\end{proof}

\subsection{Preliminary Lemmas}

We will use the following stability fact for serial dictatorships when an agent is inserted into the prefix of a permutation.

\begin{lm}\label{lm:airplane}
Let $\sigma$ be a permutation in $\Omega$. Fix $r<|N|$, and let $z_i$ be the item chosen by $\sigma(i)$ under serial dictatorship with order $\sigma$, for $i=1,\dots,k$.
Let $A$ be an agent that appears after position $k$ in $\sigma$, and let $\sigma'$ be obtained by removing $A$ from $\sigma$ and inserting it into some position $p\in\{1,\dots,k\}$, preserving the relative order of all other agents.
Then the set of items chosen by the first $k+1$ agents under $\sigma'$ contains $\{z_1,\dots,z_k\}$.

\smallskip
\end{lm}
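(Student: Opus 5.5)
The plan is induction on the position of each agent in the prefix, proving a stronger invariant that tracks how the chosen sets under $\sigma$ and $\sigma'$ differ. (The statement's ``fix $r<|N|$'' should read $k<|N|$.)

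\textbf{Setup.} Write $S_i=\{z_1,\dots,z_i\}$ for the items taken by the first $i$ agents under $\sigma$. In $\sigma'$, the first $k+1$ positions hold exactly $\sigma(1),\dots,\sigma(k)$ in their original relative order, with $A$ inserted at position $p$. For $i=0,1,\dots,k$, let $T_i$ be the set of items chosen under $\sigma'$ by $A$ (if already placed) and by $\sigma(1),\dots,\sigma(i)$. So $T_i$ is the chosen set of the shortest prefix of $\sigma'$ that contains $\sigma(1),\dots,\sigma(i)$ and, when $i\ge p-1$, also contains $A$.

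\textbf{Invariant.} I will prove that for every $i$:
\begin{itemize}[label={}]
\item \emph{Before $A$ is placed} ($i<p-1$, or $i=p-1$ before $A$ moves): $T_i=S_i$.
\item \emph{After $A$ is placed}: $T_i=S_i\cup\{y_i\}$ for a single extra item $y_i\notin S_i$.
\end{itemize}
Here $|T_i|=|S_i|+1$ once $A$ has chosen.

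\textbf{Induction.} Before position $p$, the two orders agree exactly, so the choices coincide.

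Agent $A$ then picks from $M\setminus S_{p-1}$, an item not in $S_{p-1}$. Setting $y_{p-1}$ to this item gives the invariant.

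Now suppose $T_{i-1}=S_{i-1}\cup\{y\}$, and agent $\sigma(i)$ moves. Under $\sigma$ it picks $z_i$, its favourite in $M\setminus S_{i-1}$. Under $\sigma'$ its available set is $M\setminus S_{i-1}\setminus\{y\}$, which is a subset of the $\sigma$-available set with at most one item removed. There are two cases.
\begin{itemize}[label={}]
\item If $y\ne z_i$, then $z_i$ is still available. Since it is the favourite in a superset and the agent's total order is fixed, the agent still picks $z_i$. Then $T_i=S_i\cup\{y\}$, and $y\notin S_i$ because $y\notin S_{i-1}$ and $y\ne z_i$.
\item If $y=z_i$, the agent picks some other item $w\notin S_{i-1}\cup\{z_i\}=S_i$. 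Then $T_i=S_{i-1}\cup\{z_i\}\cup\{w\}=S_i\cup\{w\}$, and we set $y_i=w$.
\end{itemize}
Such a $w$ exists because $|S_i|+1\le k+1\le|N|=|M|$. If some agent has value zero for everything left, the fixed tie-breaking total order still makes it pick an item, so availability is the only requirement.

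\textbf{Conclusion.} At $i=k$ we get $T_k=S_k\cup\{y_k\}\supseteq\{z_1,\dots,z_k\}$, and $T_k$ is exactly the set chosen by the first $k+1$ agents of $\sigma'$.

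\textbf{Main obstacle.} The main subtlety is the ``displacement chain'' case $y=z_i$: the extra item can change identity several times. The invariant ``exactly one extra item'' is what absorbs this. The only facts used are that each agent has a fixed total order and that there are enough items, so the cascading substitutions never block anyone.
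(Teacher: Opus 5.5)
Your proof is correct and follows essentially the same induction as the paper. The paper shows that $z_1,\dots,z_i$ are all taken within the first $i+1$ picks of $\sigma'$ and splits on whether $z_i$ is still available when $\sigma(i)$ moves; this is exactly your split $y\neq z_i$ versus $y=z_i$, and your ``one extra item'' invariant is that containment restated via $|T_i|=i+1$ (you are also right that ``$r<|N|$'' should read $k<|N|$).
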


\begin{proof}

We show by induction on $i=1,\dots,k$ that $z_i$ is selected within the first $i+1$ steps under $\sigma'$. If $i<p$, then the first $i$ agents are the same in $\sigma$ and $\sigma'$, so $\sigma(i)$ faces the same available set and selects $z_i$ at step $i$ under $\sigma'$. Now fix $i\ge p$ and assume $z_1,\dots,z_{i-1}$ have all been removed by step $i$ under $\sigma'$. Under $\sigma$, agent $\sigma(i)$ selects $z_i$ at step $i$. Under $\sigma'$, agent $\sigma(i)$ acts at step $i+1$, after $i$ selections, and by the induction hypothesis all of $\{z_1,\dots,z_{i-1}\}$ have already been removed. Hence if $z_i$ is still available at step $i+1$, then $\sigma(i)$ still selects $z_i$ (which remains its most-preferred item). Otherwise, $z_i$ has already been removed. This completes the induction.
\end{proof}

We will also need the following simple observation.
\begin{obs}\label{obs:fraction}
For any positive real numbers $a_1,\ldots,a_n$ and non negative real numbers $b_1,\ldots,b_n$,
\begin{equation*}
    \frac{\sum_{i\in[n]} b_i}{\sum_{i\in[n]} a_i} \le \max_{i\in[n]} \frac{b_i}{a_i}.
\end{equation*}
\end{obs}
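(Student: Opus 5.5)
The statement is the elementary mediant inequality, so I would prove it directly from the definition of the maximum, with no reference to the RSD machinery.

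Set $\rho = \max_{i\in[n]} b_i/a_i$. This is well defined because every $a_i$ is positive. By the definition of $\rho$, we have $b_i/a_i \le \rho$ for each $i$. Multiplying through by $a_i>0$ preserves the direction of the inequality, so $b_i \le \rho\, a_i$ for every $i\in[n]$.

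Summing these $n$ inequalities gives
\[
\sum_{i\in[n]} b_i \;\le\; \rho \sum_{i\in[n]} a_i .
\]
Since each $a_i$ is positive, the sum $\sum_{i\in[n]} a_i$ is strictly positive. Dividing both sides by it therefore keeps the inequality and yields the claim.

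The only point that needs any care is this use of positivity: the $a_i$ are positive so that every ratio $b_i/a_i$ is defined and so that multiplying and dividing do not flip inequalities. The nonnegativity of the $b_i$ is not actually needed for the argument. It just matches the setting where the lemma will be applied, with the $b_i$ being expected values of another agent's item and the $a_i$ an agent's own expected values, grouped over a partition of $\Omega$. There is no real obstacle here; the observation is a bookkeeping tool that lets the later envy-ratio bound be proved class by class over a partition of the permutations.
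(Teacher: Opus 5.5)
Your proof is correct and is the standard argument (bound each $b_i \le \rho\, a_i$, sum, and divide by the positive total); the paper states Observation~\ref{obs:fraction} without proof and implicitly relies on exactly this reasoning. You are also right that the nonnegativity of the $b_i$ is unnecessary, and this matters: in Lemma~\ref{lm:max_on_step_rays} the numerator $N$ is only assumed real-valued, so the observation is needed there for possibly negative $b_i$.
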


\section{Random Serial Dictatorship}\label{sec:rsd}

In this section we analyze the RSD mechanism and exactly determine its envy-ratio for the house allocation problem. \citet*{feldman2024breaking} present a sequence of instances showing that this ratio is at least $\sqrt{2}$. Here, we present an intricate analysis that upper-bounds the envy-ratio of RSD at a matching factor of $\sqrt{2}$, giving us our headline result.

\thmmain*

\subsection{Technical Overview}

Since our goal is to bound the envy-ratio of the mechanism for an arbitrary instance, it suffices to focus on two agents, which we label $A$ and $B$, and bound the envy-ratio of $A$ toward $B$. This ratio depends on $A$'s cardinal valuation and on the ordinal preferences of $B$ and the remaining agents (which determine their picking order over the items). Let $\mathcal{C}$ denote the set of remaining agents.

At a high level, the proof proceeds by transforming the analysis of the envy-ratio into an optimization problem over a continuous domain. We first define two vectors $\mathbf{x}$ and $\mathbf{y}$ that are indexed by the possible positions of $A$ and $B$ within the ordering of the remaining agents $\mathcal{C}$, and that encode the availability of items for $A$ and $B$ at any of these positions. We then define a cone $K$ in the space of vector pairs $(\mathbf{x},\mathbf{y})$ that captures the structural constraints shared by all instances of the house allocation problem, i.e., every instance corresponds to a point $(\mathbf{x},\mathbf{y}) \in K$.

This transformation serves two important purposes. First, it allows us to show that, to bound the envy-ratio, it suffices to fix some worst-case ordering of the agents in $\mathcal{C}$ and analyze only permutations over all agents that are consistent with this ordering of $\mathcal{C}$. Second, and crucially, it allows us to replace the analysis over instances of the house allocation problem with an analysis over the space of vector pairs $(\mathbf{x},\mathbf{y})$. We define a function $\mathcal{U}(\mathbf{x},\mathbf{y})$ on the domain $K$, and show that for every instance of the house allocation problem, the envy-ratio of $A$ toward $B$ is upper bounded by $\mathcal{U}(\mathbf{x},\mathbf{y})$ evaluated at the corresponding $(\mathbf{x},\mathbf{y})$ for that instance.

We can now analyze this optimization problem directly. Although $K$ is a relaxation that includes infeasible points not corresponding to any instance, we show using a convexity argument that any \emph{maximizer} $(\mathbf{x}^*,\mathbf{y}^*)$ of $\mathcal{U}$ over $K$ corresponds to a structured instance of the house allocation problem in which $A$ values every item at either $0$ or $1$. This transformation thus captures the worst-case behavior of the envy-ratio using these structured binary instances, which are easier to analyze. Finally, we show that for these structured instances the envy-ratio is at most $\sqrt{2}$. For completeness, we also present a simplified version of the sequence of instances from \cite{feldman2024breaking} giving us the matching lower bound of $\sqrt{2}$.

We remark that the example in Table~\ref{tab:lb-54} giving the factor-$\frac{5}{4}$ lower bound highlights an essential aspect of our proof. If we focus on the permutations $ACB$ and $BCA$, in which the positions of agents $A$ and $B$ are swapped, we may observe that $B$ receives an item of high value in both permutations, but that $A$ fails to obtain a high-valued item in the permutation in which it appears after $B$. This happens because $B$ \textit{does not select} the same item as $A$ when it goes first, but instead selects an alternative high-valued item that is not demanded by agent~$C$, leading to the unavailability of any high-valued items for $A$ when it appears later in the ordering. The specific quantity we wish to analyze is the maximum extent to which this kind of effect can increase the envy-ratio of RSD.

\subsection{A $\sqrt{2}$ Upper Bound on the Envy-Ratio}\label{sec:upper-single}

We focus on two distinct agents, which we label $A$ and $B$, and bound the envy-ratio of $A$ toward $B$. Let $\mathcal{C}$ denote the set of remaining agents, and let $n = |\mathcal{C}|$. Let $v$ be agent~$A$'s valuation function.
For any permutation $\sigma \in \Omega$, let $v_A(\sigma)$ denote the value that agent $A$ obtains when the serial dictatorship mechanism is run according to permutation $\sigma$, and let $v_B(\sigma)$ denote the value that $A$ has for $B$'s allocation when the serial dictatorship mechanism is run according to permutation $\sigma$. 
Let $\hat\sigma$ be the random permutation selected by the RSD mechanism. Then $v_A(\hat\sigma)$ and $v_B(\hat\sigma)$ are random variables representing, respectively, agent $A$’s value for its own allocated item and its value for the item allocated to agent $B$. Because we no longer consider $B$'s valuation again, we emphasize that from this point forth, $v_A$ and $v_B$ no longer refer to $A$'s and $B$'s valuations, but instead to $A$'s value for its own item and for $B$'s item respectively. 

We are interested in bounding the envy-ratio between agents $A$ and $B$, defined as
\[\er(\Omega) = \frac{\mathbb{E}[v_B(\hat\sigma)]}{\mathbb{E}[v_A(\hat\sigma)]},\]
where the expectation is over the randomness of the RSD mechanism. More generally, for any subset $\Omega^* \subseteq \Omega$ such that $\mathbb{E}[v_A(\hat\sigma)|\hat\sigma\in \Omega^*]>0$, we define the conditional envy-ratio
\[\er(\Omega^*)=\frac{\mathbb{E}[v_B(\hat\sigma)|\hat\sigma\in \Omega^*]}{\mathbb{E}[v_A(\hat\sigma)|\hat\sigma\in \Omega^*]}.\]

We now present the main result of this section, which is that the envy-ratio of RSD is bounded above by $\sqrt{2}$ in the house allocation problem. Specifically, we restate the upper bound portion of Theorem~\ref{thm:main} using our notation.

\begin{thm}\label{thm:upper}
    $\er(\Omega)\leq \sqrt2$.
\end{thm}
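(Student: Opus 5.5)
We follow the technical overview. The first step is to fix an ordering $\tau$ of the agents in $\mathcal{C}$ and condition on it. Let $\Omega_\tau\subseteq\Omega$ be the set of permutations whose restriction to $\mathcal{C}$ equals $\tau$. The sets $\Omega_\tau$ partition $\Omega$ into equally likely classes. By Observation~\ref{obs:fraction}, $\er(\Omega)\le\max_\tau \er(\Omega_\tau)$, so it suffices to bound $\er(\Omega_\tau)$ for a single, arbitrary $\tau$. Inside $\Omega_\tau$, a permutation is determined by the pair of insertion positions $(i,j)$ of $A$ and $B$ into $\tau$, with $i,j\in\{0,\dots,n\}$, together with the relative order of $A$ and $B$ when $i=j$.

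Next we encode the instance by availability vectors. Let $S_i$ be the set of items remaining after the first $i$ agents of $\tau$ pick, with neither $A$ nor $B$ present. Set $x_i=\max_{S_i} v$, which is $A$'s value when inserted at position $i$ while $B$ comes later. Define $y_i$ as $A$'s value for the item that $B$ would take from $S_i$. The remaining cases, where one of $A,B$ picks before the other and thereby perturbs later picks by the $\mathcal{C}$ agents, are controlled using Lemma~\ref{lm:airplane}. Inserting an extra agent earlier only shifts which items are gone by at most one step. Consequently, $A$'s value when placed after $B$ at position $i$ is at least $x_{i+1}$ (suitably interpreted), and $B$'s item when placed after $A$ is valued by $A$ at most $x_{j}$ or similar. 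From this we extract linear constraints defining a cone $K$: $x$ is nonincreasing, $0\le y_i\le x_i$, and coupling inequalities of the form "if $B$ takes a high-$v$ item at step $i$, then $x$ drops by at most that much by step $i+1$". We then write $\er(\Omega_\tau)\le\mathcal{U}(\mathbf{x},\mathbf{y})$. Here $\mathcal{U}$ is a ratio of two linear forms: the numerator sums the upper bounds on $v_B$ over the pairs $(i,j)$, and the denominator sums the lower bounds on $v_A$. Each pair $(i,j)$ carries weight equal to its number of completions, which is uniform over $(i,j)$ with the $i=j$ case counted twice.

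Because $\mathcal{U}$ is a ratio of linear functions, it is quasi-convex and constant along rays, so its maximum over the cone $K$ (normalized, for instance, by $x_0=1$) is attained at an extreme ray. The key step is to show that the extreme rays of $K$ correspond to $0/1$ valuations. In such an instance, $A$ values some items at $1$ and the rest at $0$. The vector $x$ is then a step function equal to $1$ until some position $p$, and $y$ is the indicator of the positions at which $B$ grabs a valuable item. For these binary structures the ratio becomes a combinatorial count. Suppose $B$, when it goes first, removes a valuable item that the $\mathcal{C}$ agents would not otherwise take. Then $A$ loses about one unit of availability, and in effect the threshold position shifts from $p$ to roughly $p-1$ or to where the next valuable item disappears. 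Writing the resulting counts as a function of two parameters, the first-exhaustion position $p$ and the position $q$ after which $B$ can still find value, we get an explicit ratio. It should have the form $\frac{\text{quadratic}}{\text{quadratic}}$ in $p,q$, and maximizing it over $q\le p\le n$ as $n\to\infty$ should give $\sqrt2$ in the limit, since the maximum of $\frac{p^2+\dots}{\dots}$ with an optimal ratio $q/p=1-1/\sqrt2$ yields $\sqrt2$.

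The main obstacle is the middle step: choosing the cone $K$ so that it is simultaneously valid for all instances (via Lemma~\ref{lm:airplane}) and tight enough that its extreme rays really are binary instances. A naive relaxation would only recover the bound $2$ of Proposition~\ref{prop:simpleupper}. I would first try constraints expressing that $x_i-x_{i+1}$ is at most the drop caused by the single item removed at step $i$, and that $y$ can be "charged" to a drop in $x$ at most once. I would then verify extremality by checking that any non-binary point is a convex combination of two points in $K$ with the same denominator.
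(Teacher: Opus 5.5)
Your outer framework matches the paper's: condition on the order of $\mathcal{C}$ via Observation~\ref{obs:fraction}, account pair by pair, relax to a cone, reduce to extreme rays, optimize. But the step that carries the proof is missing, and the one concrete inequality you state is false. You claim that $A$, placed after $B$ at index $i$, gets at least $x_{i+1}$. Here is a counterexample. Take $n=1$ and items $a,b,c$, which $A$ values at $1,\tfrac12,0$. Let $B$'s favorite be $a$ and $C_1$'s favorite be $c$. Then $x_0=x_1=1$, but in the order $B,A,C_1$ agent $A$ gets $\tfrac12$.

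What Lemma~\ref{lm:airplane} actually gives is this: when $A$ picks at index $j$ after $B$, the removed items are $\{z_1,\dots,z_j\}$ plus exactly one more. So $A$ gets its best or second-best item of $S_j$. This is why the second vector must be $A$'s \emph{second-best} value in $S_j$, not $A$'s value for $B$'s pick from $S_j$ (which need not even be monotone in $j$). Your encoding cannot express any valid lower bound on $A$'s value when it follows $B$.

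With the right $y$, pairing $\sigma_A(i,j)$ with $\sigma_B(i,j)$ gives the essential dichotomy. If $B$, going first, takes $x_i$, then both runs remove the same items before the second of $A,B$ picks. Hence $v_B(\sigma_B(i,j))=v_A(\sigma_A(i,j))$ and $v_B(\sigma_A(i,j))\le v_A(\sigma_B(i,j))$, so there is no excess. Otherwise $v_B(\sigma_B(i,j))\le y_i$ and $v_B(\sigma_A(i,j))\le x_j$, while $A$'s two values sum to at least $x_i+y_j$. The excess is then at most $\Delta_j-\Delta_i$, where $\Delta_\ell=x_\ell-y_\ell$.

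Which pairs fall in the second case depends on $B$'s preferences. So the uniform bound is not a ratio of two linear forms but
\[
\mathcal{U}=1+\frac{\sum_{i\le j}\max\{0,\Delta_j-\Delta_i\}}{\sum_{i\le j}(x_i+y_j)},
\]
whose numerator is convex and $1$-homogeneous. You could instead fix the partition of pairs as extra data and get a linear-fractional bound per partition, but you have not set this up.

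No coupling constraints are needed. The cone is just $\mathbf{x},\mathbf{y}$ nonincreasing with $\mathbf{x}\ge\mathbf{y}\ge 0$, and its extreme rays are the step pairs $(\mathbf{1}^{k_x},\mathbf{1}^{k_y})$. Subadditivity of the numerator together with Observation~\ref{obs:fraction} reduces the maximum to these rays. They need not correspond to actual binary instances; one simply evaluates $\mathcal{U}$ there. With $p=k_y/k_x$, that evaluation gives
\[
1+\frac{2k_y(k_x-k_y)}{k_x(2n+3-k_x)+k_y(k_y+1)}\le 1+\frac{2p(1-p)}{1+p^2}\le\sqrt2,
\]
with the maximum at $p=\sqrt2-1$. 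This does not match the ratio $1-1/\sqrt2$ you predict.

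So your worry that a simple relaxation only recovers $2$ is misplaced once $y$ is chosen correctly. The ``main obstacle'' you defer is exactly the content of the proof, and as written the proposal does not establish the theorem.
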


We denote by $\Omega'$ the set of permutations of $\mathcal{C}$. Given $\sigma' \in \Omega'$, let $\Omega(\sigma')$ denote the set of permutations $\sigma \in \Omega$ such that the order of the agents in $\mathcal{C}$ induced by $\sigma$ is the same as their order in $\sigma'$; in this case, we say that $\sigma$ is consistent with $\sigma'$. Note that this implies $\Omega = \bigcup_{\sigma' \in \Omega'} \Omega(\sigma')$.

The following lemma allows us to fix a permutation $\sigma' \in \Omega'$ of the agents in $\mathcal{C}$ and bound only the conditional envy-ratio over permutations consistent with $\sigma'$.
\begin{restatable}{lm}{lmorder}\label{lm:order}
    $\er(\Omega)\leq \max_{\sigma'\in \Omega'} \er(\Omega(\sigma'))$.    
\end{restatable}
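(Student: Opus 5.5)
The plan is to write the global expectations as averages of the conditional expectations over the classes $\Omega(\sigma')$, and then apply Observation~\ref{obs:fraction}. First I will check that the sets $\Omega(\sigma')$, for $\sigma'\in\Omega'$, partition $\Omega$. Each $\sigma\in\Omega$ induces exactly one ordering of $\mathcal{C}$, so the union $\Omega=\bigcup_{\sigma'\in\Omega'}\Omega(\sigma')$ is in fact disjoint. Moreover, every class has the same size: a permutation consistent with $\sigma'$ is determined by the positions of $A$ and $B$ among the $n+2$ slots, so $|\Omega(\sigma')|=(n+2)(n+1)$ for every $\sigma'$. Since $\hat\sigma$ is uniform on $\Omega$, this gives $\PR[\hat\sigma\in\Omega(\sigma')]=1/n!$ for each $\sigma'$.

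By the law of total expectation,
\[
\mathbb{E}[v_B(\hat\sigma)]=\frac{1}{n!}\sum_{\sigma'\in\Omega'}\mathbb{E}[v_B(\hat\sigma)\mid\hat\sigma\in\Omega(\sigma')],
\]
and the same identity holds for $v_A$. The common factor $1/n!$ cancels in the ratio, so $\er(\Omega)$ equals the ratio of the sum over $\sigma'$ of the conditional expectations of $v_B$ to the sum of those of $v_A$. Setting $a_{\sigma'}=\mathbb{E}[v_A(\hat\sigma)\mid\hat\sigma\in\Omega(\sigma')]$ and $b_{\sigma'}=\mathbb{E}[v_B(\hat\sigma)\mid\hat\sigma\in\Omega(\sigma')]$, Observation~\ref{obs:fraction} gives $\er(\Omega)\le\max_{\sigma'}b_{\sigma'}/a_{\sigma'}=\max_{\sigma'}\er(\Omega(\sigma'))$.

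The only delicate point is that Observation~\ref{obs:fraction} requires every $a_{\sigma'}>0$, which is also what makes $\er(\Omega(\sigma'))$ well defined. I expect this to be the main, though mild, obstacle, and I will handle it as follows. Agent $A$ values some item positively (this is our standing assumption for RSD), so let $j^*$ be its most-preferred item, with $v_A(j^*)>0$. Every class $\Omega(\sigma')$ contains the permutation that places $A$ first, then $B$, then the agents of $\mathcal{C}$ in the order $\sigma'$. Under that permutation $A$ picks $j^*$. This permutation has positive conditional probability within the class, so $a_{\sigma'}>0$ for all $\sigma'$, and the $b_{\sigma'}$ are nonnegative.

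With positivity established, Observation~\ref{obs:fraction} applies directly and the lemma follows.
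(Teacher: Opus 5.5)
Your proposal is correct and follows the paper's proof essentially step for step: you decompose $\Omega$ into the equiprobable classes $\Omega(\sigma')$, cancel the common probability, apply Observation~\ref{obs:fraction}, and get positivity of each denominator from the permutation in which $A$ picks first. Your explicit count $|\Omega(\sigma')|=(n+2)(n+1)$ makes precise the paper's remark that $\PR(\hat\sigma\in\Omega(\sigma'))$ does not depend on $\sigma'$.
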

\begin{proof}
Recall that $\hat\sigma$ is the random permutation selected by the RSD mechanism. We have
    \begin{align*}
    \er(\Omega)
     &= \frac{\sum_{\sigma\in \Omega}\PR(\hat\sigma = \sigma)
      \cdot v_B(\sigma)}{\sum_{\sigma\in \Omega}\PR(\hat\sigma = \sigma)
      \cdot v_A(\sigma)} = \frac{\sum_{\sigma'\in \Omega'}\PR(\hat\sigma\in \Omega(\sigma'))
      \cdot \mathbb{E}[v_B(\hat\sigma)|\hat\sigma\in \Omega(\sigma')]}{\sum_{\sigma'\in \Omega'}\PR(\hat\sigma\in \Omega(\sigma'))
      \cdot \mathbb{E}[v_A(\hat\sigma)|\hat\sigma\in \Omega(\sigma')]}.
    \end{align*}
Since $\PR(\hat\sigma \in \Omega(\sigma'))$ does not vary with the choice of $\sigma'$, we have
 \begin{align*}
    \er(\Omega)
          = \frac{\sum_{\sigma'\in \Omega'}\mathbb{E}[v_B(\hat\sigma)|\hat\sigma\in \Omega(\sigma')]}{\sum_{\sigma'\in \Omega'}\mathbb{E}[v_A(\hat\sigma)|\hat\sigma\in \Omega(\sigma')]}
            \le \max_{\sigma'\in \Omega'} \frac{\mathbb{E}[v_B(\hat\sigma)|\hat\sigma\in \Omega(\sigma')]}{\mathbb{E}[v_A(\hat\sigma)|\hat\sigma\in \Omega(\sigma')]}
           = \max_{\sigma'\in \Omega'} \er(\Omega(\sigma')),
    \end{align*}
where the inequality follows from Observation~\ref{obs:fraction}. We implicitly used the fact that for all $\sigma' \in \Omega'$, there exists a permutation $\sigma \in \Omega(\sigma')$ in which $A$ picks first and therefore receives an item with strictly positive value. Hence $\mathbb{E}[v_A(\hat\sigma)\mid \hat\sigma \in \Omega(\sigma')]>0$, and all the ratios are well defined.
\end{proof}

By Lemma~\ref{lm:order}, it suffices to bound $\max_{\sigma' \in \Omega'} \er(\Omega(\sigma'))$. Fix a permutation $\sigma' = (C_1, C_2, \dots, C_n) \in \Omega'$ of the agents in $\mathcal{C}$. In the remainder of the proof, we establish the following lemma, which, when combined with Lemma~\ref{lm:order}, directly yields Theorem~\ref{thm:upper}.

\begin{lm}\label{lm:c'upper_bound}
    $\er(\Omega(\sigma')) \leq \sqrt2$.
\end{lm}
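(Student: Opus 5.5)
We fix $\sigma'=(C_1,\dots,C_n)$. The plan is to parametrize each permutation in $\Omega(\sigma')$ by the positions of $A$ and $B$ among the $C_i$. Let $i\in\{0,\dots,n\}$ be the number of $C$-agents preceding $A$, and $j$ the number preceding $B$. Every pair $(i,j)$ occurs exactly once with a fixed relative order of $A$ and $B$; when $i=j$, both orders occur. So $\Omega(\sigma')$ is uniform over these configurations.

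\textbf{Step 1: encode availability.} Define two vectors.
\begin{itemize}
\item Let $x_i$ be the value, under $v$, of $A$'s best item when $A$ is inserted after $C_1,\dots,C_i$ and $B$ comes later.
\item Let $y_j$ be $A$'s value for the item $B$ takes when $B$ is inserted after $C_1,\dots,C_j$ and $A$ comes later.
\end{itemize}
These quantities only depend on the prefix; this is exactly the observation in Proposition~\ref{prop:simpleupper}. When the agent that picks later comes after the other, its pick is perturbed. Lemma~\ref{lm:airplane} shows that inserting one extra agent into a prefix only removes items "one step earlier", so the available set is sandwiched between the sets for consecutive prefixes. This yields the inequalities I intend to extract:
\begin{itemize}
\item $x_0\ge x_1\ge\dots\ge x_n\ge 0$, since availability shrinks.
\item $y_j\le x_j$, since $A$ at the same slot would pick at least as well.
\item $v_A$ at a later slot $i>j$ is at least $x_{i+1}$, or a similar shifted quantity. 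Similarly, $v_B$ at $j>i$ is controlled by $y$ values and by $x$.
\end{itemize}
Collecting these gives a polyhedral cone $K$ of admissible pairs $(\mathbf{x},\mathbf{y})$. Each instance maps into $K$, and $\er(\Omega(\sigma'))\le \mathcal{U}(\mathbf{x},\mathbf{y})$, where $\mathcal{U}$ is a ratio of two linear functions. The numerator is the sum over configurations of upper bounds on $v_B$; the denominator is the sum of lower bounds on $v_A$.

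\textbf{Step 2: reduce to binary instances.} $\mathcal{U}$ is a linear-fractional function on a cone, so its supremum is attained at an extreme ray of $K$, via Observation~\ref{obs:fraction} applied to a conic decomposition. Since the $x$'s are monotone, the extreme rays should be step vectors: $x=\mathbf 1_{[0,t]}$, with $y$ being a $0/1$ vector dominated by it. These correspond to instances where $A$ values items in $\{0,1\}$.

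\textbf{Step 3: analyze the binary case.} Here $A$ has a set $S$ of good items. $A$ succeeds at slot $i$ iff some good item survives, which happens for $i\le t$. $B$'s contribution is counted by slots where $B$ grabs a good item. The "stealing" phenomenon of Table~\ref{tab:lb-54} is exactly a $B$ going first that removes the last good item $A$ would need. I would count, for each threshold $t$ and each set of $B$-good slots $s$, the numerator and denominator as quadratic expressions in $t$ and $s$, normalized by $(n+1)^2+\dots$. Maximizing over $s\le t\le n$ then gives a ratio of quadratics whose supremum is $\sqrt2$. It is attained asymptotically at $s/t\to 1/\sqrt2$-type proportions, matching the construction of Feldman et al.

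\textbf{Main obstacle.} The hard part is Step 1: choosing the right linear constraints so that $K$ is tight enough for the optimum of $\mathcal{U}$ to be $\sqrt2$, rather than $2$ as in the naive pairing. The key extra constraint must capture that when $B$ picks before $A$ and steals a good item, that item was not demanded by the intervening $C$'s. Lemma~\ref{lm:airplane} then forces $A$, at the shifted position, to still get value at least $x_{i+1}$. I would first try to prove, for $j<i$, the bound $v_A\ge x_{i+1}$, and for $i<j$, the bound $v_B\le y_j$ combined with $v_B\le x_{j}$. I would then check numerically that the resulting LP-fractional optimum is $\sqrt2$ before writing Step 3 in full.
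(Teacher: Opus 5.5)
Your overall architecture matches the paper's: index pairs $(i,j)$, an availability vector $\mathbf{x}$, a cone, a reduction to $0/1$ instances, then a closed-form optimization. But Step~1, which you rightly call the crux, is not merely unfinished: the inequalities you propose to extract from it are false. The root error is the ``sandwich'' claim. Let $z_m$ be $C_m$'s pick when $\mathcal{C}$ picks alone. Lemma~\ref{lm:airplane} says only that the later of $A,B$, picking after $C_1,\dots,C_\ell$, faces $M\setminus(\{z_1,\dots,z_\ell\}\cup\{w\})$ for some \emph{arbitrary} extra item $w$, not necessarily $z_{\ell+1}$.

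Hence $v_A\ge x_{i+1}$ for $j<i$ fails. Take agents $A,B,C_1,C_2$ and items $a,b,c_1,c_2$, with $v(a)=1$, $v(b)=1-\varepsilon$, $v(c_1)=v(c_2)=0$. Let $B$ rank $b$ first, and let both $C$'s rank $a\succ c_1\succ c_2\succ b$. Then $x_2=1-\varepsilon$, yet under the order $B,C_1,A,C_2$ agent $A$ gets value $0$. This is exactly the stealing phenomenon, and no $x$-quantity controls it.

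Your bound $v_B\le y_j$ for $i<j$ (with your $y$) fails too. Let $A$ value $a,\gamma,\beta$ at $1,0.9,0$, let $C_1$ rank $a\succ\beta\succ\gamma$, and let $B$ rank $\beta\succ\gamma\succ a$. The order $A,C_1,B$ gives $B$ the item $\gamma$, worth $0.9$, while your $y_1=0$. Planning to ``check numerically'' which optimum comes out does not replace identifying valid constraints.

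The missing idea is to define $y_\ell$ as $A$'s \emph{second}-best item in $M\setminus\{z_1,\dots,z_\ell\}$. You then pair each configuration with its $A$/$B$ swap at the same $(i,j)$, $i\le j$. By Lemma~\ref{lm:z}, in $\sigma_A(i,j)$ agent $A$ gets $x_i$ and $B$'s item is worth at most $x_j$. In $\sigma_B(i,j)$ agent $A$ gets $x_j$ or $y_j$. Two cases follow:
\begin{itemize}
\item If $B$ takes $x_i$ in $\sigma_B(i,j)$, both runs remove the same first $j+1$ items. So $B$'s item in $\sigma_A(i,j)$ is worth at most what $A$ gets in $\sigma_B(i,j)$, and the pair has no excess.
\item Otherwise, $B$'s item in $\sigma_B(i,j)$ is worth at most $y_i$. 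The pair's excess is then at most $(x_j+y_i)-(x_i+y_j)=\Delta_j-\Delta_i$, with $\Delta=\mathbf{x}-\mathbf{y}$.
\end{itemize}

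Which pairs fall in the second case depends on the instance. The resulting bound is therefore $1+\sum_{i\le j}\max\{0,\Delta_j-\Delta_i\}/\sum_{i\le j}(x_i+y_j)$, whose numerator is convex rather than linear. Reducing to step pairs $(\mathbf{1}^{k_x},\mathbf{1}^{k_y})$ thus needs either fixing the type partition first, or the convexity-plus-homogeneity argument of Lemma~\ref{lm:max_on_step_rays}. ``A linear-fractional function attains its max on an extreme ray'' is not enough on its own.

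Step~3 then becomes the explicit bound $1+2k_y(k_x-k_y)/(k_x(k_x+1)+k_y(k_y+1))\le 1+2p(1-p)/(1+p^2)\le\sqrt2$, with $p=k_y/k_x$. It is tight at $p=\sqrt2-1$, not $1/\sqrt2$. As written, your proposal does not establish the lemma.
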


For any permutation $\sigma \in \Omega$ consistent with $\sigma'$, we define the index of agent $A$ to be the number of agents from $C$ that precede $A$ in $\sigma$. If exactly $\ell$ agents from $C$ precede $A$, we write $ind(A)=\ell$. For indices $i,j$ with $i \le j$, we write $ind(\sigma)=(i,j)$ if $\{ind(A),ind(B)\}=\{i,j\}$. For each pair $(i,j)$ with $0 \le i \le j \le n$, there are exactly two permutations $\sigma_A(i,j)$ and $\sigma_B(i,j)$ consistent with $\sigma'$ such that $ind(\sigma_A(i,j)) = ind(\sigma_B(i,j)) = (i,j)$. We define $\sigma_A(i,j)$ to be the permutation in which $A$ precedes $B$, and $\sigma_B(i,j)$ to be the permutation in which $B$ precedes $A$. Then $\Omega(\sigma') = \bigcup_{0 \leq i \leq j \leq n} \{\sigma_A(i,j),\sigma_B(i,j)\}$.

Now, suppose that the agents in $\mathcal{C}$ are allowed to select items in a serial dictatorship without agents $A$ and $B$ and ordered by $\sigma' = (C_1,C_2,\dots,C_n)$. We denote by $z_{\ell}$ the item selected by agent $C_{\ell}$ in this setting. Let $\mathbf{x}=(x_0,\ldots,x_n)$ and $\mathbf{y}=(y_0,\ldots,y_n)$ be two vectors defined as follows. For each $\ell\in[n]$, $x_\ell$ and $y_\ell$ are respectively $A$'s favorite item and second-favorite item in
$M\setminus \{z_1, \dots, z_{\ell}\}$.

The following lemma summarizes the outcome of agents $A$ and $B$ under the two permutations $\sigma_A(i,j)$ and $\sigma_B(i,j)$. To simplify the notation, we will say that an agent "gets $x$ in $\sigma$" if they select the item $x$ when we run the serial dictatorship with permutation $\sigma$.

\begin{lm}\label{lm:z}
Let $(i,j)$ be any pair with $0 \le i \le j \le n$. Agent $A$ always selects $x_i$ in $\sigma_A(i,j)$, and either $x_j$ or $y_j$ in $\sigma_B(i,j)$. There are three cases:

\begin{enumerate}
    \item $B$ gets $x_i$ in $\sigma_B(i,j)$, $A$ gets $x_j$ in $\sigma_B(i,j)$. Then, the value of $B$'s item in $\sigma_A(i,j)$ is at most $v(x_j)$.
    \item $B$ gets $x_i$ in $\sigma_B(i,j)$, $A$ gets $y_j$ in $\sigma_B(i,j)$. Then, the value
    of $B$'s item in $\sigma_A(i,j)$ is at most $v(y_j)$.
    \item $B$ does not get $x_i$ in $\sigma_B(i,j)$. Then, the value of $B$'s item in $\sigma_A(i,j)$ is at most $v(x_j)$.
\end{enumerate}
\end{lm}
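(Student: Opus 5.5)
We argue directly from the definition of serial dictatorship, comparing the two runs with the run of the agents of $\mathcal{C}$ alone in the order $\sigma'$. Throughout we write $x_0,y_0$ for $A$'s favorite and second-favorite items in $M$ (the case $\ell=0$ of the definition). We prove the lemma in four steps.

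\textbf{Step 1: what $A$ gets in $\sigma_A(i,j)$.} In $\sigma_A(i,j)$, agent $A$ is preceded by exactly $C_1,\dots,C_i$, and neither $A$ nor $B$ picks before them. Hence these agents choose exactly $z_1,\dots,z_i$, the available set seen by $A$ is $M\setminus\{z_1,\dots,z_i\}$, and $A$ takes $x_i$.

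\textbf{Step 2: what $A$ gets in $\sigma_B(i,j)$.} In $\sigma_B(i,j)$, agent $B$ picks after $C_1,\dots,C_i$ and takes some item $w$. Then $C_{i+1},\dots,C_j$ pick, and then $A$. We invoke Lemma~\ref{lm:airplane} with prefix length $j$, viewing $\sigma_B(i,j)$ restricted to $\mathcal{C}\cup\{B\}$ as $B$ inserted into the prefix. It tells us that the $j+1$ items chosen before $A$ include $\{z_1,\dots,z_j\}$. Since exactly $j+1$ items are removed, the removed set is $\{z_1,\dots,z_j\}\cup\{u\}$ for a single extra item $u$. Therefore $A$'s favorite remaining item is $x_j$ if $u\neq x_j$, and $y_j$ if $u=x_j$.

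\textbf{Step 3: the key coupling.} In $\sigma_A(i,j)$, agent $B$ picks after $C_1,\dots,C_j$ and $A$, so it chooses from $M\setminus(\{z_1,\dots,z_j\}\cup\{x_i\})$. We must first confirm this: inserting $A$, who takes $x_i$, can perturb the later $C$'s, so again by Lemma~\ref{lm:airplane} the $j+1$ items taken before $B$ contain $z_1,\dots,z_j$, and $x_i$ is among them. If $x_i\in\{z_{i+1},\dots,z_j\}$, the removed set is exactly $\{z_1,\dots,z_j\}$ plus one more item. In every case, $B$'s item in $\sigma_A(i,j)$ lies outside $\{z_1,\dots,z_j\}$ and differs from $x_i$.

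\textbf{Step 4: the three cases.}
\begin{itemize}
\item[(1), (3)] Any item outside $\{z_1,\dots,z_j\}$ has $A$-value at most $v(x_j)$, since $x_j$ is $A$'s favorite in that set. This gives the bound in cases 1 and 3 immediately.
\item[(2)] Here $B$ gets $x_i$ in $\sigma_B(i,j)$ and $A$ gets $y_j$, so $u=x_i=x_j$. In $\sigma_A(i,j)$, agent $B$'s item avoids both $\{z_1,\dots,z_j\}$ and $x_i=x_j$. Hence its value is at most that of $A$'s second favorite outside $\{z_1,\dots,z_j\}$, namely $v(y_j)$.
\end{itemize}

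The main obstacle is Step 3: making sure $A$'s insertion in $\sigma_A(i,j)$ does not let $B$ access some $z_\ell$ with $\ell\le j$. Lemma~\ref{lm:airplane} handles this once we apply it to the prefix of length $j$ with $A$ as the inserted agent, and we must check its hypotheses (the inserted agent appears later in the base order), treating $B$ as positioned after that prefix.
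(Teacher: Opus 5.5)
Your handling of what $A$ receives in both permutations, and of cases 1 and 3, is correct. Case 2, however, has a genuine gap. You write ``$B$ gets $x_i$ \dots\ and $A$ gets $y_j$, so $u=x_i=x_j$.'' The step $u=x_i$ assumes that $B$'s item is the extra item beyond $\{z_1,\dots,z_j\}$. That holds only if $x_i\notin\{z_1,\dots,z_j\}$.

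We know $x_i$ avoids $z_1,\dots,z_i$, but it may equal some $z_m$ with $i<m\le j$. In that case $B$ takes $z_m$, and a displaced $C$ takes the extra item $u$. Then $u=x_j\neq x_i$, because $x_i\in\{z_1,\dots,z_j\}$ while $x_j\notin\{z_1,\dots,z_j\}$.

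Concretely, take $i=0$, $j=2$, items $\alpha,\beta,\gamma,\delta$, with $v(\alpha)=3$, $v(\beta)=2$, $v(\delta)=1$, $v(\gamma)=0$. Let $C_1$ rank $\alpha\succ\beta\succ\cdots$, $C_2$ rank $\gamma$ first, and $B$ rank $\alpha$ first. Then $z_1=\alpha$, $z_2=\gamma$, $x_0=\alpha$, $x_2=\beta$, $y_2=\delta$. In $\sigma_B(0,2)$, $B$ takes $\alpha=x_0$, $C_1$ takes $\beta$, $C_2$ takes $\gamma$, and $A$ gets $\delta=y_2$. So we are in case 2, yet $u=\beta\neq x_0$.

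In this sub-case your Step 3 only says that $B$'s item in $\sigma_A(i,j)$ avoids $\{z_1,\dots,z_j\}$ and one unidentified extra item. That yields the bound $v(x_j)$, not $v(y_j)$; in the example, nothing in your argument rules out $B$ taking $\beta$.

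The missing ingredient is a coupling between the two permutations, which your Step 3 does not actually perform despite its title. Suppose $B$ takes $x_i$ in $\sigma_B(i,j)$. Then in both $\sigma_A(i,j)$ and $\sigma_B(i,j)$:
\begin{itemize}
\item the first $i$ picks are $z_1,\dots,z_i$;
\item the agent in slot $i+1$ removes $x_i$;
\item $C_{i+1},\dots,C_j$ then face identical menus and make identical choices.
\end{itemize}
Hence the set available to $B$ in $\sigma_A(i,j)$ is exactly the set available to $A$ in $\sigma_B(i,j)$. From that set $A$ picked its favorite, $y_j$. So $B$'s item in $\sigma_A(i,j)$ is worth at most $v(y_j)$. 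This settles case 2 uniformly, with no need to locate $x_i$ relative to the $z_\ell$'s. This is the paper's argument. Your own reasoning is valid only in the sub-case $x_i\notin\{z_1,\dots,z_j\}$.
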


\begin{proof}
Consider first the permutation $\sigma_A(i,j)$. Agent $A$ picks their favorite item after the first $i$ agents of $\mathcal{C}$ have picked. By definition, $A$'s choice is $x_i$. Agent $B$ then picks when $j+1$ items have been removed (namely, the $j$ items chosen by the agents of $\mathcal{C}$ and the item chosen by $A$). By (i) in Lemma~\ref{lm:airplane}, all items in $\{z_1,\dots,z_j\}$ have already been removed at this point. Hence the value (according to $A$) of the item chosen by $B$ is at most $v(x_j)$.

We now turn to the permutation $\sigma_B(i,j)$. Agent $B$ picks after the first $i$ agents of $\mathcal{C}$, so either $B$ gets $x_i$, or the value of $B$'s item (according to $A$) is at most $v(y_i)$. Agent $A$ picks when $j+1$ items have been removed. Using Lemma~\ref{lm:airplane}, all items in $\{z_1,\dots,z_j\}$ have been removed, along with one additional item. Therefore, after removing $\{z_1,\dots,z_j\}$, the two most-preferred remaining items of $A$ are $x_j$ and $y_j$, so $A$ receives either $x_j$ or $y_j$.

This proves the first and third case. It remains to bound the value of $B$'s item in $\sigma_A(i,j)$ in the second case. If $B$ gets $x_i$ in $\sigma_B(i,j)$, then running the serial dictatorship with $\sigma_A(i,j)$ and with $\sigma_B(i,j)$ yields identical first $j+1$ items, because $A$ and $B$ select the same item when they appear for the first time after the first $i$ picks. Therefore, in the second case, if $A$ cannot get $x_j$ in $\sigma_B(i,j)$ and instead gets $y_j$, then the value of $B$’s item in $\sigma_A(i,j)$ is at most $v(y_j)$.
\end{proof}

We now partition the pairs $(i,j)$ into three types, corresponding to the three cases of Lemma~\ref{lm:z}. We define:
\begin{enumerate}
\item $T_{x,x}$, the set of all pairs $(i,j)$ such that $B$ gets $x_i$ in $\sigma_B(i,j)$ and $A$ gets $x_j$ in $\sigma_B(i,j)$.
\item $T_{x,y}$, the set of all pairs $(i,j)$ such that $B$ gets $x_i$ in $\sigma_B(i,j)$ and $A$ does not get $x_j$ in $\sigma_B(i,j)$.
\item $T_y$, the set of all pairs $(i,j)$ such that $B$ does not get $x_i$ in $\sigma_B(i,j)$.
\end{enumerate}

We overload our notation and say that $x_\ell$ (resp. $y_{\ell}$) is also the value of $x_\ell$ (resp. $y_{\ell}$) for agent $A$. Based on Lemma~\ref{lm:z} and the definition of the types, we directly get the following claim.

\begin{cl}\label{cl:Txx_Txy_T_y}
Let $(i,j)$ be a pair with $0 \le i \le j \le n$.
\begin{enumerate}
    \item If $(i,j)\in T_{x,x}$ then $v_A(\sigma_A(i,j))+v_A(\sigma_B(i,j)) = x_i+x_j$ and $v_B(\sigma_A(i,j))+v_B(\sigma_B(i,j))\le x_j + x_i$.
    \item If $(i,j)\in T_{x,y}$ then $v_A(\sigma_A(i,j))+v_A(\sigma_B(i,j)) = x_i+y_j$ and $v_B(\sigma_A(i,j))+v_B(\sigma_B(i,j))\le y_j + x_i$.
    \item If $(i,j)\in T_{y}$ then $v_A(\sigma_A(i,j))+v_A(\sigma_B(i,j))\ge x_i+y_j$ and 
$v_B(\sigma_A(i,j))+v_B(\sigma_B(i,j))\le x_j + y_i$. 
\end{enumerate}
\end{cl}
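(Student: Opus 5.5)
The plan is to check each of the three inequalities in Claim~\ref{cl:Txx_Txy_T_y} by combining the two facts Lemma~\ref{lm:z} gives about the pair $\sigma_A(i,j),\sigma_B(i,j)$. The first fact concerns $A$'s own item, and the second concerns the value (to $A$) of $B$'s item. In every case, $v_A(\sigma_A(i,j))=x_i$ by Lemma~\ref{lm:z}. The work is therefore to identify $v_A(\sigma_B(i,j))$, $v_B(\sigma_B(i,j))$ and a bound on $v_B(\sigma_A(i,j))$ from the definition of the type.

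\textbf{Case $T_{x,x}$.} Here $A$ gets $x_j$ in $\sigma_B(i,j)$, so the $A$-sum is exactly $x_i+x_j$. By definition of the type, $B$ gets $x_i$ in $\sigma_B(i,j)$, so $v_B(\sigma_B(i,j))=x_i$. Case~1 of Lemma~\ref{lm:z} gives $v_B(\sigma_A(i,j))\le x_j$, and adding these yields the $B$-sum bound.

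\textbf{Case $T_{x,y}$.} Again $B$ gets $x_i$ in $\sigma_B(i,j)$. Now $A$ does not get $x_j$, and Lemma~\ref{lm:z} says $A$ gets either $x_j$ or $y_j$, so it gets $y_j$. The $A$-sum is then $x_i+y_j$. Case~2 of Lemma~\ref{lm:z} gives $v_B(\sigma_A(i,j))\le y_j$, and together with $v_B(\sigma_B(i,j))=x_i$ this gives the $B$-sum bound.

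\textbf{Case $T_y$.} $A$ gets $x_j$ or $y_j$ in $\sigma_B(i,j)$. Since $y_j$ is $A$'s second favorite in $M\setminus\{z_1,\dots,z_j\}$, we have $x_j\ge y_j$, so $v_A(\sigma_B(i,j))\ge y_j$ and the $A$-sum is at least $x_i+y_j$. For $B$'s item in $\sigma_B(i,j)$, $B$ picks after exactly the first $i$ agents of $\mathcal{C}$, and these select $z_1,\dots,z_i$. So $B$ chooses from $M\setminus\{z_1,\dots,z_i\}$ and does not take $x_i$. Every item there other than $x_i$ has value at most $y_i$, so $v_B(\sigma_B(i,j))\le y_i$. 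Case~3 of Lemma~\ref{lm:z} gives $v_B(\sigma_A(i,j))\le x_j$, and summing gives the bound.

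The only point needing care is the $T_y$ bound on $v_B(\sigma_B(i,j))$. It uses that the available set for $B$ is exactly $M\setminus\{z_1,\dots,z_i\}$, since the prefix consists only of $C_1,\dots,C_i$, together with the tie-breaking convention that makes $x_i,y_i$ well defined as the top two items. This is also implicit in the proof of Lemma~\ref{lm:z}, so everything else is bookkeeping.
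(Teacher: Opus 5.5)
Your proposal is correct and takes the same approach as the paper, which simply asserts that the claim follows directly from Lemma~\ref{lm:z} and the definitions of $T_{x,x}$, $T_{x,y}$ and $T_y$. The one ingredient not in the statement of Lemma~\ref{lm:z}, namely $v_B(\sigma_B(i,j))\le y_i$ for pairs in $T_y$, is the observation made inside the paper's proof of that lemma, and you derive it correctly from the fact that $B$ chooses from $M\setminus\{z_1,\dots,z_i\}$ without taking $x_i$.
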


The next lemma bounds $\er(\Omega(\sigma'))$ by bounding the contribution of each permutation in $\Omega(\sigma')$ according to its type.

\begin{lm}\label{lem:3-bound}
    \[\ er(\Omega(\sigma')) \leq \frac{\sum_{(i,j) \in T_{x,x}}(x_i+x_j)+\sum_{(i,j) \in T_{x,y}}(x_i+y_j) +\sum_{(i,j) \in T_{y}}(y_i+x_j)}{\sum_{(i,j) \in T_{x,x}}(x_i+x_j)+ \sum_{(i,j) \in T_{x,y}}(x_i+y_j)+ \sum_{(i,j) \in T_{y}}(x_i+y_j)}.\]
\end{lm}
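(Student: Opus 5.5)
The plan is to write the conditional envy-ratio as a ratio of sums over the pairs $(i,j)$, and then bound the numerator from above and the denominator from below using Claim~\ref{cl:Txx_Txy_T_y}.

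Since $\hat\sigma$ is uniform on $\Omega$, conditioning on $\hat\sigma\in\Omega(\sigma')$ gives the uniform distribution on $\Omega(\sigma')$. The set $\Omega(\sigma')$ is the disjoint union of the two-element sets $\{\sigma_A(i,j),\sigma_B(i,j)\}$ over $0\le i\le j\le n$. These are genuinely two distinct permutations even when $i=j$, because $A$ and $B$ can be adjacent in either order. Therefore the common normalizing factor $1/|\Omega(\sigma')|$ cancels, and
\[
\er(\Omega(\sigma'))=\frac{\sum_{0\le i\le j\le n}\bigl(v_B(\sigma_A(i,j))+v_B(\sigma_B(i,j))\bigr)}{\sum_{0\le i\le j\le n}\bigl(v_A(\sigma_A(i,j))+v_A(\sigma_B(i,j))\bigr)}.
\]

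Next I split each sum according to the partition of pairs into $T_{x,x}$, $T_{x,y}$ and $T_y$; the three types are exhaustive and disjoint by definition. For each pair I apply the corresponding case of Claim~\ref{cl:Txx_Txy_T_y}.
\begin{itemize}
\item For the numerator, the claim's upper bounds give $x_i+x_j$ on $T_{x,x}$, $x_i+y_j$ on $T_{x,y}$ and $y_i+x_j$ on $T_y$. This matches the numerator in the statement.
\item For the denominator, the claim gives equality on $T_{x,x}$ and $T_{x,y}$ and the lower bound $x_i+y_j$ on $T_y$. This matches the denominator in the statement.
\end{itemize}
Increasing the numerator and decreasing the positive denominator can only increase the ratio, which yields the claimed inequality.

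The only point that needs care is positivity of the lower-bounding denominator, so that the inequality between ratios is legitimate. Every term $x_i+x_j$ or $x_i+y_j$ is at least $x_i\ge0$, and the pair $(0,j)$ (or $(0,0)$) contributes $x_0>0$. Here $x_0>0$ because $x_0$ is $A$'s favorite item overall and $A$ values some item positively by assumption. I do not expect a real obstacle: the lemma is bookkeeping on top of Claim~\ref{cl:Txx_Txy_T_y}, and the substantive work lies in bounding the resulting expression afterwards.
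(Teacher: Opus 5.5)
Your proposal is correct and follows the paper's proof essentially step for step. Like the paper, you use the uniform conditional distribution to write $\er(\Omega(\sigma'))$ as a ratio of sums over the pairs $\{\sigma_A(i,j),\sigma_B(i,j)\}$, split these by type, and apply Claim~\ref{cl:Txx_Txy_T_y} term by term; your explicit check that the lower-bounding denominator is positive (the $(0,0)$ term contributes at least $x_0>0$) is a detail the paper leaves implicit.
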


\begin{proof}
The idea is to partition $\Omega(\sigma')$ according to the types of its permutations and then apply the bounds from Claim~\ref{cl:Txx_Txy_T_y}. By the definition of the index of a permutation and of the types, we obtain the following partition: 

\[\Omega(\sigma') = \bigsqcup_{(i,j) \in T_{x,x}\sqcup T_{x,y}\sqcup T_y} \{\sigma_A(i,j),\sigma_B(i,j)\}.\]

Therefore, 
\begin{align*}
\er(\Omega(\sigma')) =  \frac{\mathbb{E}[v_B(\hat\sigma)|\hat\sigma\in \Omega(\sigma')]}{\mathbb{E}[v_A(\hat\sigma)|\hat\sigma\in \Omega(\sigma')]}
&= \frac{\sum_{\sigma \in \Omega(\sigma')}v_B(\sigma)}{\sum_{\sigma \in \Omega(\sigma')}v_A(\sigma)}\\
&= \frac{\sum_{0 \leq i \leq j \leq n}
\bigl(v_B(\sigma_A(i,j)) + v_B(\sigma_B(i,j))\bigr)}
{\sum_{0 \leq i \leq j \leq n}
\bigl(v_A(\sigma_A(i,j)) + v_A(\sigma_B(i,j))\bigr)}\\
&= \frac{\sum_{(i,j)\in T_{x,x}\sqcup T_{x,y}\sqcup T_y}
\bigl(v_B(\sigma_A(i,j)) + v_B(\sigma_B(i,j))\bigr)}
{\sum_{(i,j)\in T_{x,x}\sqcup T_{x,y}\sqcup T_y}
\bigl(v_A(\sigma_A(i,j)) + v_A(\sigma_B(i,j))\bigr)}
\end{align*}
Claim~\ref{cl:Txx_Txy_T_y} provides an upper bound on $v_B(\sigma_A(i,j)) + v_B(\sigma_B(i,j))$ and a lower bound on $v_A(\sigma_A(i,j)) + v_A(\sigma_B(i,j))$, depending on the type of $(i,j)$. Combining these bounds with the previous equality yields the desired result.
 \[\ er(\Omega(\sigma')) \leq \frac{\sum_{(i,j) \in T_{x,x}}(x_i+x_j)+\sum_{(i,j) \in T_{x,y}}(x_i+y_j) +\sum_{(i,j) \in T_{y}}(y_i+x_j)}{\sum_{(i,j) \in T_{x,x}}(x_i+x_j)+ \sum_{(i,j) \in T_{x,y}}(x_i+y_j)+ \sum_{(i,j) \in T_{y}}(x_i+y_j)}. \qedhere\]
\end{proof}

Recall that, by construction, \(\mathbf{x} = (x_0, \dots, x_n)\) and \(\mathbf{y} = (y_0, \dots, y_n)\) are non-increasing sequences and \(x_\ell\ge y_\ell\ge 0\) for all \(\ell\). 

We define the polyhedral cone $K$ of all the pairs $(\mathbf{x},\mathbf{y}) \in \mathbb{R}^{n+1} \times \mathbb{R}^{n+1}$ satisfying the conditions:
\begin{itemize}
    \item $(\mathbf{x},\mathbf{y}) \neq (0,0)$
    \item $x_0 \geq x_1 \geq \dots \geq x_n \geq 0,$
    \item $y_0 \geq y_1 \geq \dots \geq y_n \geq 0,$
    \item $x_\ell \geq y_\ell$ for all $\ell$ with $0 \leq \ell \leq n$.
\end{itemize}

First, observe that any instance of the house allocation problem corresponds to a pair $(\mathbf{x},\mathbf{y}) \in K$ for a fixed $A$, $B$ and fixed $\sigma' \in \Omega'$. Next, we define the following function on $K$.

\[\mathcal{U}(\mathbf{x},\mathbf{y})
:= 1+\frac{\sum_{0\le i\leq j\le n}\max\{0,\Delta_j-\Delta_i\}}
{\sum_{0\leq i\leq j\leq n}(x_i+y_j)},\] 
where $\Delta_\ell := x_\ell-y_\ell \geq 0$. The next lemma bounds $\er(\Omega(\sigma'))$ by $\mathcal{U}(\mathbf{x},\mathbf{y})$.

\begin{lm}\label{lem:erub}
$er(\Omega(\sigma'))\leq \mathcal{U}(\mathbf{x},\mathbf{y})$.
\end{lm}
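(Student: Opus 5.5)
Starting from Lemma~\ref{lem:3-bound}, we write the bound as $1+\frac{\text{numerator}-\text{denominator}}{\text{denominator}}$. In the $T_{x,x}$ and $T_{x,y}$ terms the numerator and denominator contributions coincide, so they cancel in the difference. For a pair $(i,j)\in T_y$ the difference is
\[(y_i+x_j)-(x_i+y_j)=(x_j-y_j)-(x_i-y_i)=\Delta_j-\Delta_i .\]
Hence
\[\er(\Omega(\sigma'))\le 1+\frac{\sum_{(i,j)\in T_y}(\Delta_j-\Delta_i)}{D},\]
where $D$ denotes the denominator of Lemma~\ref{lem:3-bound}. Each term of the numerator sum is at most $\max\{0,\Delta_j-\Delta_i\}$, so we may replace it by $\sum_{0\le i\le j\le n}\max\{0,\Delta_j-\Delta_i\}$, which is nonnegative. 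This replacement only increases the bound, since the denominator is positive.

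It remains to handle the denominator. We must show that $D\ge \sum_{0\le i\le j\le n}(x_i+y_j)$, and check that $D>0$. Termwise, $T_{x,y}$ and $T_y$ contribute exactly $x_i+y_j$. Each $T_{x,x}$ pair contributes $x_i+x_j\ge x_i+y_j$ because $x_j\ge y_j$. So $D\ge\sum_{i\le j}(x_i+y_j)$.

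For positivity, note that $x_0>0$ by the standing assumption that $A$ values some item positively, and this is also implicit in Lemma~\ref{lm:order}. The pair $(0,0)$ then contributes $x_0+y_0>0$. Since the fraction has a nonnegative numerator, replacing its positive denominator by a smaller positive quantity only increases it. This yields $\er(\Omega(\sigma'))\le\mathcal{U}(\mathbf{x},\mathbf{y})$.

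The main point requiring care is this sign and positivity bookkeeping. The manipulation is valid only because the excess numerator is nonnegative after passing to the $\max\{0,\cdot\}$ form, and the monotonicity argument in the denominator depends on that. Everything else is direct cancellation.
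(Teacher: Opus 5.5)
Your proposal is correct and follows the paper's proof essentially step for step: the paper also writes the bound as $1+(\tau_B-\tau_A)/(\lambda+\tau_A)$, bounds the $T_y$ excess $\sum_{T_y}(\Delta_j-\Delta_i)$ by $\sum_{0\le i\le j\le n}\max\{0,\Delta_j-\Delta_i\}$, and lowers the denominator using $x_j\ge y_j$ on $T_{x,x}$. Your explicit checks, that the enlarged numerator is nonnegative and that $\sum_{i\le j}(x_i+y_j)\ge x_0>0$, spell out a step the paper leaves implicit when it combines the two bounds.
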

\begin{proof}

Define $$ 
\lambda = \sum_{(i,j) \in T_{x,x}} (x_i + x_j) + \sum_{(i,j) \in T_{x,y}} (x_i + y_j),\;
\tau_A = \sum_{(i,j) \in T_y} (x_i + y_j),\;
\text{and }\tau_B = \sum_{(i,j) \in T_y} (y_i + x_j).$$

Then Lemma~\ref{lem:3-bound} can be restated as 

\[
\er(\Omega(\sigma')) 
\leq \frac{\lambda+\tau_B}{\lambda+\tau_A}=1+\frac{\tau_B-\tau_A}{\lambda+\tau_A}.\]
We first bound the numerator,
\begin{align*}
    \tau_B-\tau_A
    &=\sum_{(i,j) \in T_y} (y_i + x_j) - \sum_{(i,j) \in T_y} (x_i + y_j)\\
    &= \sum_{(i,j) \in T_y} [(x_j - y_j) - (x_i - y_i)]\\
    &=\sum_{(i,j) \in T_y} (\Delta_j - \Delta_i).\\
    & \leq \sum_{(i,j) \in T_y} \max \set{0,\Delta_j - \Delta_i}\\
    & \leq \sum_{0\leq i \leq j \leq n} \max \set{0,\Delta_j - \Delta_i}.
\end{align*}
For the first inequality, we remove the negative terms; for the second, we add nonnegative terms.

We now turn to bounding the denominator, 
\begin{align*}
    \lambda+\tau_A 
    &= \sum_{(i,j) \in T_{x,x}} (x_i + x_j) + \sum_{(i,j) \in T_{x,y}} (x_i + y_j)+\sum_{(i,j) \in T_y} (x_i + y_j)\\
    &\geq \sum_{0\leq i\leq j\leq n}(x_i+y_j).
\end{align*}
The inequality follows from the fact that $x_j \ge y_j$, and from the observation that $T_{x,x}$, $T_{x,y}$, and $T_y$ form a partition of the set of ordered pairs $\set{(i,j)\mid 0 \le i \le j \le n}$.

Putting everything together, we obtain the desired result.
\begin{align*}
\er(\Omega(\sigma')) &\leq  1+\frac{\sum_{0\leq i \leq j \leq n} \max \set{0,\Delta_j - \Delta_i}}{\sum_{0\leq i\leq j\leq n}(x_i+y_j)} = \mathcal{U}(\mathbf{x},\mathbf{y}). \qedhere
\end{align*}
\end{proof}

We will now prove that \[\max_{(\mathbf{x},\mathbf{y}) \in K}\mathcal{U}(\mathbf{x},\mathbf{y}) \leq \sqrt{2}.\]

We first show, using a convexity argument, that the function is maximized at one of the step functions $(\mathbf{1}^{k_x}, \mathbf{1}^{k_y})$, where
$n+1 \ge k_x \ge k_y \ge 0$, $k_x \ge 1$, and $\mathbf{1}^k$ denotes the vector $(1,\ldots,1,0,\ldots,0)$ with $k$ ones followed by $n+1-k$ zeros.

\begin{restatable}{lm}{lmmaxstep} \label{lm:max_on_step_rays}
Let $N:K \rightarrow \mathbb{R}$ be a convex function that is homogeneous with degree $1$.
Let $D:K \rightarrow \mathbb{R}_{> 0}$, be a linear function that is strictly positive on $K$. Then

\[\max_{(\mathbf{x},\mathbf{y}) \in K}\frac{N(\mathbf{x},\mathbf{y})}{D(\mathbf{x},\mathbf{y})} = \max_{\substack{n+1 \ge k_x \ge k_y \ge 0 \\ k_x \ge 1}} \frac{N(\mathbf{1}^{k_x},\mathbf{1}^{k_y})}{D(\mathbf{1}^{k_x},\mathbf{1}^{k_y})}.\]
\end{restatable}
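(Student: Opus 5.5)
The plan is to use the fact that $K$ is a pointed polyhedral cone whose extreme rays are exactly the step pairs $(\mathbf{1}^{k_x},\mathbf{1}^{k_y})$ with $n+1\ge k_x\ge k_y\ge 0$, $k_x\ge 1$. Then we combine homogeneity with convexity of $N$ and linearity of $D$ to push the maximum onto these rays. The inequality ``$\ge$'' is immediate, since each step pair lies in $K$. So only ``$\le$'' needs work, and it suffices to show $N(\mathbf{x},\mathbf{y})\le \rho\, D(\mathbf{x},\mathbf{y})$ for every $(\mathbf{x},\mathbf{y})\in K$, where $\rho$ denotes the right-hand maximum.

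First we decompose an arbitrary point of $K$ into step pairs with nonnegative coefficients. Write $\mathbf{x}=\sum_{k=1}^{n+1}(x_{k-1}-x_k)\mathbf{1}^{k}$ with the convention $x_{n+1}=0$, and similarly for $\mathbf{y}$. This layer-cake representation has nonnegative coefficients by monotonicity. The pairing must respect $k_x\ge k_y$, so instead we use the threshold decomposition. For $t>0$, let $k_x(t)=|\{\ell: x_\ell\ge t\}|$ and $k_y(t)=|\{\ell:y_\ell\ge t\}|$. Because both vectors are non-increasing, $\{\ell : x_\ell\ge t\}$ is a prefix, so its indicator is $\mathbf{1}^{k_x(t)}$, and likewise for $\mathbf{y}$. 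Because $x_\ell\ge y_\ell$, we get $k_x(t)\ge k_y(t)$. This gives
\[(\mathbf{x},\mathbf{y})=\int_0^{\infty}\bigl(\mathbf{1}^{k_x(t)},\mathbf{1}^{k_y(t)}\bigr)\,dt .\]
The integrand is piecewise constant, so this is a finite sum $\sum_s \mu_s(\mathbf{1}^{k_x^s},\mathbf{1}^{k_y^s})$ with $\mu_s>0$ and admissible indices. Terms with $k_x(t)=0$ are the zero pair and occur only for $t>x_0$, so we drop them. Every remaining term then satisfies $k_x\ge 1$.

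Next we apply convexity and degree-one homogeneity. A function with both properties is subadditive, since $N(u+w)=2N(\tfrac{u}{2}+\tfrac{w}{2})\le N(u)+N(w)$. Hence
\[N(\mathbf{x},\mathbf{y})\le\sum_s\mu_s N(\mathbf{1}^{k_x^s},\mathbf{1}^{k_y^s})\le \rho\sum_s \mu_s D(\mathbf{1}^{k_x^s},\mathbf{1}^{k_y^s})=\rho\, D(\mathbf{x},\mathbf{y}).\]
The middle step uses $D>0$ at the step pairs, and the last step uses linearity of $D$. Dividing by $D(\mathbf{x},\mathbf{y})>0$ gives the claim. This works even if some $N$ values are negative, because each term individually satisfies $N\le\rho D$.

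The main obstacle is the decomposition step, specifically making sure that the $\mathbf{x}$ and $\mathbf{y}$ layers are paired so that every summand is itself in $K$. The threshold-$t$ parametrization handles this, since a common threshold is applied to both vectors at once. The other technical point is a domain issue. $N$ is only defined on $K$, which excludes $0$, while subadditivity uses the midpoint $\tfrac{u}{2}+\tfrac{w}{2}$. This midpoint lies in $K$ because $K\cup\{0\}$ is convex and the sum of two nonzero vectors of $K$ is nonzero, as all their coordinates are nonnegative. Finally, the supremum over $K$ is attained as a maximum, because the bound $\rho$ is achieved at a step pair.
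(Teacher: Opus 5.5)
Your proposal is correct and follows essentially the same route as the paper: your threshold integral $\int_0^\infty(\mathbf{1}^{k_x(t)},\mathbf{1}^{k_y(t)})\,dt$ is exactly the paper's finite decomposition $\sum_{j}\alpha_j(\mathbf{1}^{k_x(j)},\mathbf{1}^{k_y(j)})$ with $\alpha_j=t_j-t_{j+1}$ over the distinct values $t_0>\dots>t_m=0$, followed by subadditivity of $N$ from convexity plus homogeneity and a termwise (mediant) bound. Your extra care tightens details the paper leaves implicit: you bound termwise via $N\le\rho D$, so negative values of $N$ are allowed (the paper's Observation~\ref{obs:fraction} assumes nonnegative numerators); you check that the midpoint $\tfrac{u+w}{2}$ stays in $K$ even though $0\notin K$; and you note that the maximum over $K$ is actually attained.
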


\begin{proof}
First, for all $(k_x,k_y)$ satisfying $k_x \geq 1$ and $n+1 \ge k_x \ge k_y \ge 0$, we have $(\mathbf{1}^{k_x},\mathbf{1}^{k_y}) \in K$ and therefore  
\[\frac{N(\mathbf{1}^{k_x},\mathbf{1}^{k_y})}{D(\mathbf{1}^{k_x},\mathbf{1}^{k_y})} \leq \max_{(\mathbf{x},\mathbf{y}) \in K}\frac{N(\mathbf{x},\mathbf{y})}{D(\mathbf{x},\mathbf{y})},\]

Now consider the other direction.
Fix $(\mathbf{x},\mathbf{y}) \in K$, and list the elements of the set
$\{x_0,x_1,\dots,x_n\} \cup \{y_0,y_1,\dots,y_n\} \cup \{0\}$ in decreasing order as
$t_0 > \cdots > t_m = 0$.
For each $j \in \{0,\dots,m-1\}$, define the integers
\[
k_x(j):=\bigl|\{i:\ x_i\ge t_j\}\bigr|,\qquad
k_y(j):=\bigl|\{i:\ y_i\ge t_j\}\bigr|.
\]
Because $\mathbf{x}$ and $\mathbf{y}$ are nonincreasing, the sets $\{ i : x_i \ge t_j \}$ and $\{ i : y_i \ge t_j \}$ are initial segments of the form $\{0,\dots,k_x(j)-1\}$ and $\{0,\dots,k_y(j)-1\}$, respectively.
By definition, $k_x$ and $k_y$ are nondecreasing functions of $j$. Moreover, since $\mathbf{x} \ge \mathbf{y}$, we have $k_x(j) \ge k_y(j)$ for all $j \in \{0,\dots,m-1\}$.

Set $\alpha_j:=t_j-t_{j+1}> 0$. We claim that 
\begin{equation}\label{eq:decomp}
(\mathbf{x},\mathbf{y})=\sum_{j=0}^{m-1} \alpha_j\bigl(\mathbf{1}^{k_x(j)},\mathbf{1}^{k_y(j)}\bigr).
\end{equation}

To prove this equation, for all $\ell\in \{0,\dots, n\}$, set $j_\ell$ such that $x_\ell = t_{j_\ell}$. Then, 
\begin{align*}
    \sum_{j=0}^{m-1} \alpha_j (\mathbf{1}^{k_x(j)})_\ell &= \sum_{j \geq j_{\ell}} \alpha_j\\
    & = \sum_{j \geq j_{\ell}} (t_j - t_{j+1})\\
    & = t_{j_\ell} - t_m\\
    &=  x_\ell.
\end{align*}
A similar telescopic sum proves the equality for $y_\ell$. 

Note that because $k_x(j) \geq k_y(j)$, and $k_x(j) \geq 1$ for all $j \in \{0,\dots, m-1\}$, all the pairs $(\mathbf{1}^{k_x(j)},\mathbf{1}^{k_y(j)}\bigr)$ appearing in Equation~(\ref{eq:decomp}) are in $K$. 

We now show that $N$ is subadditive. For any $z_1,z_2\in K$, convexity and homogeneity imply subadditivity because
\[
N(z_1+z_2)=2\,N\!\left(\frac{z_1+z_2}{2}\right)\le 2 \left(\frac{N(z_1)+N(z_2)}{2}\right) = N(z_1) + N(z_2).
\]
By induction, for any $m\in\mathbb{N}$ and any $z_1,\dots,z_m\in K$,
\[N\!\left(\sum_{j=1}^m z_j\right)\le \sum_{j=1}^m N(z_j).\]

Therefore 
\begin{align*}
 N(\mathbf{x},\mathbf{y}) &= N\big(\sum_{j=0}^{m-1} \alpha_j\bigl(\mathbf{1}^{k_x(j)},\mathbf{1}^{k_y(j)}\bigr)\big) \\ 
 &\leq \sum_{j = 0}^{m-1} N(\alpha_j(\mathbf{1}^{k_x(j)},\mathbf{1}^{k_y(j)}))\\
 &= \sum_{j = 0}^{m-1} \alpha_j N(\mathbf{1}^{k_x(j)},\mathbf{1}^{k_y(j)}).
\end{align*}
Here the last step uses homogeneity again. Using the linearity of $D$, we get
\begin{align*}
    D(\mathbf{x},\mathbf{y}) &= D\big(\sum_{j=0}^{m-1} \alpha_j(\mathbf{1}^{k_x(j)},\mathbf{1}^{k_y(j)})\big)\\
    &=  \sum_{j=0}^{m-1} \alpha_j D(\mathbf{1}^{k_x(j)},\mathbf{1}^{k_y(j)}).
\end{align*}
Moreover for all $j$, $\alpha_j > 0$ and because $(\mathbf{1}^{k_x(j)},\mathbf{1}^{k_y(j)}) \in K$, $D(\mathbf{1}^{k_x(j)},\mathbf{1}^{k_y(j)})> 0 $. Using Observation~\ref{obs:fraction}, we then obtain
\begin{align*}
    \frac{N(\mathbf{x},\mathbf{y})}{D(\mathbf{x},\mathbf{y})} &\leq \frac{\sum_{j = 0}^{m-1} \alpha_j N(\mathbf{1}^{k_x(j)},\mathbf{1}^{k_y(j)})}{\sum_{j=0}^{m-1} \alpha_j D(\mathbf{1}^{k_x(j)},\mathbf{1}^{k_y(j)})}\\
    & \leq \max_{j\in \{0,\dots, m-1\}} \frac{ N(\mathbf{1}^{k_x(j)},\mathbf{1}^{k_y(j)})}{ D(\mathbf{1}^{k_x(j)},\mathbf{1}^{k_y(j)})}.\\
    & \leq \max_{\substack{n+1 \ge k_x \ge k_y \ge 0 \\ k_x \ge 1}} \frac{N(\mathbf{1}^{k_x},\mathbf{1}^{k_y})}{D(\mathbf{1}^{k_x},\mathbf{1}^{k_y})}.
\end{align*}
This concludes the proof.
\end{proof}

\begin{proof}[Proof of Theorem~\ref{thm:upper}]
    We are now ready to prove Lemma~\ref{lm:c'upper_bound} and, by implication, Theorem~\ref{thm:upper}. Combining Lemmas~\ref{lm:order} and~\ref{lem:erub}, it suffices to prove that 
    \[\max_{(\mathbf{x},\mathbf{y}) \in K}\mathcal{U}(\mathbf{x},\mathbf{y}) \leq \sqrt{2}.\]

    We apply Lemma~\ref{lm:max_on_step_rays}, with $N(\mathbf{x},\mathbf{y}):= \sum_{0\leq i \leq  j \leq n} \max \set{0,\Delta_j - \Delta_i}$ where $\Delta_\ell = x_\ell - y_\ell$ for all $\ell \in \{0, \dots,n\}$, and $D(\mathbf{x},\mathbf{y}):= \sum_{0\leq i\leq j\leq n}(x_i+y_j)$. $D$ is linear and strictly positive on $K$ because we assumed that $A$ has a positive value for at least one item, and therefore $x_0>0$. For $N$, each term $\max\{0,\Delta_j-\Delta_i\}$ is the maximum of $0$ and an affine function and is therefore convex and positively $1$-homogeneous. Consequently, their sum $N(\mathbf{x},\mathbf{y})$ is convex and $1$-homogeneous. The conditions of Lemma~\ref{lm:max_on_step_rays} are satisfied with $\mathcal{U} = 1 + \frac{N}{D}$, therefore 
    \[\max_{(\mathbf{x},\mathbf{y}) \in K}\mathcal{U}(\mathbf{x},\mathbf{y})=  \max_{\substack{n+1 \ge k_x \ge k_y \ge 0 \\ k_x \ge 1}}\mathcal{U}(\mathbf{1}^{k_x},\mathbf{1}^{k_y}).\]
    
    We will now compute $\mathcal{U}(\mathbf{1}^{k_x},\mathbf{1}^{k_y})$ for $n+1\geq k_x\geq k_y\geq 0$ and $k_x\geq 1$. In this case 
    \[
    \Delta_\ell=\mathbf{1}^{k_x}_\ell-\mathbf{1}^{k_y}_\ell=
    \begin{cases}
    0, & \ell<k_y,\\
    1, & k_y\le \ell<k_x,\\
    0, & \ell\ge k_x,
    \end{cases}
    \]
    so $\max\{0,\Delta_j-\Delta_i\}=1$ if and only if $\Delta_i=0$ and $\Delta_j=1$, that is, when $i\in\{0,\dots,k_y-1\}$ and $j\in\{k_y,\dots,k_x-1\}$ with $i\le j$. Hence
    \begin{align*}
    N(\mathbf{1}^{k_x},\mathbf{1}^{k_y})
    &=\sum_{j=k_y}^{k_x-1}\#\{\,i\le j:\Delta_i=0\,\}\\
    &=\sum_{j=k_y}^{k_x-1} k_y\\
    &= k_y(k_x-k_y).
    \end{align*}
    Moreover,
    \begin{align*}
        D(\mathbf{1}^{k_x},\mathbf{1}^{k_y})
    &=\sum_{0\le i\le j\le n}(x_i+y_j)\\
    &=\sum_{i=0}^n x_i(n+1-i)+\sum_{j=0}^n y_j(j+1)\\
    &=\frac{k_x(2n+3-k_x)}{2}+\frac{k_y(k_y+1)}{2}.
    \end{align*}

    Therefore
    \[
    \mathcal{U}(\mathbf{1}^{k_x},\mathbf{1}^{k_y})
    =1+\frac{2k_y(k_x-k_y)}{k_x(2n+3-k_x)+k_y(k_y+1)}.
    \]
    Clearly, for fixed $k_x,k_y$, decreasing $n$ decreases the denominator and hence increases the ratio, so the maximum over admissible $n$ is attained at the smallest possible value, namely $n=k_x-1$. In that case,
    \[
    k_x(2n+3-k_x)=k_x(2(k_x-1)+3-k_x)=k_x(k_x+1),
    \]
    and therefore
    \[
    \mathcal{U}(\mathbf{1}^{k_x},\mathbf{1}^{k_y})
    \le 1+\frac{2k_y(k_x-k_y)}{k_x(k_x+1)+k_y(k_y+1)}.
    \]
    Let $p=\frac{k_y}{k_x}\in[0,1]$.
    \begin{align*}
        \frac{2k_y(k_x-k_y)}{k_x(k_x+1)+k_y(k_y+1)} &=\frac{2p(1-p)k_x^2}{(1+p^2)k_x^2+(1+p)k_x}\\
    &=\frac{2p(1-p)k_x}{(1+p^2)k_x+(1+p)}\\
    &\le \frac{2p(1-p)}{1+p^2},
    \end{align*}
    where the last inequality comes from $(1+p^2)k_x+(1+p)\ge (1+p^2)k_x$ and $k_x\ge 1$. 
    
    It remains to maximize $g(p):=\frac{2p(1-p)}{1+p^2}$ over $p\in[0,1]$. Observe that
    \[
    g(p)=\frac{2p(1-p)}{1+p^2}\le M
    \iff
    M(1+p^2)-2p(1-p)\ge0.
    \]
    Take \(M=\sqrt2-1\). Then
    \[
    (\sqrt2-1)(1+p^2)-2p(1-p)
    =(\sqrt2+1)\bigl(p-(\sqrt2-1)\bigr)^2\ge 0.
\]
Hence \(g(p)\le\sqrt2-1\) on \([0,1]\), with equality at \(p=\sqrt2-1\), for which the envy-ratio is at most
    \[1+\frac{2p(1-p)}{1+p^2} = \sqrt{2}. \qedhere\] 
\end{proof}

\subsection{A $\sqrt{2}$ Lower Bound on the Envy-Ratio}\label{sec:lower-single}

\citet*{feldman2024breaking} presented a sequence of instances showing that the envy-ratio of RSD is at least $\sqrt{2}$. For completeness, we present a similar result here with a slightly simpler construction, adapted from the sequence of examples in \citet*{feldman2024breaking}.

\begin{restatable}{thm}{lowerfeldman}\label{thm:lower}
\textnormal{\cite{feldman2024breaking}}%
\quad For any $\varepsilon>0$, there exists an instance for which $\er(\Omega) \ge \sqrt{2}-\varepsilon.$
\end{restatable}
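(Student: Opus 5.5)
The plan is to reverse-engineer the extremal case of the proof of Theorem~\ref{thm:upper}. There the bound $\sqrt2$ is attained at the step pair $(\mathbf{1}^{k_x},\mathbf{1}^{k_y})$ with $k_x=n+1$ and $k_y/k_x\to\sqrt2-1$. So I will build a family of instances with the following shape:
\begin{itemize}
\item $A$ values a few items at (essentially) $1$ and everything else at $0$;
\item after the first $\ell$ agents of $\mathcal{C}$ have picked, $A$'s favourite remaining item always has value $1$;
\item $A$'s second favourite has value $1$ only for $\ell<m$, where $m\approx(\sqrt2-1)(n+1)$;
\item $B$ behaves as in Table~\ref{tab:lb-54}, grabbing a valuable item that no agent of $\mathcal{C}$ wants.
\end{itemize}

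\emph{Construction.} Take items $a_1,\dots,a_m,b$, junk items $c_1,\dots,c_n$, and one further junk item $d$, so that $|M|=|N|=n+2$. Agent $A$ values $a_1>a_2>\dots>a_m>b$, all within $\varepsilon$ of $1$, and the junk at $0$. Agent $B$ ranks $b$ first and $d$ second. For $\ell\le m$, agent $C_\ell$ ranks $a_\ell$ first and $c_\ell$ second. For $\ell>m$, agent $C_\ell$ ranks $c_\ell$ first. The $C_\ell$ with $\ell>m$ only pad positions, and the remaining junk is irrelevant.

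\emph{Key steps.} First, I invoke Lemma~\ref{lm:order} in reverse. Every ordering of $\mathcal{C}$ gives the same outcome up to relabeling, so I may average over the $n+1$ positions $0\le i\le j\le n$ in the fixed order $(C_1,\dots,C_n)$. Second, I compute the four outcomes for each pair $(i,j)$.
\begin{itemize}
\item In $\sigma_A(i,j)$, agent $A$ takes $a_{i+1}$ if $i<m$ and $b$ otherwise, so it always gets value $\approx1$. The displaced $C_{i+1}$ falls back to $c_{i+1}$, so $B$ still gets $b$ (value $1$) exactly when $i<m$, and gets $d$ (value $0$) when $i\ge m$.
\item In $\sigma_B(i,j)$, agent $B$ takes $b$ first, so $A$ values $B$'s item at $1$ for every pair. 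Agent $A$, arriving at index $j$, finds an $a$-item left iff $j<m$, and otherwise gets $0$.
\end{itemize}
Third, I sum over the roughly $N^2/2$ pairs, where $N=n+1$ and $p=m/N$:
\[
\mathbb{E}[v_A]\propto \tfrac{N^2}{2}+\tfrac{m^2}{2}, \qquad \mathbb{E}[v_B]\propto \tfrac{N^2}{2}+\bigl(mN-\tfrac{m^2}{2}\bigr),
\]
with lower-order corrections. This gives
\[
\er(\Omega)\to\frac{1+2p-p^2}{1+p^2}=1+\frac{2p(1-p)}{1+p^2}.
\]
Finally, choosing $p\to\sqrt2-1$, $n\to\infty$ and $\varepsilon\to0$ gives ratio $\sqrt2-\varepsilon'$, using the computation of $g(p)$ already done.

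\emph{Main obstacle.} The step I expect to need care is verifying the case analysis of serial dictatorship in the construction. Specifically, I must check that when $A$ precedes $B$ with $i<m$, the cascade of displaced $C$'s never consumes $b$. This holds because each $C_\ell$ with $\ell\le m$ has a private fallback $c_\ell$ and does not rank $b$ above it. I must also check that boundary terms (pairs with $i=j$, and the $O(N)$ corrections) vanish in the limit. Tie-breaking by $\varepsilon$-perturbations must keep $A$'s choice of $a_{i+1}$ deterministic.
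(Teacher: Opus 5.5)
Your case analysis for the fixed order $(C_1,\dots,C_n)$ is correct and reproduces the paper's expression $1+\frac{2p(1-p)}{1+p^2}$. The gap is the step that turns it into a bound on $\er(\Omega)$, and for your instance that step fails. Lemma~\ref{lm:order} only gives $\er(\Omega)\le\max_{\sigma'}\er(\Omega(\sigma'))$. To reverse it, every ordering $\sigma'$ of $\mathcal{C}$ must yield the same conditional expectations. Your justification, ``every ordering of $\mathcal{C}$ gives the same outcome up to relabeling,'' is false here: $C_1,\dots,C_m$ each demand a specific item $a_\ell$ while $C_{m+1},\dots,C_n$ demand junk, so the agents of $\mathcal{C}$ are not interchangeable.

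For a general $\sigma'$, rerunning your own case analysis shows that the relevant threshold is not $m$ but $L$, the position in $\sigma'$ of the last of $C_1,\dots,C_m$. Agent $A$'s second-best remaining item is valuable exactly at indices below $L$, and $\er(\Omega(\sigma'))\approx\frac{1+2q-q^2}{1+q^2}$ with $q=L/(n+1)$. Under RSD the order $\sigma'$ is uniform. The last of $m\approx(\sqrt2-1)n$ randomly placed agents then lies at position $n-O(1)$; indeed $\mathbb{E}[n-L]<n/m$. So $q\to1$, and the envy-ratio of your instance actually tends to $1$. The ordering you analysed, with all $a$-demanding agents first, occurs with probability only $1/\binom{n}{m}$.

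The repair is to make the agents of $\mathcal{C}$ identical: each ranks all of $a_1,\dots,a_m$ above all junk items, with $b$ last. Then for every $\sigma'$ the first $m$ arrivals from $\mathcal{C}$ absorb the $a$-items. When $A$ takes an $a$-item early, the displaced agent is the $m$-th arrival rather than a specific $C_{i+1}$. Your four-case analysis goes through verbatim, $\er(\Omega(\sigma'))$ is the same for all $\sigma'$, and hence it equals $\er(\Omega)$. This is exactly the paper's construction: $k$ items of type $a$, identical agents $C_1,\dots,C_n$, and a case analysis on the relative order of $A$, $B$ and the $k$-th arriving $C$-agent $C^*$.

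The repair also fixes your item count. Your list has $n+m+2$ items, not $n+2$. With identical $C$'s, $n-m+1$ junk items suffice, giving a genuine house allocation instance with $|M|=|N|=n+2$.
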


\begin{proof}
Consider the following instance of the house allocation problem, consisting of agents $A$ and $B$, and a set $\mathcal{C}$ of $n$ other agents $C_1,\ldots,C_n$. The valuations of the agents are given in Table~\ref{tab:lb-sqrt2}. This instance resembles the lower bound instance of Table~\ref{tab:lb-54} presented in the Preliminaries section, except that it contains $k$ copies of items of type~$a$, $n-k+1$ copies of items of type~$c$, and $n$ copies of agent~$C$.
\begin{table}[ht]
    \centering
    \begin{tabularx}{0.65\textwidth}{ c | c c c | c | c c c }
           & \multicolumn{7}{c}{{\bf Item}} \\
           \cline{2-8}
        {\bf Agent} & $a_1$ & $\ldots$ & $a_k$ & $b$ & $c_1$ & $\ldots$ & $c_{n-k+1}$ \\
        \cline{1-8}
        {\tt A} & $1$ & $1$ & $1$ & $1-\varepsilon$ & $0$ & $0$ & $0$ \\
        {\tt B} & $1-\varepsilon$ & $1-\varepsilon$ & $1-\varepsilon$ & $1$ & $0$ & $0$ & $0$ \\
        {\tt C$_1$} & $1$ & $1$ & $1$ & $0$ & $1-\varepsilon$ & $1-\varepsilon$ & $1-\varepsilon$ \\
        {$\vdots$} &  & $\vdots$ & $ $ & $\vdots$ & $ $ & $\vdots$ & $ $ \\
        {\tt C$_n$} & $1$ & $1$ & $1$ & $0$ & $1-\varepsilon$ & $1-\varepsilon$ & $1-\varepsilon$ \\
    \end{tabularx}
    \caption{Valuations for RSD instance demonstrating a factor-$\sqrt{2}$ lower bound on the envy-ratio}
    \label{tab:lb-sqrt2}
\end{table}

We proceed with a case analysis based upon the relative order of arrival of agents~$A$, $B$, and the $k^{\text{th}}$ agent to arrive from the set $\mathcal{C}$, which we call agent~$C^*$. The six possible cases, along with their approximate probabilities of occurrence when both $n$ and $k$ are large, are listed in Table~\ref{tab:lb-sqrt2-cases}.

\begin{table}[ht]
    \centering
    \begin{tabularx}{0.82\textwidth}{c|ccc|c}
           & \multicolumn{3}{c|}{{\bf Agent}} & \\
           \cline{2-4}
        {\bf Relative Ordering} & {\tt A} & {\tt B} & {\tt C$^*$} & {\bf Probability} \\
        \cline{1-5}
        {\tt ABC$^*$} & type-$a$ item & $b$ & type-$c$ item & $\approx \frac{k^2}{2n^2}$ \\
        {\tt AC$^*$B} & type-$a$ item & $b$ & type-$c$ item & $\approx \frac{k}{n}(\frac{n-k}{n})$ \\
        {\tt BAC$^*$} & type-$a$ item & $b$ & type-$c$ item & $\approx \frac{k^2}{2n^2}$ \\
        {\tt BC$^*$A} & type-$c$ item & $b$ & type-$a$ item & $\approx \frac{k}{n}(\frac{n-k}{n})$ \\
        {\tt C$^*$AB} & $b$ & type-$c$ item & type-$a$ item & $\approx \frac{(n-k)^2}{2n^2}$ \\
        {\tt C$^*$BA} & type-$c$ item & $b$ & type-$a$ item & $\approx \frac{(n-k)^2}{2n^2}$ 
    \end{tabularx}
    \caption{Cases for RSD instance demonstrating a factor-$\sqrt{2}$ lower bound on the envy-ratio}
    \label{tab:lb-sqrt2-cases}
\end{table}

Let $X_A$ and $X_B$ be the random items obtained by agents $A$ and $B$ respectively. For $\varepsilon\rightarrow0$, the envy-ratio is bounded below by
\[
\frac{\mathbb{E}[v(X_B)]}{\mathbb{E}[v(X_A)]} \approx\frac{2\cdot\frac{k^2}{2n^2} + 2\cdot\frac{k(n-k)}{n^2} + \frac{(n-k)^2}{2n^2}}{2\cdot\frac{k^2}{2n^2} + \frac{k(n-k)}{n^2} + \frac{(n-k)^2}{2n^2}} 
= \frac{n^2-k^2+2kn}{n^2+k^2},
\]
which for large $n$ and $k\approx(\sqrt{2}-1)\cdot n$ reaches a maximum value approaching $\sqrt{2}$.
\end{proof}

Together, Theorems~\ref{thm:upper} and~\ref{thm:lower} give us our headline result (Theorem~\ref{thm:main}), which is that the envy-ratio of random serial dictatorship is exactly $\sqrt{2}$ in the house allocation problem.

\section{Beyond One Round: Additive Valuations}\label{sec:additive}

In this section, and in Section~\ref{sec:nonadditive},
we are interested in the more general setting in which the number of objects may exceed the number of agents, and the agents themselves have more general valuation functions rather than unit-demand functions. To model this setting, each agent is equipped with a combinatorial valuation function that assigns a value for every subset of items. The valuation functions are normalized (i.e., $v_i(\varnothing) = 0$) and monotone (i.e., nondecreasing with respect to set inclusion). Our results in Section~\ref{sec:additive} are for the case of additive valuations: the set function $v$ is {\em additive} if $v(S) = \sum_{j\in S}v(\{j\})$ for all $S\subseteq M$. In Section~\ref{sec:nonadditive}, we analyze more general valuation classes, namely submodular, XOS, and subadditive valuations. A set function $v$ is:
\begin{itemize}
    \item \emph{submodular}, if for any pair $S,T$ of item sets with $S\subseteq T\subseteq M$ and any item $j\in M$, $v(S\cup\{j\})-v(S) \geq v(T\cup\{j\})-v(T)$, i.e., every item has decreasing marginal values.
    \item \emph{XOS}, or \emph{fractionally subadditive}, if there exists a nonempty collection $v_1,\ldots,v_k$ of additive set functions such that for any $S\subseteq M$, $v(S) = \max_{i\in[k]}v_i(S)$.
    \item \emph{subadditive}, or \emph{complement-free}, if for any pair $S,T$ of item sets, $v(S) + v(T) \geq v(S\cup T)$.
\end{itemize}

\citet*{lehmann2001combinatorial} showed that these classes form the following hierarchy.
\begin{center}
    \tt additive $\subsetneq$ submodular $\subsetneq$ XOS $\subsetneq$ subadditive
\end{center}

Given an agent $A$, a set of items $S_A$, and a set of unallocated items $R$, the most-preferred remaining item of $A$ is the item that maximizes its \emph{marginal value} $v(S_A \cup {j}) - v(S_A)$ among all items $j \in R$. As in the RSD case, we assume that $A$ has a total order over $R$ (which may depend on $S_A$) that is coherent with the order induced by these marginal values, so that ties are broken according to a fixed rule and each agent has a unique most-preferred item among any subset of remaining items.

The two mechanisms that we analyze are the natural extensions of RSD to these more general settings. Both mechanisms assign objects in rounds until no objects remain. To simplify the exposition, we assume throughout that $|M|$ is a multiple of $|N|$, i.e. $|M|=r|N|$ and the mechanism terminates after exactly $r$ rounds\footnote{As in the RSD case, if $|M|$ is not a multiple of $|N|$, we can add dummy items that always contribute a marginal value of zero, and this does not affect any envy-ratio. All classes of valuation functions we consider—additive, submodular, XOS, and subadditive—are stable under the addition of such dummy items.}.

\paragraph{Iterated Random Serial Dictatorship.}
In the iterated-RSD mechanism, the items are allocated with $r$ iterations of the RSD mechanism. At the start of each round, a new permutation is drawn uniformly to allocate one item per agent. More precisely:
\begin{enumerate}[topsep=2pt,itemsep=0pt,parsep=2pt]
    \item While there remains an unallocated item:
    \begin{enumerate}[label=(\roman*),topsep=2pt,itemsep=0pt,parsep=2pt]
        \item Draw a permutation $\hat\sigma$ uniformly at random from $\Omega$.
        \item For $k = 1, \dots, |N|$, assign to agent $\hat\sigma(k)$ its most-preferred remaining item.
    \end{enumerate}
\end{enumerate}

\paragraph{Randomized Round-Robin.}
In the randomized round-robin mechanism~\cite{freeman2020best}, a single permutation is selected and remains fixed throughout the item allocation process. It operates as follows:
\begin{enumerate}[topsep=2pt,itemsep=0pt,parsep=2pt]
    \item Draw a permutation $\hat\sigma$ uniformly at random from $\Omega$.
    \item While there remains an unallocated item:
    \begin{enumerate}[label=(\roman*),topsep=2pt,itemsep=0pt,parsep=2pt]
    \item For $k = 1, \dots, |N|$, assign agent $\hat\sigma(k)$ its most-preferred remaining item.
    \end{enumerate}
\end{enumerate}

Since the randomized round-robin and iterated-RSD mechanisms select an item myopically, i.e., with the highest marginal value among remaining items, they naturally generalize to settings in which the agents do not have additive valuations. In particular, these mechanisms remain well-defined for submodular and subadditive valuations, allowing for a systematic study of the envy-ratio in these more general settings, which we do in Section~\ref{sec:nonadditive}.


In this section, for additive valuations, we 
bound the envy-ratio of both mechanisms. We first show 
that, for the iterated-RSD mechanism, the envy-ratio remains equal to $\sqrt{2}$ when the agents have additive valuations.

\thmirsdadd*

The lower bound of $\sqrt{2}$ follows from Theorem~\ref{thm:lower}, since any one-round instance of RSD with $|M| = |N|$ can be viewed as an additive instance of iterated-RSD.

For the upper bound, we use the following simple claim, which is a direct consequence of our main theorem, Theorem~\ref{thm:main}, and the fact that the sub-instance corresponding to the $k^\text{th}$ round is equivalent to an independent instance of RSD in a house allocation problem with unit demands.

\begin{cl}\label{cl:iRSD}
    During the execution of the iterated-RSD mechanism, for all $k$ with $1\leq k\leq r$, let $a(k)$ (resp. $b(k)$) be the random item that $A$ selects during round $k$. Then
    $\bb{E}[v(b(k))] \leq \sqrt{2} \cdot \bb{E}[v(a(k))]$.
\end{cl}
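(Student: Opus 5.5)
The plan is to condition on the history of the first $k-1$ rounds and reduce round $k$ to a single run of RSD, to which Theorem~\ref{thm:main} applies. Let $\mathcal{H}_{k-1}$ be the outcome of the first $k-1$ rounds, that is, the set of items each agent holds when round $k$ starts. Conditional on any realization $H$ of $\mathcal{H}_{k-1}$, the set $R_H$ of remaining items is determined, and it has exactly $|N|$ elements. Since valuations are additive, the marginal value of an item $j$ to an agent is just its singleton value $v(\{j\})$. So each agent's ranking over $R_H$ is its fixed ranking of items restricted to $R_H$, independent of what it already holds. The tie-breaking order may depend on $S_A$, but it is fixed once $H$ is fixed, so this causes no problem.

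The round-$k$ permutation is drawn uniformly and independently of $H$. Round $k$ is therefore exactly RSD on the house allocation instance with agents $N$, items $R_H$, and unit-demand values $v_X(\{j\})$, $j\in R_H$. Theorem~\ref{thm:main} (more precisely Theorem~\ref{thm:upper}, through Lemma~\ref{lm:order} and Lemma~\ref{lm:c'upper_bound}) then gives
\[
\mathbb{E}[v(b(k))\mid \mathcal{H}_{k-1}=H]\le \sqrt2\,\mathbb{E}[v(a(k))\mid \mathcal{H}_{k-1}=H].
\]

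The one subtlety is that the theorem was stated under the assumption that $A$ values some item positively, whereas in the subinstance $A$ might value every item of $R_H$ at $0$. In that case both sides are $0$, because $v(b(k))\ge 0$ and is bounded by $A$'s values on $R_H$, so the inequality still holds. When $A$ does value some item of $R_H$ positively, the argument of Section~\ref{sec:upper-single} goes through verbatim, as it only uses $x_0>0$. I would also note that the proof of Theorem~\ref{thm:upper} never uses the valuations of agents other than $A$ beyond their ordinal rankings, so restricting the instance to $R_H$ is harmless.

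Finally, I take expectations over $H$. Since the conditional inequality holds pointwise for every $H$, and both sides are linear in the conditioning distribution, the law of total expectation gives $\mathbb{E}[v(b(k))]\le\sqrt2\,\mathbb{E}[v(a(k))]$. I expect no real obstacle here. The only care needed is to verify the reduction: the conditional round is a genuine RSD instance with $|M|=|N|$, and the zero-value case is handled separately as above.
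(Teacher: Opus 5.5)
Your route is the same as the paper's: condition on the first $k-1$ rounds, note that with additive valuations and a freshly drawn permutation round $k$ is an independent unit-demand RSD run, apply the $\sqrt{2}$ bound, and average by the law of total expectation. Your handling of tie-breaking, of the case where $A$ values all of $R_H$ at zero, and of the fact that only the other agents' ordinal rankings matter is fine.

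There is, however, one incorrect step in the reduction. Since $|M|=r|N|$ and each round allocates exactly $|N|$ items, the set $R_H$ remaining at the start of round $k$ has $(r-k+1)|N|$ elements, not $|N|$. So for every $k<r$ the conditional round is \emph{not} a house allocation instance with $|M|=|N|$. Theorem~\ref{thm:main}, as stated for that problem, therefore does not apply directly. Your closing claim that round $k$ is ``a genuine RSD instance with $|M|=|N|$'' is false. You also cannot fix this by discarding the surplus items, because they change which items the agents in $\mathcal{C}$ and agent $B$ pick.

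The repair is to use the extension of Theorem~\ref{thm:upper} to unit-demand instances with $|M|\ge|N|$. The paper records this in the footnote after Theorem~\ref{thm:main}, and its one-line justification of the claim implicitly relies on it. Rather than citing the $|M|=|N|$ statement, you should check that this extension holds:
\begin{itemize}
\item Lemma~\ref{lm:airplane} and Lemma~\ref{lm:z} only use that each agent takes its top remaining item under a fixed order.
\item The vectors $\mathbf{x},\mathbf{y}$ are well defined because at least $|R_H|-n\ge 2$ items survive $z_1,\dots,z_n$.
\item Everything from Lemma~\ref{lem:3-bound} onward is algebra over $K$ that never refers to $|M|$.
\end{itemize}
With that substitution your argument is complete and matches the paper's.
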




We are now ready to prove the upper bound for the additive case.

\begin{proof}[Proof of Theorem~\ref{thm:irsdadd}.]
As in Claim~\ref{cl:iRSD}, for all $k$ with $1\leq k\leq r$, let $a(k)$ (resp. $b(k)$) be the random item that $A$ selects during the $k^\text{th}$ round, and $S(k) = \{a(1), \dots , a(k-1)\}$ be the random bundle of $A$ at the start of the $k^\text{th}$ round. Moreover let $S_A$ (resp $S_B$) be the random bundle of $A$ (resp of $B$) when the mechanism terminates. We have
\begin{align*}
    \bb{E}[v(B)] = \sum_{k=1}^r \bb{E}[b(k)] \leq \sum_{k=1}^r \sqrt{2} \cdot \bb{E}[a(k)] = \sqrt{2} \cdot \bb[v(A)].\qquad\qedhere
\end{align*}
\end{proof}

However, the picture is not as simple for randomized round-robin. Since the permutation is fixed at the start of the first round and does not vary between rounds, the rounds are no longer independent. In fact, in contrast with iterated-RSD, the envy-ratio no longer remains at $\sqrt{2}$, but increases to at least $3/2$ for randomized round-robin. On the positive side, we present a constant-factor upper bound for this mechanism, showing that the envy-ratio is at most $1+(1/\sqrt{2})$.

\thmrrradd*

The lower bound follows from a relatively simple 2-round example with the following structure. Both~$A$ and~$B$ have a value of $0$ in the second round in nearly every permutation in which $A$ appears before $B$, but $B$ obtains an item of high value in the second round in every permutation in which it appears before $A$ (which happens half the time). This example is presented in the following proof.

\begin{thm}\label{thm:loweraddRRR}
For any $\varepsilon>0$, there exists an instance for which the envy-ratio of randomized round-robin is at least $ 3/2-\varepsilon.$
\end{thm}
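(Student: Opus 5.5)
The plan is to exhibit an explicit family of two-round instances ($|M|=2|N|$) with additive valuations, parametrized by $\varepsilon$ and a number $n$ of auxiliary agents. The family should realize the pattern described just before the statement:
\begin{itemize}
\item whenever $A$ precedes $B$ in $\hat\sigma$, both $A$ and $B$ obtain (from $A$'s point of view) one good item in round~1 and a zero-valued item in round~2;
\item whenever $B$ precedes $A$, $A$ still gets one good item in round~1, but $B$ gets a good item in both rounds.
\end{itemize}
Since each event has probability $1/2$, this would give $\mathbb{E}[v_A(X_A)]\approx 1$ and $\mathbb{E}[v_A(X_B)]\approx \tfrac12\cdot 1+\tfrac12\cdot 2=\tfrac32$. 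Small perturbations by $\varepsilon$ in the valuations will be used to enforce the desired tie-breaking.

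Concretely, $A$ values a small set of ``good'' items at $1$ (or $1-\varepsilon$) and everything else at $0$. $B$ ranks these good items in a prescribed order so that it never takes the item $A$ would take in round~1. The auxiliary agents $C_1,\dots,C_n$ are filler agents. Each has a private dummy item it values most, so in round~1 they take only their dummies and never interfere. Their second-ranked items are chosen so that in round~2 they absorb the remaining good items before $A$ can reach them. This mirrors the mechanism in Table~\ref{tab:lb-54}, where $B$ ``does not select'' $A$'s item and thereby steers the remaining agents' picks.

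The computation then proceeds as in the proof of Theorem~\ref{thm:lower}. I will condition on the relative order of $A$, $B$ and the relevant auxiliary agent(s), tabulate each agent's picks in both rounds, and compute the probabilities of the cases, approximately as $n\to\infty$. Finally I will let $\varepsilon\to 0$ and $n\to\infty$ to get a ratio of at least $3/2-\varepsilon$.

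The main obstacle is the construction itself. Because the order is the same in both rounds, any auxiliary agent that grabs the leftover good item in round~2 before $A$ will typically also precede $B$ in round~2 in about half of the cases. That would destroy $B$'s second good item when $B$ precedes $A$. I would first try to decouple this by giving $B$ its second good item ``for free'': make it an item that only $A$ and $B$ value, so that the auxiliary agents absorb only items $A$ would want after $B$ has moved. If exact $1/2$-probabilities cannot be achieved, I would instead introduce a parameter $k$, as in Table~\ref{tab:lb-sqrt2}, controlling how many good items exist. I would then optimize the resulting rational function of $k/n$ and check that its supremum is at least $3/2$.
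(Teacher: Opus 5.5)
Your outline follows the paper's strategy, and your decoupling idea (a second good item for $B$ that only $A$ and $B$ value) is exactly the role of the paper's item $i_3$. However, the proposal never commits to an instance, and for an existence statement the instance is the whole proof. Your fallback to a parameter $k$ also shows you have not yet checked that the construction works.

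The missing ingredient is that $A$ and $B$ must rank the two secondary good items in \emph{opposite} orders, and the auxiliary agents must want only the one $A$ prefers. The paper uses items $i_1,i_2,i_3$, on which $v_A=(1,1-\varepsilon,1-2\varepsilon)$ and $v_B=(1,1-2\varepsilon,1-\varepsilon)$, with all other values $0$. It adds $n$ identical agents $C_\ell$ valuing $d_1,\dots,d_n$ at $1$, $i_2$ at $1-\varepsilon$, fillers $e_1,\dots,e_{n+1}$ at $1-2\varepsilon$, and $i_1,i_3$ at $0$.

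The crossing is forced. Suppose $A$, picking second in round~1, took $i_3$ instead of $i_2$. Then $i_2$ would survive and typically be grabbed in round~2 by a $C$ ahead of $B$, leaving $B$ with only one good item. Suppose instead $B$, picking second, took $i_2$ instead of $i_3$. Then $i_3$ would survive to round~2, no $C$ wants it, and $A$ (ahead of $B$) would take it. Your condition that $B$ ``never takes the item $A$ would take in round~1'' is therefore not the right one, since $B$ does take $i_1$ whenever it goes first. What matters is that $B$, when second, avoids the item the $C$'s will absorb.

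With this instance the case analysis is short:
\begin{itemize}
\item If $B$ precedes $A$, then $B$ receives $i_1$ and $i_3$, while $A$ receives $i_2$ plus a zero item.
\item If $A$ precedes $B$, then $A$ receives $i_1$ and $B$ receives $i_3$. In round~2 the first $C$ takes $i_2$, unless $A$ is first overall, which has conditional probability $2/(n+2)$.
\end{itemize}
So your first bullet holds only with high probability, not ``whenever''. Outside that event the halves are exact, so no optimization over $k$ is needed. The ratio tends to $(3-4\varepsilon)/(2-\varepsilon)$ as $n\to\infty$.
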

\begin{proof}
The lower bound of $3/2$ follows from a relatively simple example consisting of agents $A$ and $B$ and $n$ identical agents $C_1,\ldots,C_n$, and of $2(n+2)$ items for a total of two rounds.
The valuations of the agents are given in Table~\ref{tab:lb-32}.

\begin{table}[ht]
    \centering
    \begin{tabularx}{0.65\textwidth}{ c | c | c | c | c c c | c c c }
        & \multicolumn{9}{c}{{\bf Item}} \\
        \cline{2-10}
        {\bf Agent} & $i_1$ & $i_2$ & $i_3$ & $d_1$ & $\ldots$ & $d_n$ & $e_1$ & $\ldots$ & $e_{n+1}$ \\
        \cline{1-10}
        {\tt A} & $1$ & $1-\varepsilon$ & $1-2\varepsilon$ & $0$ & $0$ & $0$ & $0$ & $0$ & $0$ \\
        {\tt B} & $1$ & $1-2\varepsilon$ & $1-\varepsilon$ & $0$ & $0$ & $0$ & $0$ & $0$ & $0$ \\
        {\tt C$_1$} & $0$ & $1-\varepsilon$ & $0$ & $1$ & $1$ & $1$ & $1-2\varepsilon$ & $1-2\varepsilon$ & $1-2\varepsilon$ \\
        {$\vdots$} & $0$ & $1-\varepsilon$ & $0$ & $1$ & $1$ & $1$ & $1-2\varepsilon$ & $1-2\varepsilon$ & $1-2\varepsilon$ \\
        {\tt C$_n$} & $0$ & $1-\varepsilon$ & $0$ & $1$ & $1$ & $1$ & $1-2\varepsilon$ & $1-2\varepsilon$ & $1-2\varepsilon$ \\
    \end{tabularx}
    \caption{Valuations for randomized round-robin instance demonstrating a factor-$(3/2)$ lower bound}
    \label{tab:lb-32}
\end{table}

Observe that, in any permutation, the items in $\{d_1,\ldots,d_n\}$ are assigned to $C_1,\ldots,C_n$ in the first round. Take any permutation in which $B$ goes before $A$. Then $B$ selects $i_1$ in the first round and $A$ selects $i_2$. In the second round, $B$ selects the item~$i_3$, which is always available, but $A$ is always assigned an item of zero value. Take any permutation in which $A$ goes before $B$.
Then $A$ selects $i_1$ in the first round and $B$ selects $i_3$. Furthermore, with high probability, 
the permutation chosen by randomized round-robin
has an agent from the set $\{C_1,\ldots,C_n\}$ appearing first. In that case, that agents selects the item~$i_2$ in the second round, before agent~$A$ and agent~$B$ select their second round items. This leads to both $A$ and $B$ receiving an item of zero value. Let $X_A$ and $X_B$ respectively denote the bundles obtained by agents~$A$ and~$B$. Then, the envy-ratio is at least
\[
\frac{\mathbb{E}[v(X_B)]}{\mathbb{E}[v(X_A)]} \approx \frac{\frac12(1+1-2\varepsilon)+\frac12(1-2\varepsilon)}{\frac12(1-\varepsilon)+\frac12(1)} = \frac{3-4\varepsilon}{2-\varepsilon},
\]
which for large $n$ and $\varepsilon\rightarrow0$ gives a lower bound of $3/2$.

\end{proof}

Next, we present our factor-$1+(1/\sqrt{2})$ upper bound on the envy-ratio of randomized round-robin.

\begin{restatable}{thm}{thmupperaddrrr}\label{thm:upperaddRRR}
    For additive valuations, the envy-ratio of randomized round-robin is at most $1+\frac{1}{\sqrt{2}}$.
\end{restatable}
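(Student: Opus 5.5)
The natural target is the split $1+\frac1{\sqrt2} = 1 + \frac12\sqrt2$. The first round should contribute the $\sqrt2$ factor via Theorem~\ref{thm:main}, and a pairing argument in the style of Proposition~\ref{prop:simpleupper} should control all later rounds. Write $a(k)$ and $b(k)$ for the items $A$ and $B$ pick in round $k$ under the common random permutation $\hat\sigma$, and $v$ for $A$'s additive valuation. Then $\mathbb{E}[v(X_B)] = \mathbb{E}[v(b(1))] + \sum_{k\ge 2}\mathbb{E}[v(b(k))]$, and similarly for $A$. The first round of randomized round-robin is exactly one run of RSD (with $|M|>|N|$ and unit-demand picking), so by Theorem~\ref{thm:main} and the remark following it, $\mathbb{E}[v(b(1))]\le \sqrt2\,\mathbb{E}[v(a(1))]$.

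For later rounds I will use a shifted comparison. Consider the sequence of picks as one long serial dictatorship in which the order $\hat\sigma$ is repeated $r$ times. If $A$ precedes $B$ in $\hat\sigma$, then $A$'s pick in round $k$ happens before $B$'s pick in round $k$, so $v(a(k))\ge v(b(k))$. If $B$ precedes $A$, then $A$'s pick in round $k-1$ occurs before $B$'s pick in round $k$, so $v(a(k-1))\ge v(b(k))$ for $k\ge2$. Summing over rounds gives, for every permutation,
\[
\sum_{k\ge 2} v(b(k)) \le \sum_{k\ge 1} v(a(k)) - \mathbf{1}[A\prec B]\, v(a(1)) - \mathbf{1}[B\prec A]\, v(a(r)).
\]
This alone yields a bound of roughly $1+\sqrt2$, which is too weak. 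I therefore need to exploit the symmetry of swapping $A$ and $B$ in $\hat\sigma$: averaging a permutation with its swap, in one of the two the tail $\sum_{k\ge2}v(b(k))$ is dominated by $A$'s tail $\sum_{k\ge2}v(a(k))$, while in the other it is dominated by $A$'s full bundle minus its last item. Combining these with the first-round bound, I aim for an inequality of the form
\[
\mathbb{E}[v(X_B)] \le \sqrt2\,\mathbb{E}[v(a(1))] + \tfrac12\bigl(\mathbb{E}[v(X_A)] - \mathbb{E}[v(a(1))]\bigr) + \tfrac12\,\mathbb{E}[v(X_A)] ,
\]
which needs refinement to reach $\bigl(1+\tfrac1{\sqrt2}\bigr)\mathbb{E}[v(X_A)]$.

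The main obstacle is this combination step. The naive bound lets $\mathbb{E}[v(a(1))]$ carry the full coefficient $\sqrt2$ while the later rounds contribute close to one full copy of $A$'s bundle. To reach the target, I would redo the $\sqrt2$ analysis of Section~\ref{sec:rsd} jointly with the tail. Conditioning on the order of $\mathcal C$ via Lemma~\ref{lm:order}, I would pair $\sigma_A(i,j)$ with $\sigma_B(i,j)$ and note where the first-round loss arises. In the $T_y$ case of Claim~\ref{cl:Txx_Txy_T_y}, $A$ in $\sigma_B(i,j)$ is the agent that lost value, but then $A$'s first-round item is already bounded by $x_j$, and the tail comparison $v(b(k))\le v(a(k-1))$ only charges $A$'s rounds $\le r-1$. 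I would then attempt a weighted bound $\mathbb{E}[v(X_B)]\le \sqrt2\,\alpha + \frac{1}{\sqrt2}\cdot(\text{something}) + \ldots$, with the optimization in $p=k_y/k_x$ adjusted so the extremal ratio becomes $1+1/\sqrt2$. Concretely, I would first try the simpler averaged estimate $\mathbb{E}\bigl[\sum_{k\ge2} v(b(k))\bigr]\le \mathbb{E}[v(X_A)] - \tfrac12\mathbb{E}[v(a(1))]$ and check whether the $1/\sqrt2$ coefficient then emerges from balancing it against Theorem~\ref{thm:main}. Failing that, I would reopen the cone argument of Lemma~\ref{lm:max_on_step_rays} with the additional tail terms.
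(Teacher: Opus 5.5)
\textbf{There is a genuine gap.} Your ingredients are right. Your tail charging ($b(k)$ against $a(k)$ when $A\prec B$, against $a(k-1)$ when $B\prec A$) is exactly the paper's accounting. But you never close the argument, and the inequality you write down only yields $(\sqrt2-\tfrac12)\mathbb{E}[v(a(1))]+\mathbb{E}[v(X_A)]\le(\sqrt2+\tfrac12)\,\mathbb{E}[v(X_A)]$.

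The information is lost when you average the first-round terms into the unconditional $\mathbb{E}[v(a(1))]$. Write $\alpha=\mathbb{E}[v(a(1))\mid A\prec B]$ and $\beta=\mathbb{E}[v(a(1))\mid B\prec A]$. Averaging your per-permutation bound without collapsing these gives
\[
\mathbb{E}[v(X_B)]\le \mathbb{E}[v(b(1))]+\tfrac12\beta+\sum_{k\ge 2}\mathbb{E}[v(a(k))].
\]
So the theorem reduces to $\mathbb{E}[v(b(1))]+\tfrac12\beta\le\bigl(1+\tfrac1{\sqrt2}\bigr)\cdot\tfrac{\alpha+\beta}{2}$. Theorem~\ref{thm:main} alone cannot deliver this. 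It caps $\mathbb{E}[v(b(1))]$ at $\sqrt2\cdot\tfrac{\alpha+\beta}{2}$ regardless of how $\alpha$ and $\beta$ compare, and when $\beta$ is close to $\alpha$ that leaves about $(\sqrt2+\tfrac12)\cdot\tfrac{\alpha+\beta}{2}$. The ``simpler averaged estimate'' you planned to try first is precisely the lossy step. It replaces $\tfrac12\alpha$ by the smaller $\tfrac12\cdot\tfrac{\alpha+\beta}{2}$, and is too weak to reach the target even when combined with the bound below.

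The missing idea is a second, elementary bound on round~$1$, in the spirit of Proposition~\ref{prop:simpleupper}: $\mathbb{E}[v(b(1))]\le\alpha$. On every permutation with $A\prec B$ we have $v(b(1))\le v(a(1))$. On a permutation with $B\prec A$, swapping $A$ and $B$ leaves the same available set at that position, so $B$'s first-round item is worth at most $A$'s first-round item in the swapped permutation. Hence the quantity to bound is at most $\min\{\tfrac{\sqrt2}{2}(\alpha+\beta)+\tfrac12\beta,\ \alpha+\tfrac12\beta\}$. Normalizing $\alpha+\beta=2$, this equals $\min\{1+\sqrt2-\tfrac{\alpha}{2},\ 1+\tfrac{\alpha}{2}\}\le 1+\tfrac1{\sqrt2}$, with equality at $\alpha=\sqrt2$.

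The two bounds are complementary. Theorem~\ref{thm:main} suffices when $\alpha\ge\sqrt2\cdot\tfrac{\alpha+\beta}{2}$, and the pairing bound suffices when $\alpha\le\sqrt2\cdot\tfrac{\alpha+\beta}{2}$. So Theorem~\ref{thm:main} can be used as a black box, and you do not need to redo the $T_y$ analysis or reopen the cone argument of Lemma~\ref{lm:max_on_step_rays}.
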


Fix two distinct agents $A$ and $B$. Let $\Omega_A$ (resp. $\Omega_B$) be the set of permutations in which $A$ appears before $B$ (resp. $B$ appears before $A$). As before, we let $v$ denote the valuation of agent $A$, and let $\hat\sigma\in\Omega$ denote the permutation selected by randomized round-robin at the start of its execution. For all $k \in \{1, \dots, r\}$, let $a_\sigma(k)$ (resp. $b_\sigma(k)$) be the item that $A$ (resp. $B$) selects in round $k$.

First, we require the following simple observation.

\begin{obs}\label{obs:RR-late}
     For all $k \in \{1, \dots, r\}$, if $\sigma \in \Omega_A$ then $v(b_\sigma(i))\leq v(a_\sigma(i))$. For all $k \in \{1, \dots, r-1\}$, if $\sigma \in \Omega_B$ then $v(b_\sigma(i+1))\leq v(a_\sigma(i))$.
\end{obs}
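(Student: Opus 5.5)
The plan is a direct availability argument under a fixed permutation $\sigma$, using that $A$'s valuation is additive. Under additivity, the item an agent picks at its turn is simply the available item of highest $v$-value, so it suffices to show that the item $B$ takes was still available to $A$ at some earlier turn of $A$. The index in the statement is written as $i$ but quantified as $k$; we read both as the same round index.

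\emph{First claim ($\sigma\in\Omega_A$).} Fix a round $k$. In round $k$, $A$ moves before $B$, and the remaining items only shrink between $A$'s $k$-th turn and $B$'s $k$-th turn. Hence $b_\sigma(k)$ was still unallocated when $A$ made its $k$-th pick.

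\emph{Second claim ($\sigma\in\Omega_B$).} Fix $k\le r-1$. Now $A$'s $k$-th turn comes after $B$'s $k$-th turn but before $B$'s $(k+1)$-th turn, since the same order is reused in each round. The item $b_\sigma(k+1)$ is still available at $B$'s $(k+1)$-th turn. It was therefore available at $A$'s $k$-th turn, which comes strictly earlier in the global sequence of picks.

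In both cases the conclusion follows from $A$'s greedy choice. For $A$ with additive $v$, the marginal value of item $j$ given its current bundle is $v(\{j\})$, independent of the bundle. So $A$'s pick $a_\sigma(k)$ maximizes $v(\{j\})$ over the items available at that moment. Since the relevant $B$-item lies in that available set, we get $v(a_\sigma(k))\ge v(b_\sigma(k))$ in the first case and $v(a_\sigma(k))\ge v(b_\sigma(k+1))$ in the second.

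The argument is a short availability check. The only care needed is to respect the global ordering of turns across rounds, which is simple here precisely because the permutation is fixed throughout the execution. The claim also relies on additivity: for non-additive valuations, $A$'s marginal values would depend on its current bundle and the greedy comparison would not go through.
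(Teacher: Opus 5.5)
Your proof is correct. The paper states this observation without proof, and your availability argument is exactly the intended justification: $B$'s relevant item is still unallocated at $A$'s $k$-th turn, which comes before $B$'s $k$-th pick when $\sigma\in\Omega_A$ and before $B$'s $(k+1)$-th pick when $\sigma\in\Omega_B$, and additivity makes $A$'s greedy pick maximize singleton value. Reading the index $i$ as the round index $k$ is also the right interpretation.
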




     

The next lemma provides a lower bound on what $A$ receives in the first round of randomized round-robin, based on a combination of what $B$ receives in the first round and what $B$ receives in the second round when it appears before $A$ in the random permutation. Note that we consider only the values of individual items.

\begin{restatable}{lm}{RRearly}\label{lm:RR-early} 
     \begin{align*}
        \frac12\bb{E}[v(b_{\hat \sigma}(1))|\hat \sigma \in \Omega_A]\quad+\quad &\frac12\bb{E}[v(b_{\hat \sigma}(1))|{\hat \sigma}\in \Omega_B]\quad+\quad \frac12\bb{E}[v(b_{\hat \sigma}(2))|{\hat \sigma}\in \Omega_B]&\\
        \leq\quad &\left(1+\frac{1}{\sqrt{2}}\right)\bb{E}[v(a_{\hat \sigma}(1))|{\hat \sigma}\in \Omega].
     \end{align*}
\end{restatable}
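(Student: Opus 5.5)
The plan is to \emph{redo the single-round analysis of Section~\ref{sec:rsd} with one extra term} in the numerator, rather than combining Theorem~\ref{thm:main} with Observation~\ref{obs:RR-late}. That naive combination only gives $\sqrt2+\frac12$, because it bounds the two pieces separately.

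\textbf{Step 1: rewrite the left-hand side.} Since $\PR(\Omega_A)=\PR(\Omega_B)=\frac12$, the left-hand side equals
\[
\bb{E}[v(b_{\hat\sigma}(1))]+\bb{E}\bigl[v(b_{\hat\sigma}(2))\mathbbm{1}_{\hat\sigma\in\Omega_B}\bigr].
\]
First-round choices under additive valuations are exactly RSD choices with unit-demand values. So, as in Lemma~\ref{lm:order}, I fix an order $\sigma'$ of $\mathcal{C}$ and use Observation~\ref{obs:fraction} to reduce to permutations in $\Omega(\sigma')$. Here $x_\ell,y_\ell$ are defined as before from the first-round picks $z_1,\dots,z_n$ of $\mathcal{C}$.

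\textbf{Step 2: bound B's second-round item in $\sigma_B(i,j)$.} By Lemma~\ref{lm:airplane}, when $A$ picks in round one of $\sigma_B(i,j)$, the items $z_1,\dots,z_j$ and $B$'s item are already gone, and $A$ takes $x_j$ or $y_j$.
\begin{itemize}
\item If $A$ takes $x_j$, then every item left afterwards has value at most $y_j$.
\item If $A$ takes $y_j$, then $x_j$ was already taken, and again every item left afterwards has value at most $y_j$.
\end{itemize}
Items only disappear over time, so $v(b_{\sigma_B(i,j)}(2))\le y_j$.

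\textbf{Step 3: per-pair bounds and the new ratio.} Adding $y_j$ to the upper bounds of Claim~\ref{cl:Txx_Txy_T_y}, and keeping the same lower bounds on $A$'s first-round values, the argument of Lemmas~\ref{lem:3-bound} and~\ref{lem:erub} gives
\[
\frac{\text{LHS}}{\bb{E}[v(a_{\hat\sigma}(1))\mid \Omega(\sigma')]}\le 1+\frac{\sum_{i\le j}\max\{0,\Delta_j-\Delta_i\}+\sum_{i\le j}y_j}{\sum_{i\le j}(x_i+y_j)}.
\]
The new numerator is a convex, $1$-homogeneous function plus a linear function, so it is still convex and $1$-homogeneous. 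Lemma~\ref{lm:max_on_step_rays} therefore applies, and the maximum over $K$ is attained at some $(\mathbf{1}^{k_x},\mathbf{1}^{k_y})$.

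\textbf{Step 4: evaluate on step functions.} Here $\sum_{i\le j}y_j=k_y(k_y+1)/2$. Taking the worst case $n=k_x-1$, as in the proof of Theorem~\ref{thm:upper}, the extra ratio is
\[
\frac{2k_xk_y-k_y^2+k_y}{k_x^2+k_y^2+k_x+k_y}.
\]
I split it as a mediant of two ratios.
\begin{itemize}
\item The quadratic part, with $p=k_y/k_x\in[0,1]$, is $\frac{2p-p^2}{1+p^2}$. Setting it equal to $M$ gives $(M+1)p^2-2p+M=0$. A real root exists only when $M^2+M\le 1$, so the quadratic part is at most $\frac{\sqrt5-1}{2}<\frac1{\sqrt2}$.
\item The linear part is $\frac{k_y}{k_x+k_y}\le\frac12<\frac1{\sqrt2}$.
\end{itemize}
By Observation~\ref{obs:fraction}, the mediant is at most $\frac1{\sqrt2}$, which yields the factor $1+\frac1{\sqrt2}$. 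The constant is not tight at this step, so the final estimate has some slack.

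\textbf{Main obstacle.} The delicate point is Step 2: the bound $v(b(2))\le y_j$ must hold in every case of Lemma~\ref{lm:z}, including $T_y$. I expect it does, since the argument there never used what $B$ took in round one, only what $A$ took.
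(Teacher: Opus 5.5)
Your proof is correct, and it takes a genuinely different route from the paper's.

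The paper never reopens the cone analysis. It first uses Observation~\ref{obs:RR-late} to replace the second-round term by $\frac12\mathbb{E}[v(a_{\hat\sigma}(1))\mid\hat\sigma\in\Omega_B]$. It then plays two global bounds against each other, writing $\alpha,\beta$ for $A$'s expected first-round value on $\Omega_A,\Omega_B$:
\begin{itemize}
\item Theorem~\ref{thm:main}, used as a black box, gives the bound $\frac{\sqrt2}{2}\alpha+\frac{1+\sqrt2}{2}\beta$.
\item The swap-pairing bound $\mathbb{E}[v(b_{\hat\sigma}(1))]\le\alpha$ gives the bound $\alpha+\frac12\beta$.
\end{itemize}
Maximizing the minimum of the two over $\alpha$ gives exactly $1+\frac1{\sqrt2}$. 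So the paper is not the naive $\sqrt2+\frac12$ combination you mention; the pairing bound is what lowers that to $1+\frac1{\sqrt2}$.

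You instead fold the extra term into the per-$\sigma'$ analysis. You use the pointwise bound $v(b_{\sigma_B(i,j)}(2))\le y_j$. It is indeed case-independent, as you suspected, and it is sharper than the paper's $v(a_{\sigma_B(i,j)}(1))$ whenever $A$ takes $x_j$ in round one of $\sigma_B(i,j)$. Since the added term $\sum_{i\le j}y_j$ is linear, Lemma~\ref{lm:max_on_step_rays} still applies. Your step-function formula and the mediant split (at most $\frac{\sqrt5-1}{2}$ and at most $\frac12$) both check out.

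What the paper's route buys is modularity and brevity: it needs no new structural analysis. What yours buys is a strictly better constant, $\frac{1+\sqrt5}{2}\approx 1.618$ instead of $1+\frac1{\sqrt2}\approx1.707$. The proof of Theorem~\ref{thm:upperaddRRR} uses only this lemma and Observation~\ref{obs:RR-late}, yielding $\max\{c,1\}$ for the constant $c$ here. Your version would therefore lower the additive randomized round-robin upper bound to $\frac{1+\sqrt5}{2}$, narrowing the gap to the $\frac32$ lower bound.
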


\begin{proof}
First, we split the expected value of the item that $B$ selects in the first round into two parts, depending on the relative positions of $A$ and $B$.
\[
\frac12\bb{E}[v(b_{\hat\sigma}(1))\mid \hat \sigma \in\Omega_A]
+\frac12\bb{E}[v(b_{\hat\sigma}(1))\mid \hat \sigma \in\Omega_B]
=\bb{E}[v(b_{\hat\sigma}(1)) \hat \sigma \in \mid\Omega].
\]

Even when $B$ is first, the item that $A$ selects in the first round has a higher value than the item that $B$ selects in the second round. This corresponds to the second point of Observation~$\ref{obs:RR-late}$ for the first round. Taking the expectation over $\Omega_{B}$, we get
\[
\bb{E}[v(b_{\hat\sigma}(2))\mid \hat \sigma \in\Omega_B]
\le \bb{E}[v(a_{\hat\sigma}(1))\mid \hat \sigma \in\Omega_B].
\]

Combining the above, the left-hand side of the lemma is at most
\begin{equation}\label{eq:RR-early}
    \bb{E}[v(b_{\hat\sigma}(1))\mid \hat \sigma \in\Omega]
+\frac12\bb{E}[v(a_{\hat\sigma}(1))\mid \hat \sigma \in\Omega_B].
\end{equation}

We now give two bounds on this expression.

\paragraph{Bound 1.} The first bound comes from our main Theorem~\ref{thm:main} applied to the first round of the mechanism,
\[
\bb{E}[v(b_{\hat\sigma}(1))\mid \hat \sigma \in\Omega]
\le \sqrt{2}\,\bb{E}[v(a_{\hat\sigma}(1)) \mid \hat \sigma \in \Omega],
\]
which implies a first bound for expression~\ref{eq:RR-early}:
\begin{align*}
&\bb{E}[v(b_{\hat\sigma}(1))\mid \hat \sigma \in \Omega]
+\tfrac12\bb{E}[v(a_{\hat\sigma}(1))\mid  \hat \sigma \in \Omega_B]\\
\le\quad
&\tfrac{\sqrt{2}}{2}\bb{E}[v(a_{\hat\sigma}(1))\mid \hat \sigma \in\Omega_A]
+\tfrac{1+\sqrt{2}}{2}\bb{E}[v(a_{\hat\sigma}(1))\mid \hat \sigma \in \Omega_B].
\end{align*}

\paragraph{Bound 2.}
The second bound comes from pairing each permutation $\sigma_A\in\Omega_A$ with $\sigma_B\in\Omega_B$ obtained by swapping the positions of $A$ and $B$. Then $v(a_1(\sigma_A)) \ge v(b_1(\sigma_B))$.

Taking the expectation over $\Omega_{A}$ and $\Omega_{B}$ respectively, we get
\[\bb{E}[v(b_{\hat \sigma}(1))|{\hat \sigma}\in \Omega_B] \leq \bb{E}[v(a_{\hat \sigma}(1))|{\hat \sigma}\in \Omega_A].\]

This gives us a second bound for expression~\ref{eq:RR-early}.
\[
\bb{E}[v(b_{\hat\sigma}(1))\mid \hat \sigma \in\Omega]
+\tfrac12\bb{E}[v(a_{\hat\sigma}(1))\mid \hat \sigma \in \Omega_B]
\le
\bb{E}[v(a_{\hat\sigma}(1))\mid \hat \sigma \in \Omega_A]
+\tfrac12\bb{E}[v(a_{\hat\sigma}(1))\mid \hat \sigma \in \Omega_B].
\]

We now combine the two bounds and optimize the ratio. Let
\[
\alpha := \bb{E}[v(a_{\hat\sigma}(1))\mid \hat \sigma \in \Omega_A],
\qquad
\beta := \bb{E}[v(a_{\hat\sigma}(1))\mid \hat \sigma \in \Omega_B].
\]

Combining the two bounds yields
\[
\bb{E}[v(b_{\hat\sigma}(1))\mid \hat \sigma \in \Omega]
+\tfrac12\bb{E}[v(a_{\hat\sigma}(1))\mid \hat \sigma \in \Omega_B]
\le
\min\left\{
\tfrac{\sqrt{2}}{2}\alpha+\tfrac{1+\sqrt{2}}{2}\beta,\;
\alpha+\tfrac12\beta
\right\},
\]

where
$\alpha := \bb{E}[v(a_{\hat\sigma}(1))\mid \hat \sigma \in \Omega_A]$, and $\beta := \bb{E}[v(a_{\hat\sigma}(1))\mid \hat \sigma \in \Omega_B]$.

On the other hand, $\bb{E}[v(a_{\hat\sigma}(1))\mid \hat \sigma \in \Omega]=\tfrac12(\alpha+\beta).$

To bound the ratio of the two expressions, we maximize $\frac{\min\{\frac{\sqrt{2}}{2}\alpha+\frac{1+\sqrt{2}}{2}\beta,\;
\alpha+\frac12\beta\}}{\frac12(\alpha+\beta)}$ over $\alpha,\beta\ge0$. 
The expected value $A$ gets when picking before $B$ is at least the expected value when $B$ picks before $A$, so $\alpha\geq \beta$.
By homogeneity, assume $\alpha+\beta=2$, i.e., $1\le\alpha\le2$. 
Substituting $\beta=2-\alpha$ gives
\[
\max_{1\le\alpha\le2}
\min\left\{
1+\sqrt{2}-\frac{\alpha}{2},\;
1+\frac{\alpha}{2}
\right\}
=
1+\max_{1\le\alpha\le2}
\min\left\{
\sqrt{2}-\frac{\alpha}{2},\;
\frac{\alpha}{2}
\right\}.
\]
The maximum is attained at $\alpha=\sqrt{2}$, yielding the value $1+\frac{1}{\sqrt{2}}$.
\end{proof}

Using the above lemma, we now derive the upper bound.

\begin{proof}[Proof of Theorem~\ref{thm:upperaddRRR}]
    Let $\hat \sigma $ be the random permutation selected by the mechanism. For the permutations where $A$ comes before $B$, Observation~$\ref{obs:RR-late}$ upper-bounds the value of the item that $B$ gets in any round $i$ by the value of the item that $A$ gets in round $i$. Summing over all rounds $i \ge 2$, and taking the expectation over $\Omega_A$, we obtain \[\sum_{i=2}^r \bb{E}[v(b_{\hat{\sigma}}(i))|\hat{\sigma}\in \Omega_A] \le \sum_{i=2}^r \bb{E}[v(a_{\hat{\sigma}}(i))|\hat{\sigma}\in \Omega_A].\]

    For the permutations where $B$ comes before $A$, Observation~$\ref{obs:RR-late}$ upper-bounds the value of the item that $B$ gets in any round $i+1$  by the value of the item that $A$ gets in round $i$. Summing over all rounds $i \ge 2$, and taking expectation over $\Omega_B$ we have \[\sum_{i=2}^{r-1} \bb{E}[v(b_{\hat{\sigma}}(i+1))|\hat{\sigma}\in \Omega_A]\leq \sum_{i=2}^{r-1} \bb{E}[v(a_{\hat{\sigma}}(i))|\hat{\sigma}\in \Omega_A].\]

    Additionally, by Lemma~\ref{lm:RR-early} we get:
    \begin{align*}
        \frac12\bb{E}[v(b_{\hat \sigma}(1))|\hat \sigma \in \Omega_A]\quad+\quad &\frac12\bb{E}[v(b_{\hat \sigma}(1))|{\hat \sigma}\in \Omega_B]\quad+\quad \frac12\bb{E}[v(b_{\hat \sigma}(2))|{\hat \sigma}\in \Omega_B]&\\
        \leq\quad &\left(1+\frac{1}{\sqrt{2}}\right)\bb{E}[v(a_{\hat \sigma}(1))|{\hat \sigma}\in \Omega].
     \end{align*}

    Combining the above, we get
    \begin{align*}
        \bb{E}[v_B(\hat{\sigma})|\hat{\sigma}\in \Omega]
        &=\sum_{i=1}^r \bb{E}[v(b_{\hat{\sigma}}(i))|\hat{\sigma}\in \Omega]\\
        &=\frac{1}{2}\sum_{i=1}^r \bb{E}[v(b_{\hat{\sigma}}(i))|\hat{\sigma}\in \Omega_A]+\frac{1}{2}\sum_{i=1}^r \bb{E}[v(b_{\hat{\sigma}}(i))|\hat{\sigma}\in \Omega_B]\\
        &= \frac{1}{2} (\bb{E}[v(b_{\hat{\sigma}}(1))|\hat{\sigma}\in \Omega_A]+\bb{E}[v(b_{\hat{\sigma}}(1))|\hat{\sigma}\in \Omega_B]+\bb{E}[v(b_{\hat{\sigma}}(2))|\hat{\sigma}\in \Omega_B])\\
        &\qquad +\frac{1}{2} \sum_{i=2}^r \bb{E}[v(b_{\hat{\sigma}}(i))|\hat{\sigma}\in \Omega_A]+\frac{1}{2}\sum_{i=2}^{r-1} \bb{E}[v(b_{\hat{\sigma}}(i+1))|\hat{\sigma}\in \Omega_B].
    \end{align*}
    We thus have
    \begin{align*}
        \bb{E}[v_B(\hat{\sigma})|\hat{\sigma}\in \Omega]
        &\leq \left(1+\frac{1}{\sqrt{2}}\right)\big(\bb{E}[v(a_{\hat{\sigma}}(1))|\hat{\sigma}\in \Omega]\big) +\sum_{i=2}^r \bb{E}[v(a_{\hat{\sigma}}(i))|\hat{\sigma}\in \Omega]\\
        &\leq \left(1+\frac{1}{\sqrt{2}}\right)\big(\bb{E}[v_A(\hat{\sigma})|\hat{\sigma}\in \Omega]\big).
    \end{align*}
    Here, the first inequality follows from Observation~\ref{obs:RR-late} and Lemma~\ref{lm:RR-early}.
\end{proof}

Combining Theorem~\ref{thm:loweraddRRR} and Theorem~\ref{thm:upperaddRRR} gives our main result of this section (Theorem~\ref{thm:rrradd}).


\section{Beyond Additive Valuations: Submodular, XOS, and \\Subadditive}\label{sec:nonadditive}
In this section, we analyze the envy-ratio of randomized round-robin and iterated-RSD in settings where the agents' valuations belong to the much more general class of subadditive functions.

We begin with the case of submodular valuations, for which we show that both extensions of RSD continue to guarantee a constant envy-ratio. Recall that a valuation function $v$ is submodular if it has weakly decreasing marginal values. That is, for any pair $S,T$ of item sets with $S\subseteq T\subseteq M$ and any item $j\in M\setminus T$, we have $v(S\cup\{j\})-v(S) \geq v(T\cup\{j\})-v(T)$. We denote the marginal value of an item $j$ given a set $S$, i.e. the quantity $v(S\cup\{j\})-v(S)$, by $v_S(j)$.

The constructions that yield our lower bounds for submodular valuations differ substantially from previous examples and require a more extensive case analysis. It is therefore convenient to describe agents’ preferences using a geometric coverage representation. We associate each item $j\in M$ with an interval $I_j$. The valuation $v(S)$ is equal to 
the total length of the union of intervals $I_j$ for all $j\in S$. It is easy to see that valuations defined in this manner are monotone and submodular, since the marginal contribution of an item is the measure of newly covered length, and this can only decrease as the covered region grows. We first focus on the iterated-RSD mechanism with submodular valuations.

\thmirsdsub*

We divide the statement of the above theorem into separate lemmas for the lower bound and the upper bound.

\begin{lm}\label{lem:irsdsublower}
    When the agents have submodular valuations, the envy-ratio of iterated-RSD is at least $\frac{4}{7} (1+2\sqrt{2}) \approx 2.188$.
\end{lm}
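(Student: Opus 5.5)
We will prove Lemma~\ref{lem:irsdsublower} with an explicit family of two-round instances, $r=2$ and $|M|=2|N|$. Agent~$A$ will have a coverage valuation: each item is an interval, and $v(S)$ is the length of the union of the intervals of the items in $S$. The other agents will have additive (hence submodular) valuations that fix their picking orders.

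The intended mechanism of the construction is as follows.
\begin{itemize}
\item[(i)] The first round reproduces the instance of Theorem~\ref{thm:lower}: agents $A$, $B$, $C_1,\dots,C_n$; $k\approx p n$ items of type~$a$; one item $b$; and $n-k+1$ items of type~$c$. This gives $\mathbb{E}[v(b(1))]\approx \frac{1+2p-p^2}{1+p^2}\,\mathbb{E}[v(a(1))]$, which is maximized at $p=\sqrt2-1$.
\item[(ii)] We add a second layer of items and exploit that, under submodularity, $A$'s value for $B$'s bundle is the measure of a union, not a sum of marginals.
\begin{itemize}
\item The type-$a$ items and $b$ all cover the same unit interval $I=[0,1]$. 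Each type-$a$ item also gets a private second-round ``twin'' covering $I$ again, so such twins have marginal value $0$ to $A$ once $A$ holds anything in $I$.
\item We add a second family of items covering a disjoint interval $I'=[1,1+t]$. Their quantity and the $C$-agents' second-round preferences are tuned so that, whenever $B$ takes a type-$a$ item or $b$ in round~1, $B$ is steered in round~2 (by its own additive valuation) to an item covering $I'$. Meanwhile $A$ usually finds all $I'$-items gone, because the $C$-agents covet them in round~2.
\end{itemize}
\end{itemize}
The aim is that $A$'s bundle typically covers only $I$, while $B$'s bundle covers $I\cup I'$ with noticeably larger probability. This adds a second-round gain for $B$ on top of the first-round $\sqrt2$ factor.

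The steps, in order, are as follows.
\begin{enumerate}
\item Fix the valuation table with parameters $p=k/n$, the length $t$ of $I'$, and the number $q n$ of $I'$-items.
\item Since iterated-RSD redraws the permutation independently in round~2, the expectation factors over the two independent orders. In round~1 we reuse the six relative-order cases of $A,B,C^*$ from Table~\ref{tab:lb-sqrt2-cases}.
\item In round~2 we do an analogous case analysis on the relative order of $A$, $B$ and the threshold $C$-agent that exhausts the $I'$-items. The event in which $A$ still obtains an $I'$-item has limiting probability a simple polynomial in $q$.
\item Letting $n\to\infty$ and $\varepsilon\to0$, write $\mathbb{E}[v(S_B)]$ and $\mathbb{E}[v(S_A)]$ as rational functions $F(p,q,t)$ and $G(p,q,t)$.
\item Maximize $F/G$. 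We expect the optimum at $p=\sqrt2-1$ with the remaining parameters making the ratio $\frac{4}{7}(1+2\sqrt2)$. The appearance of $1+2\sqrt2=\sqrt2+(1+\sqrt2)$ suggests a combination of the round-1 $\sqrt2$ with a round-2 term proportional to $1+\sqrt2$, normalized by $7/4$.
\end{enumerate}

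The main obstacle is designing the second layer so that the round-2 choices of $B$ and of the $C$-agents are consistent with marginal-value greedy picking under all round-1 outcomes. In particular, $A$ must never prefer an $I'$-item early in a way that destroys the asymmetry. It must also never profit from a type-$c$ item in round~1 or round~2 (we give type-$c$ items intervals disjoint from everything, of length $0$ for $A$).

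If the naive layering fails to reach the target constant, the first fix we would try is to let $B$'s round-2 item depend on whether $B$ received $b$ or a type-$a$ item in round~1. This can be arranged through $B$'s own coverage structure, so that the round-2 gain occurs exactly in the orders $BC^*A$ and $C^*BA$, where $A$ was already worst off. We would then redo the optimization over $(p,q,t)$, verifying the final maximization by the same completing-the-square trick used for $g(p)$ in the proof of Theorem~\ref{thm:upper}.
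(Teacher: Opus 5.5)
\textbf{There is a genuine gap: no instance is fixed, and the design you sketch provably cannot reach the target.} The decisive step (``we expect the optimum at $p=\sqrt2-1$ with the remaining parameters making the ratio $\frac47(1+2\sqrt2)$'') rests only on the numerology $1+2\sqrt2=\sqrt2+(1+\sqrt2)$.

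The obstruction comes from a pointwise cap. In iterated-RSD all items are present in round~1, so any item $b_2$ that $B$ takes in round~2 was available when $A$ chose $a_1$. By monotonicity and submodularity, $v_{\{b_1\}}(b_2)\le v(\{b_2\})\le v(\{a_1\})$ in every realization. Two cases follow.
\begin{itemize}
\item If $A$ still receives a worthless item in the orders $BC^*A$ and $C^*BA$, then every leftover item is worthless to $A$, and $B$ gains nothing there in round~2. So your proposed fix of placing $B$'s round-2 gain in exactly those orders is impossible.
\item If $A$ does not receive a worthless item there, step~(i), which imports the round-1 computation of Theorem~\ref{thm:lower}, is invalid.
\end{itemize}
Your twins and $I'$-items put you in the second case. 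They have singleton value $1$ and $t$ to $A$ in round~1, and nothing in your design removes them before $A$'s turn.

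\textbf{Even if you grant the round-1 outcomes of Table~\ref{tab:lb-sqrt2-cases}, the ratio stays at most $2$.}
\begin{itemize}
\item In $ABC^*$, $AC^*B$ and $BAC^*$, both $a_1$ and $b_1$ cover exactly $I$, so $v_{\{a_1\}}=v_{\{b_1\}}$. Round~2 is then a fresh RSD round in which $A$ is unit-demand with values $v_{\{a_1\}}(\cdot)$. By Theorem~\ref{thm:main}, applied conditionally on each round-1 outcome, $B$'s expected gain there is at most $\sqrt{2}\,G$, where $G$ is $A$'s expected gain.
\item The only asymmetric order is $C^*AB$, with probability about $(1-p)^2/2$, where $B$ gains at most $1$.
\end{itemize}
Adding round~1's $\frac{1+2p-p^2}{2}$ for $B$ against $\frac{1+p^2}{2}$ for $A$, the ratio is at most
\[
\frac{1+\sqrt{2}\,G}{(1+p^2)/2+G}\le\max\Bigl\{\frac{2}{1+p^2},\sqrt{2}\Bigr\}\le 2.
\]
At your $p=\sqrt2-1$ it is at most $1+\frac{1}{\sqrt2}$.

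\textbf{The missing idea is how to let $B$ accumulate a constant amount after round~1 despite the cap $v(\{b_2\})\le v(\{a_1\})$.} The paper's proof does this as follows.
\begin{itemize}
\item It uses $2\ell+1$ rounds and items of singleton value $\frac{1}{2\ell}$. These are $Y$ and $Y'$, tiling $[0,\frac12]$ with each $y_i,y'_i$ a substitute pair, and $Z$, tiling $[\frac12,1]$. They never matter in round~1 but sum to a constant over later rounds.
\item $A$ and $B$ share the same coverage valuation, with $x_1=[0,1]$ and $x_2=[1,2]$, and differ only in tie-breaking.
\item The additive $C$-agents contest only $x_1$, after $\lfloor pn\rfloor$ dummy picks.
\item Whoever holds $x_1$ gains nothing from $Y,Y',Z$. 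When $A$ misses both $x_1$ and $x_2$, it collects $Y$ (worth $\frac12$), while $B$ collects $Z\cup Y'$ on top of $x_2$.
\end{itemize}
This gives the ratio $\frac{3+2p-p^2}{2-p+p^2}$. It is maximized at $p=4\sqrt2-5$, not at $\sqrt2-1$, and its value there is exactly $\frac47(1+2\sqrt2)$.
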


\begin{proof}
    Using our geometric coverage representation, consider the following instance with $n+2$ agents and $(n+2)(2\ell+1)$ items for some large positive integer $\ell$. The instance has two highlighted agents $A$ and $B$, and a set $\mathcal{C}$ of $n$ other agents $C_1,\ldots,C_n$. It has two items labeled $x_1$ and $x_2$ that each have a singleton value of 1. It has two sets $Y=\{y_1,\ldots,y_\ell\}$ and $Y'=\{y'_1,\ldots,y'_\ell\}$ of $\ell$ items each. Every item in each of these sets has a singleton value of $1/2\ell$. Additionally, the pair $y_i$ and $y'_i$ are substitutes for each $i$. It has a set $Z={z_1,\ldots,z_\ell}$ of items, each of singleton value $1/2\ell$. Finally, it has a set $U=\set{u_1,\dots, u_{(n+2)(2\ell+1)-(3\ell+2)}}$ of 0-valued items for a total of $2\ell+1$ rounds.

    The intervals associated with both $A$'s and $B$'s valuations are as follows. $x_1$ is represented by the interval $[0,1]$ and $x_2$ is represented by $[1,2]$. $y_i$ and $y'_i$ are represented by $[(i-1)/2\ell,i/2\ell]$ for each $i\in[\ell]$. The item $z_i$ is represented by $[(1/2)+((i-1)/2\ell),(1/2)+(i/2\ell)]$ for each $i\in[\ell]$. Observe that if either agent selects $x_1$, then the items in the sets $Y$, $Y'$ and $Z$ have a marginal value of 0 in future rounds. The items in $U$ can be represented by $[0,0]$. Via an appropriate perturbation of the valuations, we may assume that if $A$ is breaking ties between multiple items with the same marginal value, then it selects items with the following preference: $U \succ x_1 \succ x_2 \succ Y \succ Y' \succ Z$. If $B$ is breaking ties between multiple items with the same marginal value, then it selects items with the following preference: $x_2 \succ x_1 \succ Z \succ Y' \succ Y \succ U$.

    Suppose the remaining agents $C_i$ are identical and additive, and for some $p\in(0,1)$ prefer the first $\floor{pn}$ items in $U$, i.e., $\{u_1,\ldots,u_{\floor{pn}}\}$, followed by the item $x_1$, followed by the remaining items in $U$. We have the following cases depending on the position of $A$ and $B$ in the \emph{first} round:
    \begin{itemize}
        \item $A$ is after less than $\floor{pn}$ of the agents in $\mathcal{C}$ in the first round. This happens with probability approximately $p$. In this case, $A$ selects $x_1$ and $B$ selects $x_2$ in the first round. In subsequent rounds, $B$ always picks all of $Z$ and $Y'$, while $A$ obtains items of zero value. Thus $A$ gets a value of $1$ and $B$ gets a value of $2$.
        \item $A$ is after $\floor{pn}$ of the agents in $\mathcal{C}$ and before $B$ in the first round. This happens with probability approximately $\frac{(1-p)^2}{2}$. In this case, $A$ gets $x_2$ in the first round and $B$ selects an item from $Z$. In subsequent rounds, $A$ selects all of $Y$ and $B$ selects the rest of $Z$ and all of $Y'$. Thus $A$ gets a value of $3/2$ and $B$ gets a value of $1$.
        \item $A$ is after $\floor{pn}$ of the agents in $\mathcal{C}$ and after $B$ in the first round. This event happens with probability approximately $p(1-p)+\frac{(1-p)^2}{2}$, where the first term corresponds to the event of $B$ appearing after less than $\floor{pn}$ of the agents in $\mathcal{C}$, and the second term corresponds to the remaining event. Here, $A$ selects all of $Y$ in the first $\ell$ rounds, and $B$ selects $x_2$ and $\ell-1$ items of $Z$. Additionally, $A$ is able to select at most one item from $Z$ if it appears before $B$ in the subsequent round. Then, $B$ selects all of the set $Y'$. Thus for large $\ell$, $A$ gets a value of approximately $1/2$, and $B$ gets a value of approximately $2$.
    \end{itemize}

    For large $\ell$, the envy-ratio is then lower-bounded by approximately
    \begin{align*}
        \frac{2(p + p(1-p)+\frac{(1-p)^2}{2}) + 1(\frac{(1-p)^2}{2})}
        {1(p) + \frac{3}{2}(\frac{(1-p)^2}{2}) + \frac{1}{2}(p(1-p)+\frac{(1-p)^2}{2})} = \frac{-p^{2}+2p+3}{p^{2}-p+2},
    \end{align*}
    which reaches a maximum value of $\frac{4}{7}(1+2\sqrt{2}) \approx 2.188$ at $p=4\sqrt{2}-5$.
\end{proof}

Next, we will prove a lemma for the upper bound. Before doing so, we first prove the following lemma, which is a consequence of submodularity. We require this lemma for our upper bounds for both the iterated-RSD mechanism and randomized round-robin mechanism.

\begin{lm}\label{lm:subm}
    Let $v$ be a submodular valuation, let $S$ be a set of items, and $(a_1, \dots, a_k)$, $(b_1, \dots, b_k)$ two sequences of $k$ items such that for all $i \in \{1, \dots k\}$, 
    \[v_{S \cup \{a_1, \dots, a_{i-1}\}}(b_i) \leq \alpha \cdot v_{S \cup \{a_1, \dots, a_{i-1}\}}(a_i).\]
    Then 
    \[v(\{b_1, \dots, b_k\})\leq v(S \cup \{a_1, \dots, a_k\})+\alpha v_{S}(\{a_1,\dots,a_k\}).\]
\end{lm}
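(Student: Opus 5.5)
We will prove Lemma~\ref{lm:subm} by a telescoping argument that charges each $b_i$ to the marginal value of $a_i$ at the moment $a_i$ is added. Write $S_i := S \cup \{a_1,\dots,a_i\}$ for $i=0,\dots,k$, so that $S_0=S$ and $S_k = S\cup\{a_1,\dots,a_k\}$. The plan has two steps.

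\textbf{Step 1: bound $v(\{b_1,\dots,b_k\})$ by $v(S_k)$ plus marginals of the $b_i$'s.} By monotonicity,
\[
v(\{b_1,\dots,b_k\}) \le v(S_k \cup \{b_1,\dots,b_k\}).
\]
Adding the $b_i$ one at a time on top of $S_k$ and telescoping, the right-hand side equals $v(S_k)$ plus the sum of the marginals of each $b_i$ with respect to $S_k\cup\{b_1,\dots,b_{i-1}\}$. Since $S_{i-1}\subseteq S_k\cup\{b_1,\dots,b_{i-1}\}$, submodularity bounds each such marginal by $v_{S_{i-1}}(b_i)$. Hence
\[
v(\{b_1,\dots,b_k\}) \le v(S_k) + \sum_{i=1}^k v_{S_{i-1}}(b_i).
\]
This is the main point: each $b_i$ must be evaluated against the same set $S_{i-1}$ that appears in the hypothesis, and the inclusion $S_{i-1}\subseteq S_k\cup\{b_1,\dots,b_{i-1}\}$ is exactly what makes the hypothesis usable. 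If $b_i$ already lies in the set, its marginal is $0$ and the bound holds trivially.

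\textbf{Step 2: apply the hypothesis and telescope.} By hypothesis, $v_{S_{i-1}}(b_i)\le \alpha\, v_{S_{i-1}}(a_i)$. Since $S_{i-1}\cup\{a_i\}=S_i$, we have
\[
\sum_{i=1}^k v_{S_{i-1}}(a_i) = v(S_k)-v(S) = v_S(\{a_1,\dots,a_k\}).
\]
Combining this with Step 1 yields the claimed inequality.

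I do not expect a real obstacle, since the argument uses only monotonicity, submodularity, and telescoping. The only care needed is matching the conditioning sets, and handling repeated items (for instance $b_i=a_j$ for some $j$) through zero marginals.
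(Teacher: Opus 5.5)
Your proposal is correct and follows essentially the same argument as the paper: monotonicity to pass to $v(S_k\cup\{b_1,\dots,b_k\})$, telescoping over the $b_i$, submodularity to shrink each conditioning set to $S_{i-1}$, then the hypothesis and a second telescoping over the $a_i$. Your sums correctly run from $i=1$ to $k$, whereas the paper's displayed sums stop at $k-1$, which appears to be a typo.
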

\begin{proof}
    \begin{align*}
        v(\{b_1, \dots, b_k\})&\leq v(S\cup \{a_1, \dots, a_k\} \cup \{b_1, \dots, b_k\} \})\\
        &= v(S \cup \{a_1, \dots, a_k\})+\sum_{i=1}^{k-1} v(b_i|S \cup \{a_1, \dots, a_k\} \cup \{b_1, \dots, b_{i-1}\})\\
        &\leq v(S \cup \{a_1, \dots, a_k\})+\sum_{i=1}^{k-1}  v(b_i|S \cup \{a_1, \dots, a_{i-1}\})\\
        &\leq v(S \cup \{a_1, \dots, a_k\})+\sum_{i=1}^{k-1}  \alpha v(a_i| S \cup \set{a_1,\dots, a_{i-1}})\\
        &= v(S \cup \{a_1, \dots, a_k\})+\alpha v_{S}(\{a_1,\dots,a_k\}).
    \end{align*}
The first inequality follows by monotonicity, the second by submodularity, and the third by assumption.
\end{proof}

We also use the following simple claim, which is a 
restatement of Claim~\ref{cl:iRSD} from the additive case with more general notation.

\begin{cl}\label{cl:iRSD-subm}
    During the iterated-RSD mechanism, for all $k$ with $1\leq k\leq r$, let $a(k)$ (resp. $b(k)$) be the random item that $A$ selects during round $k$, and let $S(k) = \{a(1), \dots , a(k-1)\}$ be the random bundle of $A$ at the start of the $k^\text{th}$ round. Then
    \[ \bb{E}[v_{S(k)}(b(k))] \leq \sqrt{2} \cdot \bb{E}[v_{S(k)}(a(k))].\]
\end{cl}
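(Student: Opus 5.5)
The plan is to condition on everything that happens before round $k$ and then apply Theorem~\ref{thm:main} to round $k$ alone, viewed as a single-round unit-demand instance. Let $\mathcal{H}_{k}$ denote the history of the mechanism up to the start of round $k$: the permutations drawn in rounds $1,\dots,k-1$ and hence all bundles held by all agents at that point. Conditional on a realization $H$ of $\mathcal{H}_k$, four things are fixed: the set $R$ of unallocated items, the bundle $S(k)=S$ of $A$, and the bundles of all other agents. Moreover, the permutation of round $k$ is a fresh uniform permutation independent of $H$, by the definition of iterated-RSD.

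Fix such an $H$. In round $k$ each agent picks exactly one item, namely its most-preferred remaining item with respect to its marginal values. Its bundle does not change within the round until it has picked, and it picks only once. Hence each agent $C$ behaves exactly like a unit-demand agent whose (tie-broken) order over $R$ is the fixed order determined by its bundle at the start of round $k$. Thus round $k$ is an instance of RSD on the item set $R$ with $|N|$ agents and unit-demand preferences. Here $|R|\ge |N|$; the footnote to Theorem~\ref{thm:main} covers $|M|\neq|N|$. In this instance, agent $A$ has cardinal value $w(j):=v_S(j)$ for $j\in R$, and its ordinal ranking over $R$ is consistent with $w$. The proof of Theorem~\ref{thm:upper} only uses $A$'s cardinal values and the ordinal behaviour of the others. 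Therefore
\[
\bb{E}[v_{S}(b(k))\mid H]\le \sqrt{2}\,\bb{E}[v_{S}(a(k))\mid H],
\]
provided $\bb{E}[v_S(a(k))\mid H]>0$. If instead $w\equiv 0$ on $R$, both sides vanish and the inequality holds trivially. This degenerate case has to be handled separately, since Theorem~\ref{thm:main} assumes $A$ values some item positively; that is the only real subtlety.

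Finally, take expectations over $H$. Since $S(k)$ is determined by $H$, the tower rule yields $\bb{E}[v_{S(k)}(b(k))]\le\sqrt2\,\bb{E}[v_{S(k)}(a(k))]$, which proves Claim~\ref{cl:iRSD-subm}. The main point to check carefully is the reduction in the second paragraph. Marginal values must not change mid-round for any agent, which holds because each agent picks once per round. In addition, the upper-bound argument (Lemmas~\ref{lm:airplane}--\ref{lem:erub}) must remain valid when $|R|>|N|$. It does, since those lemmas only track which items are removed before $A$ and $B$ pick.
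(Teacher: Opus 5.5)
Your proposal is correct and follows the paper's route. The paper justifies this claim, via Claim~\ref{cl:iRSD}, by observing that round $k$ is an independent unit-demand RSD instance in which $A$'s values are the marginals $v_{S(k)}(\cdot)$, and then invoking Theorem~\ref{thm:main}; your conditioning on the history, the tower rule, and the separate treatment of the case $v_{S}\equiv 0$ on the remaining items make explicit what the paper leaves implicit.
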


The upper bound is stated in the following lemma.

\begin{lm}\label{lem:irsdsubupper}
    When the agents have submodular valuations, the envy-ratio of iterated-RSD is at most $1+\sqrt{2} \approx 2.414$.
\end{lm}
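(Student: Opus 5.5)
The plan is to combine Lemma~\ref{lm:subm} with Claim~\ref{cl:iRSD-subm}, applied along the sequence of rounds with $S=\varnothing$ and $\alpha=\sqrt2$. Let $a(k)$, $b(k)$ be the items selected in round $k$ by $A$ and $B$, and let $S(k)=\{a(1),\dots,a(k-1)\}$. Write $S_A=\{a(1),\dots,a(r)\}$ and $S_B=\{b(1),\dots,b(r)\}$ for the final bundles.

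We want $\bb{E}[v(S_B)]\le(1+\sqrt2)\,\bb{E}[v(S_A)]$. Mimicking the proof of Lemma~\ref{lm:subm}, we use monotonicity and then add the items $b(k)$ one at a time on top of $S_A$:
\[
v(S_B)\le v(S_A\cup S_B)\le v(S_A)+\sum_{k=1}^r v_{S_A\cup\{b(1),\dots,b(k-1)\}}(b(k)).
\]
By submodularity, each term is at most $v_{S(k)}(b(k))$, since $S(k)\subseteq S_A$. This gives, pointwise over the randomness,
\[
v(S_B)\le v(S_A)+\sum_{k=1}^r v_{S(k)}(b(k)).
\]
Taking expectations and using Claim~\ref{cl:iRSD-subm} in each round yields
\[
\bb{E}[v(S_B)]\le \bb{E}[v(S_A)]+\sqrt2\sum_{k=1}^r\bb{E}[v_{S(k)}(a(k))].
\]
The sum telescopes to $\bb{E}[v(S_A)]$ because $v(\varnothing)=0$. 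Hence $\bb{E}[v(S_B)]\le(1+\sqrt2)\,\bb{E}[v(S_A)]$, as required.

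The main point to check is Claim~\ref{cl:iRSD-subm} itself, since it is only asserted. Justifying it requires conditioning on the full history before round $k$, which fixes $S(k)$ and the set of remaining items. Under this conditioning, round $k$ is a fresh uniformly random permutation. In that round each agent picks by its current marginal-value order, so round $k$ is exactly a one-round RSD instance with unit-demand agents. In particular $A$ uses the fixed cardinal valuation $j\mapsto v_{S(k)}(j)$ on the remaining items, while the other agents' orders are fixed by the history.

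Theorem~\ref{thm:main} then applies to this conditional instance, and its footnote covers unequal numbers of items and agents. It gives $\bb{E}[v_{S(k)}(b(k))\mid\text{history}]\le\sqrt2\,\bb{E}[v_{S(k)}(a(k))\mid\text{history}]$. We must also handle the degenerate case where $A$ has zero marginal value for every remaining item. There both sides are $0$, so the inequality still holds. Taking the outer expectation gives the claim.

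The remaining care is that the positivity assumption of the envy-ratio is used only via this inequality form, so no division by zero arises. Finally, since $\bb{E}[v(S_A)]>0$, dividing through gives envy-ratio at most $1+\sqrt2$.
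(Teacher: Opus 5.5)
Your proof is correct and follows the paper's route: the paper combines Claim~\ref{cl:iRSD-subm} with Lemma~\ref{lm:subm} (taking $S=\varnothing$ and $\alpha=\sqrt2$), and your argument is the same chain of monotonicity, telescoping and submodularity. Your version is more careful than the paper's one-line proof, because Lemma~\ref{lm:subm} needs its hypothesis pointwise while Claim~\ref{cl:iRSD-subm} holds only in expectation. Running the chain pointwise, applying the claim only after taking expectations, and justifying the claim by conditioning on the history (including the degenerate zero-marginal case) is the right way to make the argument rigorous.
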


\begin{proof}

By Claim~\ref{cl:iRSD-subm}, $v_{A_j}(b_{\sigma(i)})\leq \sqrt{2} v_{A_j}(a_{\sigma(i)})$. In particular, we can then apply Lemma~\ref{lm:subm} with $\alpha=\sqrt{2}$ to get:
\[\bb{E}[v_B(\hat{\sigma})|\hat{\sigma}\in \Omega]=\bb{E}[v(\overline{B}_1))|\hat{\sigma}\in \Omega]\leq (1+\sqrt{2}) \bb{E}[v_A(\hat{\sigma})|\hat{\sigma}\in \Omega]\]
\end{proof}
Together, Lemmas~\ref{lem:irsdsublower} and~\ref{lem:irsdsubupper} prove Theorem~\ref{thm:irsdsub}..

Next, we analyze the envy-ratio of randomized round-robin for submodular valuations. We prove the following theorem.
\thmrrrsub*

As before, we start with the lower bound.

\begin{lm}\label{lem:rrrsublower}
    When the agents have submodular valuations, the envy-ratio of randomized round-robin is at least $\frac{9}{4} = 2.25$.
\end{lm}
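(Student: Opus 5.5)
We prove this with an explicit family of instances in the geometric coverage representation introduced above. As in Lemma~\ref{lem:irsdsublower}, we take two highlighted agents $A$ and $B$ with identical interval valuations (differing only in tie-breaking), plus $n$ identical additive agents $C_1,\dots,C_n$ whose only role is to steal one or two specific items at a controlled point of the fixed order. The valuations of $A$ and $B$ are built from three groups of items. There are two unit items, $x_1=[0,1]$ and $x_2=[1,2]$. There are fine items of length $1/2\ell$: one family $Y,Y'$ of pairwise-substitute pieces of $[0,1]$, and one family $Z$ of pieces of $[1/2,1]$ or of $[1,2]$. Finally there are zero-valued fillers $U$, so that the mechanism runs for $\Theta(\ell)$ rounds.

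The design principle is the one already visible in the $3/2$ example (Theorem~\ref{thm:loweraddRRR}) and in Lemma~\ref{lem:irsdsublower}. Because the order is fixed across rounds in randomized round-robin, an agent that is ahead in round one stays ahead in every round. I will arrange that when $B$ precedes $A$, $B$ grabs the big item whose removal leaves $A$ with only items that $B$ will later cover (pieces in $Y'$ or $Z$). Then $A$'s marginal values collapse, while $B$ keeps collecting fine pieces worth close to a second unit. When $A$ precedes $B$, the $C$-agents, by preferring $x_1$ after the first $\lfloor pn\rfloor$ fillers, determine whether $A$ gets the ``good'' item $x_1$ (which kills all fine pieces for $A$) or is pushed to $x_2$.

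The key steps, in order, are as follows.
\begin{enumerate}
\item Fix the tie-breaking orders by perturbation, with $A$ favouring $U\succ x_1\succ x_2\succ Y\succ Y'\succ Z$ and $B$ favouring the reverse, exactly as in Lemma~\ref{lem:irsdsublower}.
\item Enumerate the cases according to the relative position of $A$, $B$, and the $\lfloor pn\rfloor$-th $C$-agent in the single random order. These cases have limiting probabilities $p^2/2$, $p(1-p)$, $(1-p)^2/2$ and so on, as in Table~\ref{tab:lb-sqrt2-cases}.
\item In each case, trace the run round by round. Since the order never changes, whoever is ahead takes each successive fine piece first. This gives, for large $\ell$, $A$'s value for its own bundle and for $B$'s bundle as constants in $\{1/2,1,3/2,2,\dots\}$.
\item Sum the cases to obtain a rational function $R(p)$ with quadratic numerator and denominator.
\item Maximize $R(p)$ over $p\in(0,1)$, then let $n,\ell\to\infty$ and the perturbation go to $0$.
\end{enumerate}
The target is that the optimum is $9/4$, presumably at a rational $p$ such as $1/2$, in contrast with the irrational optimum of the iterated-RSD example. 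If the basic two-gadget instance only reaches something like $2$, I will add a second copy of the $Z$-gadget that $B$ can harvest only when it is ahead of $A$. The fixed order makes this harvesting persistent over many rounds, which is precisely the extra power randomized round-robin gives $B$ over iterated-RSD.

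The main obstacle is step 3 together with the design itself. I need to choose the interval overlaps and tie-breaks so that in every case each agent's greedy marginal choice is forced and easy to trace. In particular, when $A$ precedes $B$, $A$ must not be able to ``defend'' by taking pieces of $B$'s gadget, and the filler items must not interfere. I would first fix the case probabilities, then reverse-engineer the per-case values needed so that $R(p)$ has maximum $9/4$. Finally I would adjust the coverage intervals (how much of $[0,2]$ the families $Y$, $Y'$, $Z$ cover) to realize those values, checking submodularity automatically via the coverage representation.
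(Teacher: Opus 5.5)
Your proposal describes how you would search for an instance; it does not prove the lemma. A lower bound of this kind consists entirely of an explicit instance plus a verified trace of the mechanism on it. You never fix the instance, the per-case values, or $R(p)$. Step~3, which you yourself flag as the main obstacle, is exactly the content of the lemma, and the claim that the optimum will be $9/4$ ``presumably at $p=1/2$'' is a guess.

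Worse, the one concrete gadget you name cannot work. Run randomized round-robin on the instance of Lemma~\ref{lem:irsdsublower}:
\begin{itemize}
\item If $A$ precedes the $(\lfloor pn\rfloor+1)$-st agent of $\mathcal{C}$ (the one that would take $x_1$), then $A$ gets $x_1$ (value $1$) and $B$ collects $x_2$, $Z$, $Y'$ (value $2$).
\item If $A$ comes after that agent and ahead of $B$, $A$ gets $x_2\cup Y$ (value $3/2$) against $Z\cup Y'$ (value $1$).
\item If $A$ comes after that agent and behind $B$, $A$ gets only $Y$ (value $1/2$) against $2$.
\end{itemize}
These are the same per-case values as under iterated-RSD. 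There, re-randomization only let $A$ occasionally snatch one $Z$-piece, an $O(1/\ell)$ effect. So fixing the order buys nothing for this gadget: the ratio is again $\frac{-p^2+2p+3}{p^2-p+2}$, whose maximum is $\frac47(1+2\sqrt2)\approx 2.188<9/4$ (at $p=1/2$ it is only $15/7$). Your fallback of ``adding a second copy of the $Z$-gadget'' is not specified, and nothing shows it closes this gap.

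The missing idea is a gadget in which the persistence of the order matters at unit scale in later rounds, not merely for $1/2\ell$-sized pieces. The paper's construction has no parameter $p$. It uses five rounds and only the split ``$A$ before $B$'' versus ``$B$ before $A$'', each with probability $1/2$.
\begin{itemize}
\item $A$ and $B$ have different coverage valuations. For instance, $1'$ is a substitute of $1$ for $A$ but covers $[4,5]$ for $B$, and $B$'s interval for $5$ overlaps its interval for $1$. So $B$ keeps taking items that $A$ values highly in isolation but that are redundant for $A$'s own bundle.
\item Two auxiliary families $\mathcal{C}$ and $\mathcal{D}$ are calibrated so that their non-last members absorb designated groups of items. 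The last members $C^*$ and $D^*$, who with high probability come after both $A$ and $B$, react to $A$'s first-round pick and, together with $B$, strip away the items $A$ would want next.
\end{itemize}
As a result, $A$'s bundle is worth $2$ in both cases, while $B$'s is worth $4$ when $A$ is first and $5$ when $B$ is first. In the latter case, for example, $B$, being ahead, takes item $3$ in round~3 just before $A$ would. This gives $(4+5)/(2+2)=9/4$.
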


\begin{proof}
We create an instance with $2n+2$ agents and $5\cdot(2n+2)$ items inducing five rounds. The instance consists of agents $A$ and $B$, and of $n$ identical agents $C_1,\ldots,C_n$ and $n$ identical agents $D_1,\ldots,D_n$. We partition the set of items into ten groups. Of these ten groups, agents~$A$ and~$B$ are primarily interested in obtaining items from the first five groups, since these are the only items for which they have a positive value. The items within any group are substitutes for agent~$A$, i.e., $A$ is interested in obtaining at most one item from each group. The first group consists of two items $1$ and $1'$. The second group contains the single item $2$. The third group also contains only item $3$. The fourth group consists of three items $4$, $4'$ and $4''$. The fifth group contains two items $5$ and $5'$. The five remaining groups are carefully designed in a manner that results in a specific picking order for the other $2n$ agents. The sixth, seventh, and ninth groups each contain $n-1$ items, the eighth group contains $n$ items, and the tenth group contains the remaining items.

We next describe the agents' valuations using our geometric coverage representation. Agent~$A$'s valuation concerns the following intervals: $I^A_1 = I^A_{1'}=[0,1]$, $I^A_2=[1,2]$, $I^A_3=[2,3]$, $I^A_4=I^A_{4'}=I^A_{4''}=[3,4]$, and $I^A_5=I^A_{5'}=[4,5]$. The remaining items have value 0 and can be represented with the 0-measure interval $[0,0]$. It is easy to verify that by an appropriate perturbation, we may assume agent~$A$ has a specific preference order over the items when breaking ties between items of the same marginal value. If there are no items with positive marginal value remaining then $A$ selects an item from the tenth group. Otherwise, $A$ has the following preference order: $1 \succ_A 4' \succ_A 4'' \succ_A 4 \succ_A 5' \succ_A 5 \succ_A 1' \succ_A 2 \succ_A 3$. Once none of the above items remain, $A$ selects from any group.

Agent~$B$ has a singleton value of 1 for items $1$, $2$, $3$, $4$, $5$, and $1'$ and a value of 0 for every other item. Its valuation is represented by $I^B_1=[0,1]$, $I^B_2=[1,2]$, $I^B_3=[2,3]$, $I^B_4=[3,4]$, $I^B_5=[-0.5,0.5]$, and $I^B_{1'}=[4,5]$. The intervals $I^B_1$ and $I^B_5$ partially overlap. By an appropriate perturbation, we may assume agent~$B$ has the following preference order over the items when breaking ties between items of the same marginal value: $1 \succ_B 2 \succ_B 5 \succ_B 1' \succ_B 3 \succ_B 4$. If none of these items remain, $B$ selects an item from the tenth group; otherwise it selects from any group. 

For every $i\in[n]$, agent~$C_i$ has a singleton value of 1 for the items in the sixth and seventh groups and the items $1'$, $3$, $4'$ and $4''$, and a value of 0 for the remaining items. $C_i$'s valuation for items $1'$, $3$, $4'$ and $4''$ is represented by $I^C_{1'}=[0.5,1.5]$, $I^C_3=[0,1]$, $I^C_{4'}=[1,2]$, and $I^C_{4''}=[2,3]$. Every item in the sixth group is represented by the interval $[2,3]$, and every item in the seventh group is represented by the interval $[3,4]$. By an appropriate perturbation, we may assume agent~$C_i$ has the following preference order over the items when breaking ties between items of the same marginal value. If an item in the sixth group is available, $C_i$ takes that item. Otherwise, $C_i$ selects item $4'$ if it is available, or item $1'$ if $4'$ is not available. If both are unavailable, $C_i$ selects an item from the seventh group. If all of these items have been claimed, $C_i$ selects item $3$. If item $3$ is also unavailable, $C_i$ selects item $4''$. Note that at any point in time, this preference order is only applied towards the items of highest marginal value for agent $C_i$. If these items are unavailable, $C_i$ selects an item from the tenth group; otherwise, if none of these items remain, it selects from any group.

Finally, for every $i\in[n]$, agent~$D_i$ has a singleton value of 1 for the items in the eighth and ninth groups and the item $5'$, and a value of 0 for the remaining items. $D_i$'s valuation for item $5'$ is represented by $I^D_{5'}=[0,1]$. Every item in the eighth group is represented by the interval $[1,2]$, and every item in the ninth group is represented by the interval $[2,3]$. By an appropriate perturbation, we may assume agent~$D_i$ has the following preference order over the items when breaking ties between items of the same marginal value. If an item in the eighth group is available, $D_i$ takes that item. Otherwise, if an item in the ninth group is available, $D_i$ takes that item. Finally, if these items are all unavailable, $D_i$ selects item $5'$, and if $5'$ has been taken, $D_i$ selects an item from the tenth group. If none of these items remain, $D_i$ selects any item.

Table~\ref{tab:225vals} summarizes the list of intervals and tie-breaking choices corresponding to the geometric representation of all valuations in this instance.
\begin{table}[ht]
    \begin{center}
    \begin{tabular}{c|ccccccccc|}
        & \multicolumn{9}{c}{{\bf Item}} \\
        \cline{2-10}
        {\bf Agent} & $1$ & $1'$ & $2$ & $3$ & $4$ & $4'$ & $4''$ & $5$ & $5'$ \\
        \cline{1-10}
        {\tt A} & \scell{$[0,1]$\\$2$} & \scell{$[0,1]$\\$8$} & \scell{$[1,2]$\\$9$} &\scell{$[2,3]$\\$10$} &\scell{$[3,4]$\\$5$} &\scell{$[3,4]$\\$3$} &\scell{$[3,4]$\\$4$} &\scell{$[4,5]$\\$7$} &\scell{$[4,5]$\\$6$} \\
        \cline{2-10}
        {\tt B} & \scell{$[0,1]$\\$1$} & \scell{$[4,5]$\\$4$} & \scell{$[1,2]$\\$2$} &\scell{$[2,3]$\\$5$} &\scell{$[3,4]$\\$6$} &\scell{$[0,0]$\\$-$} &\scell{$[0,0]$\\$-$} &\scell{$[-0.5,0.5]$\\$3$} &\scell{$[0,0]$\\$-$} \\
        \cline{2-10}
        {\tt C$_i$} & \scell{$[0,0]$\\$-$} & \scell{$[0.5,1.5]$\\$3$} & \scell{$[0,0]$\\$-$} &\scell{$[0,1]$\\$5$} &\scell{$[0,0]$\\$-$} &\scell{$[1,2]$\\$2$} &\scell{$[2,3]$\\$6$} &\scell{$[0,0]$\\$-$} &\scell{$[0,0]$\\$-$} \\
        \cline{2-10}
        {\tt D$_i$} & \scell{$[0,0]$\\$-$} & \scell{$[0,0]$\\$-$} & \scell{$[0,0]$\\$-$} &\scell{$[0,0]$\\$-$} &\scell{$[0,0]$\\$-$} &\scell{$[0,0]$\\$-$} &\scell{$[0,0]$\\$-$} &\scell{$[0,0]$\\$-$} &\scell{$[0,1]$\\$3$} \\
        \cline{1-10}
    \end{tabular}
    \vspace{5mm}
    
    \begin{tabular}{c|ccccc|}
        & \multicolumn{5}{c}{{\bf Item}} \\
        \cline{2-6}
        {\bf Agent} & \scell{$n-1$ items in\\sixth group} & \scell{$n-1$ items in\\seventh group} & \scell{$n$ items in\\eighth group} & \scell{$n-1$ items in\\ninth group} & \scell{$6n+4$ items in\\tenth group} \\
        \cline{1-6}
        {\tt A} & \scell{$[0,0]$\\$-$} & \scell{$[0,0]$\\$-$} & \scell{$[0,0]$\\$-$} & \scell{$[0,0]$\\$-$} & \scell{$[0,0]$\\$1$} \\
        \cline{2-6}
        {\tt B} & \scell{$[0,0]$\\$-$} & \scell{$[0,0]$\\$-$} & \scell{$[0,0]$\\$-$} & \scell{$[0,0]$\\$-$} & \scell{$[0,0]$\\$7$} \\
        \cline{2-6}
        {\tt C$_i$} & \scell{$[3,4]$\\$1$} & \scell{$[4,5]$\\$4$} & \scell{$[0,0]$\\$-$} & \scell{$[0,0]$\\$-$} & \scell{$[0,0]$\\$7$} \\
        \cline{2-6}
        {\tt D$_i$} & \scell{$[0,0]$\\$-$} & \scell{$[0,0]$\\$-$} & \scell{$[1,2]$\\$1$} & \scell{$[2,3]$\\$2$} & \scell{$[0,0]$\\$4$} \\
        \cline{1-6}
    \end{tabular}
    \end{center}
    
    \caption{Geometric representation of valuations in the example of Lemma~\ref{lem:rrrsublower}. Each cell has an interval and a picking index; if multiple items have the same marginal value, the agent selects from those items an item with the smallest picking index. A picking index of `$-$' indicates that the item has the lowest priority.}
    \label{tab:225vals}
\end{table}

We now analyze the execution of randomized round-robin on this instance. Let $C^*$ and $D^*$ respectively be the agents that appear last in the sets $\{C_1,\ldots,C_n\}$ and $\{D_1,\ldots,D_n\}$ in the permutation selected by randomized round-robin. We split our analysis into two cases, depending on whether $A$ or $B$ appears earlier in the ordering. In both cases, with high probability, agent~$C^*$ and agent~$D^*$ appear after both $A$ and $B$. Note that the relative order of appearance of agents $C^*$ and $D^*$ does not affect the allocation selected by the mechanism. Table~\ref{tab:225exec} shows the assignment made by randomized round-robin in each round, in both cases.
\begin{table}[]
    \centering
    \begin{minipage}{.4\linewidth}
      \centering
         \begin{tabularx}{0.75\textwidth}{c|XXXX}
           & \multicolumn{4}{c}{{\bf Agent}} \\
           \cline{2-5}
        {\bf Round} & {\tt A} & {\tt B} & {\tt C$^*$} & {\tt D$^*$} \\
        \cline{1-5}
        {\tt 1} & $1$ & $2$ & $4'$ & $8$ \\
        {\tt 2} & $4''$ & $5$ & $3$ & $5'$ \\
        {\tt 3} & $10$ & $1'$ & $10$ & $10$ \\
        {\tt 4} & $10$ & $4$ & $10$ & $10$ \\
        {\tt 5} & $10$ & $10$ & $10$ & $10$ \\
    \end{tabularx}
    \end{minipage}
    \quad
    \begin{minipage}{.4\linewidth}
      \centering
         \begin{tabularx}{0.75\textwidth}{c|XXXX}
           & \multicolumn{4}{c}{{\bf Agent}} \\
           \cline{2-5}
        {\bf Round} & {\tt B} & {\tt A} & {\tt C$^*$} & {\tt D$^*$} \\
        \cline{1-5}
        {\tt 1} & $1$ & $4'$ & $1'$ & $8$ \\
        {\tt 2} & $2$ & $5'$ & $4''$ & $10$ \\
        {\tt 3} & $3$ & $10$ & $10$ & $10$ \\
        {\tt 4} & $4$ & $10$ & $10$ & $10$ \\
        {\tt 5} & $5$ & $10$ & $10$ & $10$ \\
    \end{tabularx}
    \end{minipage}
    \caption{Execution of randomized round-robin on the instance from Table~\ref{tab:225vals}, with two cases corresonding to the relative order of $A$ and $B$. Items from the sixth, seventh, eighth, ninth and tenth groups are labeled 6, 7, 8, 9, and 10 respectively.}
    \label{tab:225exec}
\end{table}

If $A$ appears before $B$, it selects item $1$ in the first round, and $B$ selects item $2$. Since the first $n-1$ agents in $\{C_1,\ldots,C_n\}$ select all $n-1$ items in the sixth group, when $C^*$ appears, it selects item $4'$. $D^*$ selects an item in the eighth group in the first round. Next, in round 2, $A$ selects item $4''$ and $B$ selects item $5$, with both cases having a marginal value of 1. Since the items in the seventh group are taken, and item $1'$ has a smaller marginal value of 0.5, $C^*$ selects item $3$. In this round, $D^*$ selects item $5'$ since the $n-1$ items in the ninth group are taken. Finally, in the remaining rounds, since no items of positive marginal value remain, $A$, $C^*$ and $D^*$ select items from the tenth group. However, in rounds 3 and 4, $B$ selects the items $1'$ and $4$ respectively.

For the other case, when $B$ appears before $A$, $B$ selects item $1$ in the first round and $A$ selects item $4'$. Since the sixth group has been fully assigned, and item $4'$ is unavailable, $C^*$ selects item $1'$. Agent~$D^*$ selects an item from the eighth group. Next, in the second round, $B$ selects item $2$ and $A$ selects item $1$, with both agents obtaining an item of marginal value 1. Since the seventh group has been fully assigned, and item~$3$ has a marginal value of $0.5$, $C^*$ selects item $4''$. Starting with round 2, $D^*$ selects only items in the tenth group. Finally, in the third round, $B$ selects item $3$. Starting with round 3 and until the end, both $A$ and $C^*$ select only items in the tenth group, since no other items of positive marginal value remain. In rounds~4 and~5, $B$ selects items $4$ and $5$.

Let $X_A$ and $X_B$ be the random bundles obtained by agents $A$ and $B$ respectively. When $n$ is large, the envy-ratio for this instance is bounded below by
\[
\frac{\mathbb{E}[v(X_B)]}{\mathbb{E}[v(X_A)]} \approx \frac{\frac{1}{2}\cdot4 + \frac{1}{2}\cdot5}{\frac{1}{2}\cdot2 + \frac{1}{2}\cdot2} = \frac{9}{4}.
\]
The completes the proof of the factor-$\frac94$ lower bound.
\end{proof}

Next, we prove the upper bound. Consider an execution of round-robin with a permutation $\sigma\in \Omega$. For all $k \in \{1, \dots, r\}$, let $a_\sigma(k)$ (resp. $b_\sigma(k)$) be the item that $A$ (resp. $B$) selects at round $k$, and let $S_\sigma(k)= \{a_\sigma(1), \dots, a_\sigma(k-1)\}$ be the set of item in $A$'s bundle before round $k$.

We analyze randomized round-robin for the case of $r \geq 2$ rounds. Consider an execution of round-robin with a permutation $\sigma\in \Omega$. For all $k \in \{1, \dots, r\}$, let $a_\sigma(k)$ (resp. $b_\sigma(k)$) be the item that $A$ (resp. $B$) selects at round $k$, and let $S_\sigma(k)= \{a_\sigma(1), \dots, a_\sigma(k-1)\}$ be the set of item in $A$'s bundle before round $k$. The following is a more general version of Observation~\ref{obs:RR-late} from the additive case.

\begin{obs}\label{obs:RR-late-subm}
     For all $k \in \{1, \dots, r\}$, if $\sigma \in \Omega_A$ then $v_{S_\sigma(k)}(b_\sigma(i))\leq v_{S_\sigma(k)}(a_\sigma(i))$. For all $k \in \{1, \dots, r-1\}$, if $\sigma \in \Omega_B$ then $v_{S_\sigma(k)}(b_\sigma(i+1))\leq v_{S_\sigma(k)}(a_\sigma(i))$.
\end{obs}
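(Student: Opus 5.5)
The plan is to argue directly from the greedy rule used by randomized round-robin. The only facts needed are that items are never returned once allocated, and that at each of its turns agent $A$ selects an available item of maximum marginal value with respect to its current bundle. Throughout, I read the index $i$ in the statement as the round index $k$, so the claims are
\[
v_{S_\sigma(k)}(b_\sigma(k))\le v_{S_\sigma(k)}(a_\sigma(k)) \ \text{ for } \sigma\in\Omega_A,
\]
\[
v_{S_\sigma(k)}(b_\sigma(k+1))\le v_{S_\sigma(k)}(a_\sigma(k)) \ \text{ for } \sigma\in\Omega_B .
\]

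First I fix $\sigma$ and a round $k$, and look at the moment $t$ at which $A$ makes its $k$-th pick. At that moment $A$'s bundle is exactly $S_\sigma(k)=\{a_\sigma(1),\dots,a_\sigma(k-1)\}$. Let $R_t$ be the set of items still unallocated just before this pick. By the selection rule, and since the tie-breaking order is coherent with marginal values given $S_\sigma(k)$,
\[
a_\sigma(k)\in\arg\max_{j\in R_t} v_{S_\sigma(k)}(j),
\]
so $v_{S_\sigma(k)}(j)\le v_{S_\sigma(k)}(a_\sigma(k))$ for every $j\in R_t$.

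It remains to show that the relevant item of $B$ lies in $R_t$. Because randomized round-robin uses the same order $\sigma$ in every round, the global sequence of picks is round $1$ in order $\sigma$, then round $2$ in order $\sigma$, and so on.
\begin{itemize}
\item If $\sigma\in\Omega_A$, then $B$'s round-$k$ turn comes strictly after $A$'s round-$k$ turn. The item $b_\sigma(k)$ is still unallocated when $B$ takes it, and the set of unallocated items only shrinks over time, so $b_\sigma(k)\in R_t$.
\item If $\sigma\in\Omega_B$ and $k\le r-1$, then $B$'s round-$(k+1)$ turn comes after every pick of round $k$, in particular after $A$'s. The same monotonicity gives $b_\sigma(k+1)\in R_t$.
\end{itemize}
Applying the displayed maximality inequality with $j=b_\sigma(k)$, respectively $j=b_\sigma(k+1)$, gives both claims.

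There is no real obstacle here. The only point needing care is to evaluate marginals with respect to $A$'s bundle $S_\sigma(k)$ at the time of $A$'s pick, not at $B$'s pick time, because that is exactly the quantity $A$'s greedy choice maximizes. Note also that no submodularity is used: the observation holds for any monotone valuation under myopic marginal-value picking.
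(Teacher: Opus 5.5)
Your proof is correct and follows the reasoning the paper leaves implicit; it states this observation without proof, as a direct generalization of Observation~\ref{obs:RR-late}. The argument is that $A$'s $k$-th pick maximizes marginal value with respect to $S_\sigma(k)$ over the available items, and $b_\sigma(k)$ (for $\sigma\in\Omega_A$) or $b_\sigma(k+1)$ (for $\sigma\in\Omega_B$) is still available at that moment; your reading of the stray index $i$ as $k$ matches how the paper later invokes the observation.
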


We also reuse the following lemma from the additive case.

\RRearly*

We are ready to prove our upper bound for randomized round-robin with submodular valuations.

\begin{lm}\label{lem:rrrsubupper}
    When the agents have submodular valuations, the envy-ratio of randomized round-robin is at most $2+(1/\sqrt{2}) \approx 2.707$.
\end{lm}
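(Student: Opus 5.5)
We aim to show $\mathbb{E}[v(X_B)] \le (2+1/\sqrt{2})\,\mathbb{E}[v(X_A)]$. The plan combines the monotonicity step of Lemma~\ref{lm:subm} with the round-by-round marginal comparisons of Observation~\ref{obs:RR-late-subm}, and uses Lemma~\ref{lm:RR-early} to handle the one misaligned round when $B$ precedes $A$. Fix $\sigma$. Throughout, marginals are taken with respect to $A$'s bundle $S_\sigma(k)$, and items $B$ selects are compared against items $A$ selects in the aligned round.

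\textbf{Case $\sigma\in\Omega_A$.} We use monotonicity and submodularity exactly as in the proof of Lemma~\ref{lm:subm}:
\[
v(X_B)\le v(X_A)+\sum_{k} v_{S_\sigma(k)}\bigl(b_\sigma(k)\bigr).
\]
This holds because the marginal of $b_\sigma(k)$ on top of $X_A\cup\{b_\sigma(1),\dots,b_\sigma(k-1)\}$ is at most its marginal on $S_\sigma(k)$. By Observation~\ref{obs:RR-late-subm}, each term $v_{S_\sigma(k)}(b_\sigma(k))$ is at most $v_{S_\sigma(k)}(a_\sigma(k))$, the marginal value $A$ gained in round $k$. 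The latter marginals sum to $v(X_A)$, so here $v(X_B)\le 2v(X_A)$. To stay compatible with the next step, we keep round~$1$ separate: we bound round~$2$ onwards by $A$'s marginals from round $2$ onwards, and keep the round-$1$ term $v(b_\sigma(1))$ explicitly.

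\textbf{Case $\sigma\in\Omega_B$.} The same telescoping gives
\[
v(X_B)\le v(X_A)+v(b_\sigma(1))+v_{S_\sigma(1)}\bigl(b_\sigma(2)\bigr)+\sum_{k\ge 3} v_{S_\sigma(k)}\bigl(b_\sigma(k)\bigr).
\]
Here $S_\sigma(1)=\varnothing$, so the round-$2$ term is simply $v(b_\sigma(2))$. The first subtlety is that Observation~\ref{obs:RR-late-subm} bounds $b_\sigma(k+1)$ against $a_\sigma(k)$ relative to $S_\sigma(k)$, while the telescoping naturally produces marginals relative to $S_\sigma(k+1)$. Since $S_\sigma(k)\subseteq S_\sigma(k+1)$, submodularity gives $v_{S_\sigma(k+1)}(b_\sigma(k+1))\le v_{S_\sigma(k)}(b_\sigma(k+1))\le v_{S_\sigma(k)}(a_\sigma(k))$. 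So I would redo the telescoping with marginals of $b_\sigma(k+1)$ relative to $S_\sigma(k)$, which is valid because these sets are contained in the sets actually used. Rounds $k\ge 3$ are then bounded by $A$'s marginals in rounds $2,\dots,r-1$.

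\textbf{Averaging.} Average the two cases with weight $\tfrac12$ each. The leftover terms are
\[
\tfrac12\,\mathbb{E}[v(b(1))\mid\Omega_A]+\tfrac12\,\mathbb{E}[v(b(1))\mid\Omega_B]+\tfrac12\,\mathbb{E}[v(b(2))\mid\Omega_B],
\]
and Lemma~\ref{lm:RR-early} bounds them by $(1+1/\sqrt2)\,\mathbb{E}[v(a(1))]$. The remaining terms are at most $\mathbb{E}[v(X_A)]+\mathbb{E}[v(X_A)-v(a(1))]$. The total is therefore
\[
2\,\mathbb{E}[v(X_A)]+(1/\sqrt2-1)\,\mathbb{E}[v(a(1))]\le (2+1/\sqrt2)\,\mathbb{E}[v(X_A)].
\]

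The main obstacle is this bookkeeping. Lemma~\ref{lm:RR-early} is stated for single-item values $v(\cdot)$, while the other rounds use marginals. The round-$1$ and round-$2$ items of $B$ must therefore enter through marginals relative to $\varnothing$, which is an upper bound by submodularity. We also need $v(a(1))=v_{S(1)}(a(1))$, which holds since $S(1)=\varnothing$. I would order the telescoping so that $b(1)$, then $b(2)$ in the $\Omega_B$ case, are added first, directly after $X_A$. Their marginals are then bounded by singleton values, while the later $b$'s keep the marginals needed for Observation~\ref{obs:RR-late-subm}.
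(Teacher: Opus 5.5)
Your argument is correct and is essentially the paper's proof. The paper likewise peels off $b_\sigma(1)$ (and also $b_\sigma(2)$ when $\sigma\in\Omega_B$), bounds $B$'s remaining items via Lemma~\ref{lm:subm} against $A$'s marginals from round $2$ onward with base set $\{a_\sigma(1)\}$, and absorbs the three leftover terms with Lemma~\ref{lm:RR-early}; your single telescoping of $v(X_A\cup X_B)$ is just a repackaging of that bookkeeping. There is one arithmetic slip in your last display: the total is $2\,\mathbb{E}[v(X_A)]+\tfrac{1}{\sqrt2}\,\mathbb{E}[v(a(1))]$, not $2\,\mathbb{E}[v(X_A)]+(\tfrac{1}{\sqrt2}-1)\,\mathbb{E}[v(a(1))]$ (the latter would even give ratio $2$, which your argument does not establish), and the final inequality then uses $v(\{a(1)\})\le v(X_A)$ by monotonicity.
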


\begin{proof}

For all $\sigma \in \Omega$ and all $k \in {1,\dots,r}$, let $a_\sigma(k)$ (resp. $b_\sigma(k)$) denote the item selected by $A$ (resp. $B$) at round $k$, and let $S_\sigma(k)=\{a_\sigma(1),\dots,a_\sigma(k-1)\}$ be the set of items in $A$’s bundle before round $k$ when a round robin mechanism is run according to $\sigma$. Moreover let $S^A_\sigma$ (resp $S^B_\sigma$) be the bundle of $A$ (resp of $B$) when the round robin terminates. Let $\hat \sigma$ be the random permutation used by the randomized round robin mechanism. 

Using the first point of Observation~\ref{obs:RR-late-subm}, we get the following inequalities: for all $\sigma\in \Omega_A$ and for all $i\in \set{2,\dots,r}$,  $v_{S_\sigma(i)}(b_\sigma(i))\leq v_{S_\sigma(i)}(a_\sigma(i))$. Applying Lemma~\ref{lm:subm} with $\alpha = 1$, the set $\{a_\sigma(1)\}$, and sequences $\{b_\sigma(2), \dots, b_\sigma(r)\}$, $\{a_\sigma(2),\dots,a_\sigma(r)\}$, we obtain for all $\sigma\in \Omega_A$,
\[v(S_\sigma^B \setminus \{b_\sigma(1)\}) \leq v(S_\sigma^A) +  v_{S_\sigma(2)}(S_\sigma^A \setminus S_\sigma(2)).\]

Now taking expectation over $\Omega_A$, we obtain
\[\bb{E}[v(S_{\hat\sigma}^B \setminus \{b_{\hat\sigma}(1)\})| \hat \sigma \in \Omega_A] \leq \bb{E}[v(S_{\hat\sigma}^A)| \hat \sigma \in \Omega_A] + \bb{E}[v_{S_{\hat\sigma}(2)}(S_{\hat\sigma}^A \setminus S_{\hat\sigma}(2))| \hat \sigma \in \Omega_A].
\]

Using the second point of Observation~\ref{obs:RR-late-subm}, we get the following inequalities: for all $\sigma\in \Omega_B$ and for all $ i\in \set{2,\dots,r-1}$,  $v_{S_\sigma(i)}(b_\sigma(i+1))\leq v_{S_\sigma(i)}(a_\sigma(i))$. Applying the previous Lemma~\ref{lm:subm} with the set $\{a_\sigma(1)\}$, and sequences $\{b_\sigma(3), \dots, b_\sigma(r)\}$, $\{a_\sigma(2),\dots,a_\sigma(r-1)\}$, we obtain for all $\sigma\in \Omega_A$,
\[v(S_\sigma^B \setminus \{b_\sigma(1),b_\sigma(2)\}) \leq v(S_\sigma^A) + v_{S_\sigma(2)}(S_\sigma^A \setminus S_\sigma(2)).\]
Taking expectation over $\Omega_B$, we obtain
\[\bb{E}[v(S_{\hat\sigma}^B \setminus \{b_{\hat\sigma}(1), b_{\hat \sigma}(2)\})| \hat \sigma \in \Omega_B] \leq \bb{E}[v(S_{\hat\sigma}^A)| \hat \sigma \in \Omega_B] + \bb{E}[v_{S_{\hat\sigma}(2)}(S_{\hat\sigma}^A \setminus S_{\hat\sigma}(2))| \hat \sigma \in \Omega_B].
\]
Additionally, applying Lemma~\ref{lm:RR-early} we get:

$\begin{aligned}
\frac12\bb{E}[v(\{b_{\hat{\sigma}}(1)\}) \mid \hat{\sigma}\in\Omega_A]
+ \frac12\bb{E}[v(\{b_{\hat{\sigma}}(1)\}) \mid \hat{\sigma}\in\Omega_B] 
+ \frac12\bb{E}[v(\{b_{\hat{\sigma}}(2)\}) \mid \hat{\sigma}\in\Omega_B] \\
\le \left(1+\frac1{\sqrt2}\right)
\bb{E}[v(\{a_{\hat{\sigma}}(1)\}) \mid \hat{\sigma}\in\Omega].
\end{aligned}$

Using the linearity of expectation and submodularity of $v$, we have
\begin{align*}
    \bb{E}[v(S^B_{\hat \sigma})|\hat{\sigma}\in \Omega] & = \frac12 \bb{E}[v(\{b_{\hat\sigma}(1)\})| \hat \sigma \in \Omega_A] +\frac12\bb{E}[v_{\{b_{\hat\sigma}(1)\}}(S^B_{\hat\sigma} \setminus \{b_{\hat\sigma}(1)\} )| \hat \sigma \in \Omega_A]\\
    &\qquad +\frac12 \bb{E}[v(\{b_{\hat\sigma}(1)\})| \hat \sigma \in \Omega_B] + \frac12 \bb{E}[v_{\{b_{\hat\sigma}(1)\}}(\{b_{\hat\sigma}(2)\})| \hat \sigma \in \Omega_B] \\
     & \qquad + \frac12\bb{E}[v_{\{b_{\hat\sigma}(1), b_{\hat\sigma}(2)\}}(S^B_{\hat\sigma} \setminus \{b_{\hat\sigma}(1), b_{\hat\sigma}(2)\} )| \hat \sigma \in \Omega_B]\\
     &\leq \frac12 \bb{E}[v(\{b_{\hat\sigma}(1)\})| \hat \sigma \in \Omega_A] +\frac12\bb{E}[v(S^B_{\hat\sigma} \setminus \{b_{\hat\sigma}(1)\} )| \hat \sigma \in \Omega_A]\\
    &\qquad +\frac12 \bb{E}[v(\{b_{\hat\sigma}(1)\})| \hat \sigma \in \Omega_B] + \frac12 \bb{E}[v(\{b_{\hat\sigma}(2)\})| \hat \sigma \in \Omega_B] \\
     & \qquad + \frac12\bb{E}[v(S^B_{\hat\sigma} \setminus \{b_{\hat\sigma}(1), b_{\hat\sigma}(2)\} )| \hat \sigma \in \Omega_B]
\end{align*}
    Combining everything and rearranging the terms, we obtain:
\begin{align*}
    \bb{E}[v(S^B_{\hat \sigma})|\hat{\sigma}\in \Omega] 
    &\leq \left(\frac12 \bb{E}[v(\{b_{\hat\sigma}(1)\})| \hat \sigma \in \Omega_A] +\frac12 \bb{E}[v(\{b_{\hat\sigma}(1)\})| \hat \sigma \in \Omega_B] + \frac12 \bb{E}[v(\{b_{\hat\sigma}(2)\})| \hat \sigma \in \Omega_B]\right)\\
    &\qquad + \frac12\bb{E}[v(S^B_{\hat\sigma} \setminus \{b_{\hat\sigma}(1)\} )| \hat \sigma \in \Omega_A]   + \frac12\bb{E}[v(S^B_{\hat\sigma} \setminus \{b_{\hat\sigma}(1), b_{\hat\sigma}(2)\} )| \hat \sigma \in \Omega_B]\\
    &\leq  \left(1+\frac1{\sqrt2}\right)
    \bb{E}[v(\{a_{\hat{\sigma}}(1)\}) \mid \hat{\sigma}\in\Omega]\\
    &\qquad + \frac{1}{2}\bb{E}[v(S_{\hat\sigma}^A)| \hat \sigma \in \Omega_A] + \frac{1}{2}\bb{E}[v_{S_{\hat\sigma}(2)}(S_{\hat\sigma}^A \setminus S_{\hat\sigma}(2))| \hat \sigma \in \Omega_A]\\
    &\qquad +\frac{1}{2}\bb{E}[v(S_{\hat\sigma}^A)| \hat \sigma \in \Omega_B] + \frac{1}{2}\bb{E}[v_{S_{\hat\sigma}(2)}(S_{\hat\sigma}^A \setminus S_{\hat\sigma}(2))| \hat \sigma \in \Omega_B]\\
    &\leq  \bb{E}[v(S_{\hat\sigma}^A)| \hat \sigma \in \Omega]\\
    &\qquad +\frac{1}{2}\left(1+\frac1{\sqrt2}\right)
    \bb{E}[v(\{a_{\hat{\sigma}}(1)\}) \mid \hat{\sigma}\in\Omega_A]+\frac{1}{2}\bb{E}[v_{S_{\hat\sigma}(2)}(S_{\hat\sigma}^A \setminus S_{\hat\sigma}(2))| \hat \sigma \in \Omega_A]\\
    &\qquad +\frac{1}{2}\left(1+\frac1{\sqrt2}\right)
    \bb{E}[v(\{a_{\hat{\sigma}}(1)\}) \mid \hat{\sigma}\in\Omega_B]+ \frac{1}{2}\bb{E}[v_{S_{\hat\sigma}(2)}(S_{\hat\sigma}^A \setminus S_{\hat\sigma}(2))| \hat \sigma \in \Omega_B]\\
    &\leq  \bb{E}[v(S_{\hat\sigma}^A)| \hat \sigma \in \Omega] +\frac{1}{2}\left(1+\frac1{\sqrt2}\right)
    \bb{E}[v(S_{\hat\sigma}^A)| \hat \sigma \in \Omega_A] +\frac{1}{2}\left(1+\frac1{\sqrt2}\right)
    \bb{E}[v(S_{\hat\sigma}^A)| \hat \sigma \in \Omega_B]\\
    &= \left(2+\frac{1}{\sqrt{2}}\right)\bb{E}[v(S_{\hat\sigma}^A)| \hat \sigma \in \Omega]
\end{align*}
\end{proof}

Together, Lemmas~\ref{lem:rrrsublower} and~\ref{lem:rrrsubupper} prove Theorem~\ref{thm:rrrsub}.

Finally, we analyze the two more general valuation classes. We show a sharp boundary between submodular and XOS valuations: as soon as the agents' valuations are allowed to belong to the wider class, the envy-ratio becomes unbounded in $|N|$. Nevertheless, we present a tight analysis of this ratio for both XOS and subadditive valuations, showing that the envy-ratio of both mechanisms is exactly $r$, the number of rounds in the execution of each mechanism. 

\thmgeneral*

We split the proof into two lemmas: one establishes the upper bound for subadditive valuations, and the other establishes the lower bound for XOS valuations.

\begin{lm}
    For subadditive valuations, the envy-ratio of both randomized round-robin and iterated RSD is at most $r$.
\end{lm}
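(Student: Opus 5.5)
By subadditivity, $A$'s value for $B$'s final bundle $S^B=\{b(1),\dots,b(r)\}$ is at most the sum of its singleton values, $v(S^B)\le\sum_{k=1}^r v(\{b(k)\})$. The plan is therefore to bound each singleton $v(\{b(k)\})$ by $A$'s first pick, and then use monotonicity to compare $A$'s first pick with $A$'s whole bundle. The key comparison is with $a(1)$, the item $A$ selects in round~$1$. Since $A$'s bundle is empty when it selects $a(1)$, the marginal values at that moment coincide with singleton values. So $a(1)$ maximizes $v(\{j\})$ over the items remaining at $A$'s first turn. By monotonicity, $v(\{a(1)\})\le v(S^A)$.

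For the pairing step I would argue permutation by permutation, in the spirit of Proposition~\ref{prop:simpleupper}. Suppose $A$ precedes $B$ in the first-round order. Every item $B$ ever receives, including $b(1)$, is still available when $A$ makes its first pick. Hence $v(\{b(k)\})\le v(\{a(1)\})$ for every $k$, and so $v(S^B)\le r\,v(\{a(1)\})\le r\,v(S^A)$ pointwise.

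Now suppose $B$ precedes $A$ in round~$1$. Then $b(2),\dots,b(r)$ are still available at $A$'s first pick, so each is bounded by $v(\{a(1)\})$. The item $b(1)$, however, may be better than anything left for $A$. To handle it, I pair the permutation $\sigma_B$ with the permutation $\sigma_A$ obtained by swapping $A$ and $B$ in the first-round order. For iterated-RSD, I keep all later rounds' permutations fixed. The prefix before $B$ in $\sigma_B$ equals the prefix before $A$ in $\sigma_A$, so the set of items available to $B$ in $\sigma_B$ equals the set available to $A$ in $\sigma_A$. Since $A$ in $\sigma_A$ picks the singleton-maximizer of that set, $v(\{b_{\sigma_B}(1)\})\le v(\{a_{\sigma_A}(1)\})\le v(S^A_{\sigma_A})$.

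Summing over each swapped pair gives
\[
v(S^B_{\sigma_A})+v(S^B_{\sigma_B})\le r\,v(S^A_{\sigma_A})+(r-1)\,v(S^A_{\sigma_B})+v(S^A_{\sigma_A}),
\]
which has $r+1$ copies of the first term. That only yields a ratio of $r+1$ at worst, so it needs tightening. In $\sigma_A$, $B$'s first-round item $b(1)$ is already bounded by $v(\{a(1)\})$, so the $A$-first case contributes $r\,v(\{a(1)\})$ and leaves no slack to absorb the extra term. The fix I would try is to avoid pointwise bounds and instead compare the expected singleton value of $b(1)$ across the two halves directly. The expectation $\mathbb{E}[v(\{b(1)\})]$, taken over the whole of $\Omega$, is bounded by $\mathbb{E}[v(\{a(1)\})]$ times the single-round envy-ratio from Theorem~\ref{thm:main}, where $\sqrt2$ is not good enough. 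Alternatively I would use the bound $\mathbb{E}[v(\{b(1)\})\mid\Omega_B]\le\mathbb{E}[v(\{a(1)\})\mid\Omega_A]$ together with $\mathbb{E}[v(\{b(1)\})\mid\Omega_A]\le\mathbb{E}[v(\{a(1)\})\mid\Omega_A]$ and $\mathbb{E}[v(\{b(k)\})]\le\mathbb{E}[v(\{a(1)\})]$ for $k\ge2$.

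With these bounds in hand, I would use $A$'s own later rounds to absorb the surplus. Under subadditivity we only know $v(S^A)\ge v(\{a(1)\})$, so I would also use that $B$'s items $b(k)$ for $k\ge2$ in the $A$-first case are bounded by $A$'s marginal picks: $v(\{a(1),\dots,a(k)\})\ge v(\{a(1),\dots,a(k-1)\}\cup\{b(k)\})$. The main obstacle is exactly this accounting of the one problematic term $b(1)$ when $B$ is first. I expect the tight argument to bound $v(S^B)$ via $v(\{b(1)\})+v(\{b(2),\dots,b(r)\})$, using monotonicity and the round-$1$ pairing for the first piece and $(r-1)$ copies of $A$'s first pick in the \emph{same} permutation for the second, and then to verify that the averaged bound closes at $r$.
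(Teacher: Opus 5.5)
There is a genuine gap. The inequalities you actually establish cannot reach $r$, and the ``tight argument'' you anticipate at the end is the same accounting you already found to give $r+1$. Charging $b_{\sigma_B}(1)$ to $a_{\sigma_A}(1)$ via the swap, and every other item of $B$ to $A$'s first pick in the same permutation, gives $(r+1)\,v(\{a_{\sigma_A}(1)\})+(r-1)\,v(\{a_{\sigma_B}(1)\})$ per swapped pair, and no averaging of this brings it down to $r$. The expectation-level bounds you list give at best $r-1+\sqrt2$. Pairing $b_{\sigma_A}(1)$ with $a_{\sigma_B}(1)$ instead is simply false: this is exactly how RSD fails to be envy-free, as in orderings \texttt{ACB} and \texttt{BCA} of Table~\ref{tab:lb-54}. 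The structural problem is that you treat round~$1$ in isolation and charge every later round a full copy of $A$'s first pick. Round~$1$ alone can have ratio $\sqrt2>1$. Indeed, for $r=1$ the mechanism is plain RSD, so by Theorem~\ref{thm:lower} the bound $r$ is false there and the lemma implicitly needs $r\ge2$. So the only way to reach $r$ is to show that whenever $B$ does unusually well in round~$1$, the pool left for rounds $2,\dots,r$ is correspondingly depleted. Nothing in your plan couples these two quantities.

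The missing idea, which is how the paper closes the argument, is to charge $B$'s rounds $2,\dots,r$ not to $A$'s first pick but to $c_\sigma$, the largest singleton value (for $A$) of any item still unallocated after round~$1$. Write $a_\sigma,b_\sigma$ for $A$'s singleton values of the round-$1$ items of $A$ and $B$. Then subadditivity and monotonicity give $v(S^B_\sigma)\le b_\sigma+(r-1)c_\sigma$ and $v(S^A_\sigma)\ge a_\sigma$ for both mechanisms. Now take a swapped pair $\sigma_A(i,j),\sigma_B(i,j)$, which share the order of $\mathcal{C}$, with $z_\ell,x_j,y_j$ as in Section~\ref{sec:upper-single}. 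Lemma~\ref{lm:airplane} yields a dichotomy:
in $\sigma_B(i,j)$, $A$ picks after $j+1$ items including $z_1,\dots,z_j$ are gone, so $a_{\sigma_B(i,j)}\ge y_j$;
in $\sigma_A(i,j)$, $B$ picks after $z_1,\dots,z_j$ are gone, so either $B$ does not take $x_j$ and $b_{\sigma_A(i,j)}\le y_j$, or it takes $x_j$ and then $c_{\sigma_A(i,j)}\le y_j$.
Hence $b_{\sigma_A}+c_{\sigma_A}\le a_{\sigma_A}+a_{\sigma_B}$. Together with $b_{\sigma_B},c_{\sigma_A}\le a_{\sigma_A}$ and $c_{\sigma_B}\le a_{\sigma_B}$, this gives $b_{\sigma_A}+b_{\sigma_B}+(r-1)(c_{\sigma_A}+c_{\sigma_B})\le r(a_{\sigma_A}+a_{\sigma_B})$. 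The step $b_{\sigma_B}+(r-2)c_{\sigma_A}\le(r-1)a_{\sigma_A}$ is exactly where $r\ge2$ is used. Your swap pairing is the right one; what is missing is this dichotomy and the switch from $A$'s first pick to $c_\sigma$ for the later rounds.
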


\begin{proof}
Fix an instance with agents $A$ and $B$, and a set $\mathcal{C}$ consisting of the $n$ remaining agents. Consider either the randomized round-robin mechanism or the iterated-RSD mechanism. We assume that $A$ has a positive value for some subset of items, which by subadditivity implies that $A$ has a positive value for at least one singleton.

Let $X_A$ and $X_B$ be the random bundles allocated to $A$ and $B$ respectively by the mechanism.
We reuse the notation $\Omega$, $\Omega'$, and $\Omega(\sigma')$ for $\sigma' \in \Omega'$. Let $\hat \sigma$ be the random permutation used in the first round of the mechanism.

Because $A$ has a positive value for at least one singleton, for each $\sigma' \in \Omega'$, there exists $\sigma \in \Omega(\sigma')$ such that $\mathbb{E}[v(X_A) | \hat \sigma = \sigma]>0$, and therefore $\mathbb{E}[v(X_A) | \hat \sigma \in \Omega(\sigma')]>0$. We can imitate the proof of Lemma~\ref{lm:order} and we get:

\[\frac{\mathbb{E}[v(X_B)]}{\mathbb{E}[v(X_A)]} \leq \max_{\sigma' \in \Omega'} \frac{\mathbb{E}[v(X_B)|\hat \sigma \in \Omega(\sigma')]}{\mathbb{E}[v(X_A)|\hat \sigma \in \Omega(\sigma')]}.\]

\smallskip We fix a permutation $\sigma' \in \Omega'$ of the agents in $\mathcal{C}$. Without loss of generality we label the agents of $\mathcal{C}$ such that $\sigma' = (C_1, \dots,C_n)$. Let $(i,j)$ be an ordered pair with $0\leq i \leq j$, and consider the permutations $\sigma_A(i,j)$, and $\sigma_B(i,j)$ as in the proof of Lemma~\ref{lm:c'upper_bound}. 

We analyze an execution in which the first round of serial dictatorship is run according to a permutation $\sigma \in \Omega$. Let $a_\sigma$ and $b_\sigma$ denote the values, according to $A$’s valuation, of the singleton sets containing the items allocated to $A$ and $B$, respectively, in this round. Let $c_\sigma$ denote the maximum value, according to $A$’s valuation, of any singleton set corresponding to an item that remains unallocated at the end of the first round. 

$A$ gets a singleton of value $a_\sigma$ in the first round, therefore the value of $A$ for its bundle is at least $a_\sigma$ by the monotonicity of its valuation function. $B$ gets in the first round a singleton of value $b_\sigma$ for $A$, and at each later round, the valuation of $B$'s bundle according to $A$ increases by at most $c_{\sigma}$ by the subadditivity of $A$'s valuation. Therefore $A$'s value for $B$'s bundle when the mechanism terminates is at most $b_\sigma + (r-1) \cdot c_\sigma$.

Note that the partition $\Omega(\sigma') = \bigsqcup_{0 \leq i \leq j \leq n} \{\sigma_A{(i,j), \sigma_B(i,j)\}}$ implies that on one side, 

\[\mathbb{E}[v(X_A)|\hat \sigma \in \Omega(\sigma')] \geq  \frac{1}{2\cdot {{n+2} \choose 2}}\sum_{0 \leq i \leq j \leq n} (a_{\sigma_A(i,j)} + a_{\sigma_B(i,j)}).\]

On the other side, \[\mathbb{E}[v(X_B)|\hat \sigma \in \Omega(\sigma')] \geq  \frac{1}{2\cdot {{n+2} \choose 2}}\sum_{0 \leq i \leq j \leq n} (b_{\sigma_A(i,j)} + b_{\sigma_B(i,j)} + (r-1)\cdot (c_{\sigma_A(i,j)}+c_{\sigma_B(i,j)})).\]

The following two claims provide the inequalities we need to conclude.

\begin{cl}\label{cl:subadd1}
    For all $(i,j)$ with $0 \leq i \leq j \leq n$, $a_{\sigma_A(i,j)} \geq \max\{b_{\sigma_A(i,j)},b_{\sigma_B(i,j)},c_{\sigma_A(i,j)},c_{\sigma_B(i,j)}\}$.
\end{cl}
\begin{proof}
$\sigma_A(i,j)$ correspond to the permutation where $A$ selects its favorite singleton after the $i$ first agents of $\mathcal{C}$. At that point, the item chosen by $B$ in $\sigma_B(i,j)$ as well as the item chosen by $B$ in $\sigma_A(i,j)$ are still available. Similarly, the items that remain unallocated at the end of the first round in both $\sigma_B(i,j)$ and $\sigma_A(i,j)$ are still available, which prove the claim.
\end{proof}

\begin{cl}\label{cl:subadd2}
    For all $(i,j)$ with $0 \leq i \leq j \leq n$, $a_{\sigma_B(i,j)} \geq \min\{b_{\sigma_A(i,j)},c_{\sigma_A(i,j)}\}$.
\end{cl}
\begin{proof}   
The argument closely follows the proof of Lemma~\ref{lm:z}. Suppose the agents in $\mathcal{C}$ select items via a serial dictatorship according to $\sigma'$ (in the absence of $A$ and $B$), and let $z_\ell$ denote the item chosen by agent $C_\ell$. For each $\ell\in\{0,\ldots,n\}$, let $x_\ell$ and $y_\ell$ be, respectively, the most-preferred and second most-preferred singletons of agent $A$ in the set $M \setminus \{z_1, \dots, z_\ell\}$.

In $\sigma_B(i,j)$, agent $A$ picks after $j+1$ items have been removed. By Lemma~\ref{lm:airplane}, all items in ${z_1,\dots,z_j}$ have been removed, along with one additional item. Therefore, $A$ receives either $x_j$ or $y_j$, and hence $a_{\sigma_B(i,j)} \ge v_A(\{y_j\})$.

Similarly, in $\sigma_A(i,j)$, agent $B$ picks after $j+1$ items have been removed. By Lemma~\ref{lm:airplane}, all items in $\{z_1,\dots,z_j\}$ have been removed, and the only singleton better than $y_j$ that may still be available is $x_j$. If $B$ does not pick $x_j$, then $a_{\sigma_B(i,j)} \ge b_{\sigma_A(i,j)}$. Otherwise, if $B$ picks $x_j$, then at the end of the round the best remaining singleton is worth at most $y_j$, and thus $a_{\sigma_B(i,j)} \ge c_{\sigma_A(i,j)}$.
\end{proof}

We now combine the two claims. For each $(i,j)$, Claim~\ref{cl:subadd1} implies
\[
b_{\sigma_A(i,j)} \le a_{\sigma_A(i,j)},\qquad c_{\sigma_A(i,j)} \le a_{\sigma_A(i,j)},\qquad 
b_{\sigma_B(i,j)} \le a_{\sigma_A(i,j)},\qquad c_{\sigma_B(i,j)} \le a_{\sigma_A(i,j)}.
\]
Moreover, Claim~\ref{cl:subadd2} implies $b_{\sigma_A(i,j)} \le a_{\sigma_B(i,j)}$ or $c_{\sigma_A(i,j)} \le a_{\sigma_B(i,j)}$, and in either case
\[
b_{\sigma_A(i,j)} + c_{\sigma_A(i,j)} \le a_{\sigma_A(i,j)} + a_{\sigma_B(i,j)}.
\]
Additionally, it is easy to see that $c_{\sigma_B(i,j)} \le a_{\sigma_B(i,j)}$, because the items that remain unallocated at the end of the first round in $\sigma_B(i,j)$ are available when $A$ selects its favorite item. 
Hence, for every $(i,j)$, we have
\begin{alignat*}{4}
    &b_{\sigma_A(i,j)} + b_{\sigma_B(i,j)} &&+ (r-1)\bigl(c_{\sigma_A(i,j)} + c_{\sigma_B(i,j)}\bigr)&&&\\
    =\quad&\bigl(b_{\sigma_A(i,j)} + c_{\sigma_A(i,j)}\bigr) &&+ b_{\sigma_B(i,j)} + (r-2)c_{\sigma_A(i,j)} &&+ (r-1)c_{\sigma_B(i,j)}&\\
    \le\quad &\bigl(a_{\sigma_A(i,j)} + a_{\sigma_B(i,j)}\bigr) &&+ (r-1)a_{\sigma_A(i,j)} &&+ (r-1)a_{\sigma_B(i,j)}&\\
    =\quad&r\bigl(a_{\sigma_A(i,j)} + a_{\sigma_B(i,j)}\bigr).
\end{alignat*}
Summing over all pairs $(i,j)$ and using the two displayed bounds on the conditional expectations yields
\[
\frac{\mathbb{E}[v(X_B)\mid \hat\sigma \in \Omega(\sigma')]}{\mathbb{E}[v(X_A)\mid \hat\sigma \in \Omega(\sigma')]}
\le r.
\]
Since this holds for every $\sigma' \in \Omega'$, we obtain
\[
\frac{\mathbb{E}[v(X_B)]}{\mathbb{E}[v(X_A)]}
\le \max_{\sigma' \in \Omega'} 
\frac{\mathbb{E}[v(X_B)\mid \hat\sigma \in \Omega(\sigma')]}{\mathbb{E}[v(X_A)\mid \hat\sigma \in \Omega(\sigma')]}
\le r.
\]
\end{proof}

\begin{lm}
    For XOS valuations, the envy-ratio of both randomized round-robin and iterated-RSD is at least $r$.
\end{lm}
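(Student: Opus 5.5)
The plan is to exhibit, for every number of rounds $r$, a small XOS instance in which agent~$A$ always ends with value about $1$ while $B$'s bundle is worth about $r$ to $A$. The construction must be insensitive to the random orders, so that it works for both randomized round-robin and iterated-RSD simultaneously. The mechanism to exploit is the one ruled out by submodularity: after $A$'s first pick, every remaining item has zero marginal value to $A$. Meanwhile, a set of items that $A$ values additively, up to total $r$, is left for $B$.

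\textbf{Construction.} Take only the two agents $A$ and $B$ (so $n=0$, $|N|=2$) and $2r$ items. These are an item $x$, items $g_1,\dots,g_r$, and $r-1$ junk items $u_1,\dots,u_{r-1}$. Let $G=\{g_1,\dots,g_r\}$. Agent $A$ has the XOS valuation
\[
v(S)=\max\bigl\{(1+\varepsilon)\,\mathbbm{1}[x\in S],\ |S\cap G|\bigr\},
\]
which is the maximum of two additive clauses: one puts weight $1+\varepsilon$ on $x$ only, the other weight $1$ on each $g_i$. Agent $B$ is additive with value $1$ on each $g_i$ and $0$ elsewhere. Among items of equal marginal value, $A$'s fixed tie-breaking rule ranks the junk items above $G$.

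\textbf{Execution.} In round~1, the item $x$ is available to $A$ whatever the order. If $B$ picks first it takes some $g_i$, never $x$. For $A$, $x$ has singleton value $1+\varepsilon$, which is strictly larger than any $g_i$, so $A$ takes $x$. From then on, for any $T$ with $|T\cap G|\le 1$ we have $v(\{x\}\cup T)=1+\varepsilon$. Hence every remaining item, whether junk or a single $g_i$, has marginal value $0$ to $A$, and by tie-breaking $A$ takes a junk item in every later round. There are exactly $r-1$ junk items, so $A$ never touches $G$. Since $B$ strictly prefers the items of $G$ to everything else and $A$ never takes them, $B$ collects one item of $G$ in each of the $r$ rounds, ending with all of $G$.

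\textbf{Conclusion.} In every realization of the randomness, $v(X_A)=1+\varepsilon$ and $v(X_B)=|G|=r$. The argument never used the order, so it covers both the fixed permutation of randomized round-robin and the fresh permutations of iterated-RSD. The envy-ratio is therefore $r/(1+\varepsilon)$, and letting $\varepsilon\to0$ gives the lower bound $r$.

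The only delicate step is checking that $A$'s tie-breaking among zero-marginal items is a legitimate instance of the paper's model. If one prefers strict marginals, the junk items can instead be given a tiny positive value through a third clause $\delta\sum_i\mathbbm{1}[u_i\in S]$. With $\delta<\varepsilon/(r-1)$ this clause never exceeds the first one, so $v(X_A)$ stays $1+\varepsilon$, and junk strictly beats the $g_i$ at every later step of $A$. One should also check that $v$ is normalized and monotone, which holds because it is a maximum of nonnegative additive functions.
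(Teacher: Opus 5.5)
Your construction is correct and is essentially the paper's. Your $x$, $G$, and junk items play the roles of $p_1$, $Q$, and $p_2,\dots,p_r$. The only difference is that you rely on tie-breaking among zero-marginal items, which the paper's model permits, whereas the paper puts weight $\varepsilon$ on $p_2,\dots,p_r$ inside the $p_1$-clause so that all of $A$'s choices are strict.

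Your optional strict-marginal fix is wrong as stated. A separate clause $\delta\sum_i\mathbbm{1}[u_i\in S]$ that never exceeds the $x$-clause leaves every junk item, like every $g_i$, with marginal value $0$ once $x$ is held, so the tie remains. The $\delta$-weights must go inside the $x$-clause, i.e.\ $(1+\varepsilon)\mathbbm{1}[x\in S]+\delta\sum_i\mathbbm{1}[u_i\in S]$, which is exactly what the paper does.
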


\begin{proof}
    Recall that a valuation function $v$ is XOS if there exists a nonempty collection $v_1,\ldots, v_t$ of additive set functions such that for any $S\subseteq M$, $v(S) = \max_{i\in[t]}v_i(S)$.

    The instance that gives the lower bound is simple. It consists of two agents $A$ and $B$, and two sets of items $P = \{p_1,\ldots,p_r\}$ and $Q=\{q_1,\ldots,q_r\}$. It is easy to see that the mechanisms terminate in $r$ rounds. Agent~$A$'s XOS valuation function is the maximum of the additive functions $a_1$ and $a_2$. The function $a_1$ assigns a value of $1$ to $p_1$, a value of $\varepsilon$ to every other item in $P$, and a value of $0$ to every item in $Q$. The function $a_2$ assigns a value of $0$ to every item in $P$ and a value of $1-\varepsilon$ to every item in $Q$. Agent~$B$ simply has an additive valuation that assigns a value of $0$ to every item in $P$ and a value of $1$ to every item in $Q$.

    For both mechanisms, in any permutation, $A$ selects $p_1$ in the first round. In subsequent rounds, the marginal value of every item in $Q$ remains equal to 0, and $A$ selects the remaining items in $P$. On the other hand, $B$ only selects items from $Q$. When either mechanism terminates, $A$ is allocated the set $P$ (of value $1+(r-1)\varepsilon$) and $B$ is allocated the set $Q$ (of value $r(1-\varepsilon)$ for $A$). Consequently, with $\varepsilon\rightarrow0$ the envy-ratio approaches $r$.
\end{proof}

\section{Conclusion}\label{sec:discussion}
In this work, we settle a fundamental open question about the fairness guarantees of random serial dictatorship. We show that this mechanism is exactly $\sqrt{2}$-approximately envy-free in expectation for the house allocation problem. We also analyze the two natural extensions of this mechanism to more general settings, for which we establish tight or nearly-tight bounds on the envy-ratio for a variety of important valuation classes. These results constitute the first quantitative guarantees on the fairness of random serial dictatorship and its natural generalizations. Some of our theorems leave open a small constant-factor gap between the upper and lower bounds on the envy-ratios; a natural open problem is to determine the exact envy-ratio of each mechanism for the corresponding valuation classes for these theorems. Additionally, our goal in this work was to analyze the intrinsic fairness properties of these mechanisms by considering non-strategic agents that always select an item with the highest marginal value. While this model is well-justified for RSD, which is strategyproof for the house allocation problem, it abstracts away from strategic considerations that arise for randomized round-robin and iterated-RSD, which are both known to be non-strategyproof in multi-round settings. Understanding the extent to which similar guarantees can be obtained under strategic behavior remains an intriguing open problem.

\bibliographystyle{ACM-Reference-Format}
\bibliography{ref}

\end{document}